\newcounter{showLineNumbersLeft} 
\newcounter{showIndexEntries} 
\newcounter{showLabels} 
\DeclareFontShape{OT1}{cmtt}{bx}{n} 
{<5><6><7><8>cmbtt8% 
 <9>cmbtt9% 
 <10><10.95>cmbtt10% 
 <12><14.4><17.28><20.74><24.88>cmbtt10% 
} 
\colorlet{DMnormalbackcolor}{gray!25} 
\colorlet{DMlightbackcolor}{gray!10} 
\colorlet{DMmediumbackcolor}{gray!20} 
\colorlet{DMdarkbackcolor}{gray!30} 
\colorlet{DMmediumforecolor}{black!57} 
\colorlet{DMlightforecolor}{black!46} 
\newlength{\ksize} % Knotengroesse 
\newcommand{\sset}[2]{\left\{\, \mathinner{#1}\vphantom{#2}\; \left|\; \vphantom{#1}\mathinner{#2} \right.\,\right\}}
\newcommand\slex{\mathrel{\leq_\mathrm{slex}}}  
\newcommand\lds{,\ldots ,}  
\newcommand{\sse}{\subseteq} 
\newcommand{\es}{\emptyset} 
\newcommand{\sm}{\setminus} 
\newcommand{\os}[1]{\{#1\}} 
\newcommand{\wh}[1]{\widehat{#1}}
\newcommand\sat{\mathop\mathrm{Sat}}
\newcommand{\btorp}{$b$-torsion property\xspace} 
\newcommand{\btptorp}{$(b,t,p)$-torsion property\xspace} 
\newcommand{\cfree}{context-free\xspace} 
\newcommand{\Cfree}{Context-free\xspace}
\renewcommand{\hom}{homo\-mor\-phism\xspace} 
\newcommand{\subst}{substitution\xspace} 
\newcommand{\morph}{mor\-phism\xspace}
\newcommand\WP{\mathop\mathrm{WP}}
\newcommand{\G}{\mathbb{G}} 
\newcommand{\N}{\mathbb{N}} 
\newcommand{\Z}{\mathbb{Z}} 
\newcommand{\V}{\mathbb{V}} 
\newcommand{\R}{\mathbb{R}}
\newcommand{\E}{\mathbb{E}}
\newcommand{\NP}{\mathbf{NP}}
\newcommand{\Oh}{\mathcal{O}} 
\newcommand{\cB}{\mathcal{B}} 
\newcommand{\cC}{\mathcal{C}} 
\newcommand{\cF}{\mathcal{F}} 
\newcommand{\cG}{\mathcal{G}}
\newcommand{\Sig}{\Sigma} 
\newcommand{\Gam}{\Gamma} 
\newcommand{\OO}{\Omega}
\newcommand{\eps}{\varepsilon} 
\newcommand{\alp}{\alpha} 
\newcommand{\bet}{\beta} 
\newcommand{\gam}{\gamma}
\newcommand{\ov}[1]{\overleftarrow{#1}} 
\newcommand{\oi}[1]{{#1}^{-1}}
\newcommand{\id}{\mathrm{id}}
\newcommand{\abs}[1]{\left|\mathinner{#1}\right|} 
\newcommand{\absbin}[1]{|\mathinner{#1}|_{\text{bin}}}
\newcommand{\FO}{\mathrm{FO}}
\newcommand\rf{\mathrm{rf}} %reduced form 
\def\math#1{\ifmmode #1\else \mbox{$#1$} \fi}
\newcommand{\IFF}{if and only if\xspace} 
\newcommand{\set}[2]{\left\{#1\, \middle| \,#2 \right\}} 
\newcommand*{\TDFTA}{\@ifnextchar{.}{$\downarrow$FTA}{$\downarrow$FTA\@\xspace}} 
\newcommand*{\BUFTA}{\@ifnextchar{.}{$\uparrow$FTA}{$\uparrow$FTA\@\xspace}} 
\newcommand*{\TDDFTA}{\@ifnextchar{.}{$\downarrow$DFTA}{$\downarrow$DFTA\@\xspace}} 
\newcommand*{\BUDFTA}{\@ifnextchar{.}{$\uparrow$DFTA}{$\uparrow$DFTA\@\xspace}} 
\newcommand*{\FTA}{\@ifnextchar{.}{FTA}{FTA\@\xspace}} 
\newcommand*{\eg}{\@ifnextchar{.}{e.\,g}{e.\,g.\@\xspace}} 
\newcommand*{\ie}{\@ifnextchar{.}{i.\,e}{i.\,e.\@\xspace}} 
\newcommand*{\wrt}{\@ifnextchar{.}{w.r.t}{w.r.t.\@\xspace}} 
\newcommand*{\OBdA}{\@ifnextchar{.}{O.\,B.\,d.\,A}{O.\,B.\,d.\,A.\@\xspace}} 
\newcommand*{\oBdA}{\@ifnextchar{.}{o.\,B.\,d.\,A}{o.\,B.\,d.\,A.\@\xspace}} 
\newcommand*{\usw}{\@ifnextchar{.}{usw}{usw.\@\xspace}} 
\renewcommand*{\dh}{\@ifnextchar{.}{d.\,h}{d.\,h.\@\xspace}} 
\newcommand*{\zB}{\@ifnextchar{.}{z.\,B}{z.\,B.\@\xspace}} 
\newcommand*{\idR}{\@ifnextchar{.}{i.\,d.\,R}{i.\,d.\,R.\@\xspace}} 
\newcommand*{\bzw}{\@ifnextchar{.}{bzw}{bzw.\@\xspace}} 
\newcommand*{\s}{\@ifnextchar{.}{s}{s.\@\xspace}} 
\newcommand*{\su}{\@ifnextchar{.}{s.\,u}{s.\,u.\@\xspace}} 
\newcommand*{\iZ}{\@ifnextchar{.}{i.\,Z}{i.\,Z.\@\xspace}} 
\newcommand*{\iW}{\@ifnextchar{.}{i.\,W}{i.\,W.\@\xspace}} 
\newcommand*{\ua}{\@ifnextchar{.}{u.\,a}{u.\,a.\@\xspace}} 
\newcommand*{\iA}{\@ifnextchar{.}{i.\,A}{i.\,A.\@\xspace}} 
\newcommand{\Ip}{In particular,\xspace} 
\newcommand{\ip}{in particular,\xspace}
\renewcommand{\phi}{\varphi} 
\newcommand{\prref}[1]{\prettyref{#1}} 
\newcommand{\intreg}{\ensuremath{\mathit{int_{\mathrm{Reg}}}}}
\title{Properties of Graphs Specified by a Regular Language}
\author{Volker Diekert}
{Universit\"at Stuttgart, Formal Methods in Informatics, Germany \and \url{https://fmi.uni-stuttgart.de/ti/team/diekert/}}
{diekert@fmi.uni-stuttgart.de}
{https://orcid.org/0000-0002-5994-3762}{}
\author{Henning Fernau}
{Universit\"at Trier, Fachbereich IV, Informatikwissenschaften, Germany \and \url{https://www.uni-trier.de/index.php?id=49861}}
{fernau@informatik.uni-trier.de}
{https://orcid.org/0000-0002-4444-3220}
{}
\author{Petra Wolf}
{Universit\"at Trier, Fachbereich IV, Informatikwissenschaften, Germany \and \url{https://www.wolfp.net/}}
{wolfp@informatik.uni-trier.de}
{https://orcid.org/0000-0003-3097-3906}
{DFG project FE 560/9-1}
\authorrunning{V. Diekert et al.}
\keywords{Graphs, Regular languages, Satisfiability of graph properties}
\date{Otober 11th, 2021}%{\today}
\begin{document} 
\maketitle
 \begin{abstract} 
 Traditionally, graph algorithms get a single graph as input, and then they should decide if this graph satisfies a certain property~$\Phi$.
What happens if this question is modified in a way that we get a possibly infinite family of graphs as an input, and the question is if there is a graph satisfying~$\Phi$ in the family? We approach this question by using formal languages for specifying families of graphs, in particular by regular sets of words. We show that certain graph properties can be decided by studying the syntactic monoid of the specification language $L$ if a certain torsion condition is satisfied. This condition holds trivially if $L$ is regular. 
More specifically, we use a natural binary encoding of finite graphs over a binary alphabet $\Sigma$, and we define a regular set $\mathbb{G}\subseteq \Sigma^*$ such that every nonempty word $w\in \mathbb{G}$ defines a finite  and nonempty graph. 
Also, graph properties can then be syntactically defined as languages over $\Sigma$.
Then, we ask whether the automaton $\mathcal{A}$ specifies some graph satisfying a certain property~$\Phi$.
Our structural results show that we can answer this question for all ``typical'' graph properties.

In order to show our results, we split $L$ into a finite union of subsets and every subset of this union defines in a natural way a single finite graph $F$ where some edges and vertices are marked.  The marked graph in turn defines an infinite graph $F^\infty$ and therefore the family of finite subgraphs of $F^\infty$ where $F$ appears as an induced subgraph.
This yields a geometric description of all graphs specified by $L$
based on splitting $L$ into finitely many pieces; then using the notion of graph retraction, we obtain an easily understandable description of the graphs in each piece.
% \PACS{PACS code1 \and PACS code2 \and more}
% \subclass{MSC code1 \and MSC code2 \and more}
%\ccsdesc[500]{Theory of computation~Formal languages and automata theory} %TODO mandatory: Please choose ACM 2012 classifications from https://dl.acm.org/ccs/ccs_flat.cfm 
\end{abstract}

\section*{Preamble}
The conference abstract of the present paper appeared in \cite{DiekertFW_DLT21}.
Here, we give full proofs and we correct some mistakes. 
\section{Introduction}\label{sec:intro}  
The paper is about families of finite graphs specified by a regular language, and their properties. 
When dealing with algorithms, a graph is often specified by its adjacency matrix or by its induced edge-list together with the number of isolated vertices, if there are any. In either representation, a graph comes with a linear order on the vertices and the edges are directed. Moreover, an adjacency matrix  ignores multiple edges, but self-loops may occur.
We follow these conventions in our paper.
We encode\footnote{We briefly discuss our encoding of graphs as words (and some related work)
in \prref{sec:GrRW}.} a finite graph 
$G=(V,E)$ as a word over the binary alphabet $\Sigma=\os{a,b}$ as follows: the $i$-th vertex $u_i$ of a 
  graph is encoded by $ab^ia$ and the edge $(u_i,u_j)$  is encoded by  $ab^ia a ab^ja$.
Thus, every word $w$ in $\G=\os{ab^+a,ab^+aaab^+a}^+$ represents in a natural way a unique graph $\rho(w)$ because $\os{ab^+a,ab^+aaab^+a}$ is a regular code.  
Given a finite graph $G=(V,E)$ with any linear order on the vertices, we obtain a code word $\gam(G)$ in $\G$ as follows. We write $V$ as $\os{1\lds |V|}$ using an arbitrary but fixed bijection, and then we write the edges and the isolated vertices
in the order which yields the short-lex normal form of $G$ in $\oi\rho(G)$. This means that first, all edges are listed and then, possible isolated vertices follow.
We are interested in abstract graphs, only. Thus, isomorphic graphs are treated as equal. Therefore, several $\gam(G)$'s are possible, although even then, the short-lex normal form would give a unique syntactic representation of $G$ if necessary.

We cannot avoid that every nonempty graph $G$ has infinitely many representations $w\in \G$ such that $\rho(w)=G$.
For example, the one-point graph $(\os{\star},\es)$ is represented 
by all words in the regular set $L_i=(ab^ia)^+$ as soon as $i\geq 1$, i.e., for all $w\in L_i$ we have $\rho(w)=G$.

Given any $L\subseteq \G$, it defines a set of graphs $\rho(L)$. 
The main interest is when $\rho(L)$ is infinite but $L\subseteq \G$ is regular. The aim is to ``understand'' the infinite set 
of graphs in $\rho(L)$. It is far from obvious that this is possible. If 
$L$ is finite, then  $\rho(L)$ is finite, too. But the converse is false.
As we will see, if $L$ is regular, then we can decide finiteness of $\rho(L)$; and if $\rho(L)$ is finite, then we can compute all its graphs. If however, $\rho(L)$ is infinite, then a global understanding of $\rho(L)$ is, a priori, not easy. 

\subsection{A sketch of our approach and our results}

Let us try to give a high-level explanation of the underlying geometric idea how to approach $\rho(L)$. 
Remotely, it is like understanding the geometry of a topological manifold using the fact that it locally resembles an Euclidean space. For example, it is possible to realize a torus (which is a compact $2$-dimensional surface) 
as a unit square where opposite edges are identified. Every point has on open neighborhood which looks like $\R^2$ and from that one easily derive that the so-called \emph{fundamental group} (which is a global property) is the group $\Z\times \Z$. Therefore we cannot transform a torus neither into a sphere nor into a soup tureen with two or more handles.  

In our case, we deal with purely combinatorial objects. Nevertheless, we wish to understand the set of graphs $\rho(L)$ by constructing a finite subset of graphs together with an ``open neighborhood'' around these graphs such that $\rho(L)$ is covered by that construction. Thus, if we want to check whether a certain property $\Phi$ is satisfied by some graph in $\rho(L)$, then it is enough that we are able to check that locally. The key idea is to cut first 
$L$ into pieces using the algebraic property that a regular language  $L$ has a finite syntactic monoid $M_L$. Hence, $L$ is a finite union of congruence classes; and we obtain an important saturation property:
whenever $w\in L$, then we define the set of words $[w]$ to be all words in the same congruence class of $w$. So, $\rho([w])$ plays the role of an
open neighborhood around the graph $\rho(w)$. Inside each $\rho([w])$, we define finitely many ``smallest''  graphs. Thereby, we find 
 a finite set of finite graphs~$\cF$ such that the collection of these finitely many graphs still has the entire information about $\rho(L)$. In order to reveal that information, we construct for each $F\in \cF$ a (possibly infinite) graph 
$F^\infty$. The graph $F^\infty$ contains $F$ as an induced subgraph; and $F^\infty$ comes 
with a graph morphism onto the graph $F\in \cF$. 
The structure of that infinite graph is fully explicit and actually easy to understand. For example, it might happen that 
$F$ consists of a single edge between two endpoints and $F^\infty$
is the complete infinite bipartite graph $(\N,\N, \N\times\N)$. 
Our result shows that, for every regular language~$L$,  we have $G\in \rho(L)$ \IFF  for some $F$, $G\in \rho(L)$ appears as a finite subgraph of $F^\infty$ containing $F$.

Our geometric approach to $\rho(L)$ has two steps. First, we use the algebraic notion of syntactic monoid. The second step is a graph theoretical definition of $F^\infty$. 
The outcome of following this road map is \prref{cor:4cls}.
It tells us that (with respect to our positive and negative decidable results) it is enough to consider only four different classes $\cC_1\subset \cdots \subset \cC_4$ of graphs~$\rho(L)$.
\begin{enumerate}
\item $\rho(L)\in \cC_1$ \IFF the set $\rho(L)$ is  finite.
\item $\rho(L)\in \cC_2$ implies that $\rho(L)$ has bounded tree-width.
\item $\rho(L)\in \cC_3$ implies that every connected finite bipartite graph appears as a connected component of some $G\in \rho(L)$.
\item $\rho(L) \in\cC_4$ implies that every connected finite graph appears as a connected component of some $G\in\rho(L)$.
\end{enumerate}
Moreover, if $L$ is regular, then  we can compute 
  the largest $\ell$ such that
  $\rho(L)\in \cC_\ell$.
Caveat:  
it may happen that $\rho(L)$ is in $\cC_3$ and in addition it contains arbitrarily large connected non-bipartite graphs, but nevertheless $\rho(L)\notin\cC_4$.

  Since the syntactic monoid of a regular language is finite, we find some 
  $t,p\in \N$ with $p\geq 1$, \emph{threshold} and \emph{period}, such that for every $n\in \N$ there is some 
  $c\leq t+p-1$ with $b^c \equiv_L b^n$ where $\equiv_L$ denotes the \emph{syntactic equivalence}. % in the usual sense of formal language theory. 
The pair $(t,p)$ tells us that $b^c \equiv_L b^n$ implies first, $n=c$ for  all $0\leq c < t$ and second, $b^n \equiv_L b^{n+p}\iff n\geq t$. This is the key observation when proving that we have no more than these four classes above. If $L\sse \G$ is not regular, then the syntactic monoid $M_L$ is infinite. Still there are interesting examples where $M_L$ satisfies the \emph{Burnside condition} that  all cyclic submonoids of $M_L$ are finite. If so, then there exist $t,p\in \N$ with $p\geq 1$ such that the syntactic properties stated above hold for the powers of the letter~$b$. In this case, we say that $L$ satisfies the \emph{\btptorp}. \prref{thm:allreg} shows that for every subset  $L\sse \G$ satisfying the $(b,t,p)$-torsion property, there exists a regular set $R\sse \G$ such that $\rho(L)=\rho(R)$. This is quite an amazing result. Its proof relies on the fact that  $\rho(L)$ is determined once we know the Parikh-image of $\rf(L)$ in $\N^{t+p-1}$, where for $w\in \G$, the  \emph{reduced form} $\rf(w)$ is obtained by replacing every $b^n$ by $b^c$, where $c$ is the smallest $0\leq c \leq t+p-1$ such that $b^c \equiv_L b^n$. Hence, for deciding whether some graph $G\in \rho(L)$ satisfies a property, we can assume that $L$ is regular. 
We are interested in decidable properties~$\Phi$, only. Thus, we assume that the set $\set{G \text{ is a finite graph}}{G\models \Phi}$ is decidable. If $\rho(L)$ is finite, then we can compute all graphs in $\rho(L)$ and we can output all $G\in \rho(L)$ satisfying~$\Phi$. 

Finiteness of $\rho(L)$ is actually quite interesting and important. It is a case where a representation of~$L$ by a DFA or a regular expression can be used for data compression. 
The minimal size of a regular expression  (or the size of a DFA) for~$L$ is never worse than listing all graphs in $\rho(L)$, but it might be exponentially better.
For a concrete case, we refer to \prref{ex:corona}. 
% for a succinct representation of all so-called \emph{crowns with at most $n$ cusps}.
The compression rate becomes even better if we use a context-free grammar which produces a finite set $L$ of words in $\Sigma^*$, only.  In the extreme case, $L$ is a single word $w$. Then it might happen that the grammar (or \emph{straight-line program}) is exponentially more succinct than writing $w$ as a word in $\Sigma^*$. Thus, possibly we can decide the existence of a graph in $\rho(L)$ satisfying $\Phi$ even though $L$ is highly compressed by the chosen graph representation.

If $L$ is regular, the existence of planar graphs in $\rho(L)$ is 
conceptually easy to decide: Given~$L$, we can can compute a number 
$n(L)\in \N$ such that $\rho(L)$ contains a planar graph
\IFF there is some $w\in L$ of length at most $n(L)$ such that
$\rho(w)$ is planar. On a meta-level, whenever 
we were able to decide whether some $G\in \rho(L)$ satisfies~$\Phi$, then we found effectively a corresponding number $n(L)$.
Moreover, positive decidability results are easy to establish for typical graph properties like ``planarity'' and many other graph properties. 

The second class $\cC_2$ implies  that $\rho(L)$ has bounded tree-width.  In this case, by \cite{CourcelleHandbookGG97,CourcelleEngelfriet2012,Seese91} we know that given any property $\Phi$ which is definable in Monadic Second-Order logic, \emph{MSO} for short, then it is decidable 
  whether there is a graph in  $\rho(L)$ satisfying $\Phi$.
  Languages $L\sse \G$ such that first, $\rho(L)$ has finite tree-width and second, the \btptorp holds 
  can be visualized as a set of graphs sharing some finite subgraph as a backbone structure to which arbitrarily large stars can be glued.
  This observation leads to \prref{thm:endmark}: The satisfiability problem for MSO-sentences is decidable 
  for language in the second class. Not surprisingly, the proof of \prref{thm:endmark} uses  
\prref{thm:allreg}.  

  For the other two classes, the picture changes drastically:  the First-Order theory
  (\emph{FO} for short) becomes undecidable~\cite{Tra50}.  Conversely, we are not aware of any ``natural'' graph property~$\Phi$ 
  (which is not encoding Turing machine computations)
where the satisfiability problem for $\Phi$ is not  trivial for $\cC_3$ and $\cC_4$. 
For example, for these classes $\rho(L)$ contains non-planar graphs, because $\rho(L)$ contains a graph where the complete bipartite graph $K_{2,3}$ is a connected component. For a similar reason, $\rho(L)$ contains graphs without any perfect matching. 
It is therefore more interesting to know whether some $G\in \rho(L)$
allows a perfect matching. This problem is decidable, as we show, but the decision procedure is more involved.

\subsection{Encoding of graphs and related work}\label{sec:GrRW}
Our encoding of graphs by using words over a binary alphabet 
is quite natural but obviously not unique. 
For instance, one could use larger alphabets, say, a unique letter per vertex, in writing down vertex or edge lists. 
As we use a code to write down vertex and edge names, we could interpret our encoding also as using a larger alphabet.
However, in the context of the questions that we discuss in this paper, this would lead to automata over infinite alphabets,
and we wanted to avoid discussing these here.

The bit complexity of $\gam(G)$ (encoding an edge-graph~$G$ with $n$ vertices and $m$ edges) is $\Oh(n\cdot m)$ and hence as good as traditional incidence matrices. More compact representations seem to lead to encodings that are not fit to be tested by finite automata and are hence avoided.

%the bit complexity of $\gam(G)$ is in $\Oh((n+m) \log n)$ for a graph with $n$ vertices and $m$ edges.\footnote{The subset $\cG$ of all $\gam(G)$ is a context-sensitive subset of $\G$, but $\cG$ is not a \cfree subset. Since the family of context-sensitive sets is closed under complementation (see, \eg, \cite[Thm.~7.6]{pap94}), there is another  context-sensitive subset $\cG' \subseteq \cG$ which is in bijection to the family of all finite graphs.} This is optimal up to the  factor $\log n$.

With the idea of using larger alphabets, still completely different encodings are possible. For instance, Kitaev and Seif introduced in~\cite{KitSei2008}  a representation of directed acyclic graphs by associating vertices to sets of letters of a word. This is also interesting for our discussions, as \cite[Thm.~1.8]{KitSei2008})  yields a characterization of the Word Problem of Perkins's semigroup $\mathbf{B}_\mathbf{2}^\mathbf{1}$ in terms of  graph problems. In~\cite{BerMah2016}, again different interpretations of words as graphs and also typical graph problems are investigated for these encodings. Also, Bera and Mahalingam~\cite{BerMah2016} draw connections to Parikh images.

In~\cite{KuskeDLT21}, Kuske generalizes results of de~Malo and de~Oliveira Oliveira in 
\cite{MeloO_CSR2020} on \emph{Second-Order Finite Automata} by using automatic structures.
As an application, Kuske shows in~\cite[Thm.~3.6]{KuskeDLT21} how to decide typical properties of languages classes accepted by second-order finite automata.

Although our results go beyond regular sets $L$, the focus and the motivation comes from a situation when $L$ is regular.  A typical question could be whether there exists some planar graph in $\rho(L)$. Solving this type of decision problems was the motivation to study \emph{regular realizability problems} in \cite{AndersonLRSS09,VyalyiR15} and, independently, 
calling them $\intreg$-problems\footnote{The notation $\intreg$ refers to \emph{intersection non-emptiness with regular} languages.}  in \cite{GulerKLW18,Wolf2019,Wolf2020}. 

\section{Notation and preliminaries}\label{sec:nota}
Some of the following notation was introduced and explained in the introduction, \prref{sec:intro}. For convenience, we repeat them. 
We let $\N= \os{0,1, 2, \ldots}$ be the set of natural numbers and 
$\N_{\infty}= \N \cup \os{\infty}$.
Throughout, if $S$ is a set, then we identify a singleton set $\os x\sse S$ with the element $x\in S$. The power set of $S$ is identified with $2^S$ (via characteristic functions). 
If $E\sse X\times Y$ is a relation, then $\oi E$ denotes its inverse relation
$\oi E=\sset{(y,x)\in Y\times X}{(x,y)\in E}$. By $\id_X$, we mean the
identity relation. Recall that $Y^X$ denotes the set of mappings from a set $X$ to a set $Y$.
 If $f:X\to Y$ and $g:Y\to Z$ are mappings, then $gf:X\to Z$ denotes the mapping defined by $gf(x)=g(f(x))$. If convenient, we abbreviate $f([x])$ as $f[x]$. 
 
Henceforth, $\Gam$ denotes a finite alphabet. For a word $w=a_1\cdots a_n\in \Gam^*$ with $a_i\in \Gam$, we let  $\ov w=a_n\cdots a_1$ be the \emph{reversal} of~$w$. That is, we read the word $w$ from right to left. 

Each alphabet is equipped  with a linear order on its letters\footnote{This convention is standardized in DIN 44300 and ISO 2382.}.  
The linear order on $\Gam$  induces the \emph{short-lex linear order} $\slex$ on $\Gam^*$. 
 That is, for $u,v\in \Gam^*$, we let $u \slex v$ if either $\abs u <\abs v$ or  
 $\abs u =\abs v$, $u=pcu'$, and $v=pdv'$ where $c,d\in \Gam$ with $c< d$. Here, $\abs u$ denotes the length of $u$. Similarly, $|u|_c$ counts the number of occurrences of letter $c$ in~$u$. 
 We also fix the notation $\Sigma=\os{a,b}$ with $a\neq b$. 
 
\subsection{Monoids}\label{sec:mons}
A monoid $M$ is a semigroup $(M,\cdot)$ with a neutral element~$1\in M$. 
If we use a multiplicative notation, then $1$ denotes the neutral element of a monoid. \Ip the empty word in free monoids is denoted by $1$ as well. 
In commutative monoids, we might use an additive operation, and then the neutral element is denoted as $0$. This is standard and there will be no risk of confusion. Cyclic monoids are commutative because, by definition, they are generated by a single element. 
Every finite cyclic monoid $M$ is defined by two numbers  
 $t,p\in \N$ with $p\geq 1$ (where $t$ is the \emph{threshold} and $p$ is the  \emph{period}) such that $M$ is isomorphic to the quotient monoid $C_{t,p}$ of $(\N,+,0)$ with the defining relation $t=t+p$. Hence, the carrier set of $C_{t,p}$ equals $ 
 \{0,1,\dots,t+p-1\}$. If $t=0$ and $p=1$, then  
 $C_{t,p}$ is the trivial monoid~$\os 0$.  

If $M$ is a monoid, then $u\leq v$ means in our paper $v\in M uM$. That is,  
$u$ is a \emph{factor} of $v$. This notation applies, \ip to the monoids $\Gam^*$ and  $\N^\Gam$. Here, $\N^\Gam$ denotes the free commutative monoid
over $\Gam$. Since $\Gam$ is finite, $\N^\Gam$
is the set of mappings from~$\Gam$ to $\N$. Its elements are called \emph{vectors}.

\subsubsection{Syntactic monoids, congruences, and the Word Problem}\label{sec:synm}
Every subset $L\sse \Gam^*$ has a \emph{syntactic monoid} $M=M_R$, see for example \cite{eil74}. 
The elements of $M_L$ are the congruence classes $[u]=\sset{v\in \Sigma^*}{v\equiv_L u}$ \wrt the \emph{syntactic congruence}~$\equiv_L$ which is defined by the following equivalence. 
\begin{align*}
u\equiv_L u' \text{\; \IFF\; } \forall x,y\in \Sigma^*:\, xuv \in L \iff xu'v \in L
\end{align*}
If $L$ is regular, then $M_L$ is finite. 
Later, we do not need that $M_L$ is finite, but we will relax this condition.  It will suffice that the letter $b$ appears in $\Gam$ and is generating a finite submonoid.\footnote{According to \prref{sec:fixR}, we will call this the \btorp.} 

Let $\phi:\Gam^* \to G$ be a surjective \hom onto a finitely generated group $G$. 
Then the \emph{Word Problem} of~$G$ 
denotes the  set $\WP(G)= \set{w\in \Gam^*}{\phi(w)=1}$. If this set is decidable, then we say that Word Problem of~$G$ is decidable because on input $u,v\in \Gam^*$ we can decide whether $\phi(u)=\phi(v)$. 
It is a classical fact (and an easy exercise) that decidability of 
Word Problem does not depend on the generating set and that 
the syntactic monoid of $\WP(G)$ is the group~$G$ itself,
\cite{anisimov71}.  

\subsubsection{Burnside groups}\label{sec:bugr} Recall that $|\Sigma|=2$. The \emph{free Burnside group} $\cB(2,p)$ is defined %in mathematics 
as the quotient
\begin{align*}
\cB(2,p)= \Sigma^*/\set{x^p=1}{x\in \Sigma^*}
\end{align*}
where $p\geq 1$. It is a group, because every $x$ has the inverse  
   element $x^{p-1}$ thanks to $p\geq 1$.   
   For $p$ large enough, Adjan has shown in the 1970s that $\cB(2,p)$ is infinite, answering a question of Burnside dating back in its original form  to 1902. Actually, Adjan also showed the decidability 
   of the Word Problem  of $\cB(2,p)$ if $p$ is large enough. 
Here, a group (with two generators) is called \emph{$p$-periodic} if it is the homomorphic image of some $\cB(2,p)$. 
  Kharlampovich constructed in \cite{KharlampovichJA1995} a periodic group  $B(2,p)$ with a generating set $\Sigma$ and a finite set  
   of words $w_1\lds w_r\in \Sigma^*$  
   such that the group $B(2,p)$ has the monoid presentation 
 %\begin{align*} 
    $B(2,p)= \Sigma^*/\set{w_i=1 , w^p=1}{1\leq i \leq r\wedge w\in \Gam^+} $
 %\end{align*} 
    (as an abstract group) and where the Word Problem  $\WP(B(2,p))$ is undecidable. Thus, the language  $L=\WP(B(2,p))$ is undecidable, 
    nevertheless $w^p\equiv_L 1$ for all $w\in \Sigma^*$.  
 We use this example to illustrate that there are undecidable
 languages satisfying the \btorp which will be defined in \prref{sec:fixR}.

\subsection{Parikh-images}\label{sec:parikh} 
 If $v,w\in \Gam^*$, then $|w|_v$ denotes the number how often 
 $v$ appears as a factor in~$w$, \ie, 
 $|w|_v= \abs{\set{u\in \Gam^*}{\exists s: uvs=w}}$. 
 If $P\sse \Gam^*$, then the \emph{Parikh-mapping \wrt $P$} is  
 defined by $\pi_P:\Gam^*\to \N^{P}$,  mapping
 a word $w\in \Gam^*$ to its \emph{Parikh-vector} $(|w|_v)_{v\in P}\in \N^{P}$. 
 The classical case is $P=\Gam$;  then the Parikh-vector 
 becomes $(|w|_a)_{a\in \Gam}$ and the Parikh-mapping is the canonical \hom from the free monoid  $\Gam^*$ to the free commutative monoid~$\N^\Gam$. As usual, $\N^\Gam$ is partially ordered such that 
\begin{align*}
u\leq v \iff \forall z\in \Gam:\;u(z)\leq v(z).
\end{align*}
Subsets $L\sse \N^\Gam$ which can be written as 
$L= q + \sum_{i\in I}\N p_i$ are called \emph{linear} and a finite union of linear sets is called  \emph{semi-linear}. Use the following classical results. 
\begin{proposition}\label{prop:gs66parikh66}
\begin{enumerate}[(a)]
\item The complement of a semi-linear set $L\sse \N^\Gam$ is effectively semi-linear. Hence, the family on semi-linear sets is an effective Boolean algebra, see~\cite{gs66}.  
\label{gs66}
\item The Parikh-image of a \cfree language is effectively semi-linear,  see~\cite{parikh66}.
\label{parikh66}
\end{enumerate}
\end{proposition}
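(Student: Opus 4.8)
The plan is to prove the two parts separately, each by reducing to a standard normal-form argument; I will assume nothing beyond basic automata and grammar theory.

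For part~(\ref{gs66}), the cleanest route is through the equivalence of semi-linear sets with the subsets of $\N^\Gam$ definable in Presburger arithmetic, i.e.\ in the first-order theory of $(\N,+,\leq,0,1)$. One inclusion is immediate: a linear set $q+\sum_{i\in I}\N p_i$ is defined by the formula $\exists \lambda_1\cdots\exists\lambda_{|I|}\colon x=q+\sum_{i\in I}\lambda_i p_i$, and a finite union of such sets is again first-order definable since the logic is closed under disjunction. The converse is proved by induction on the structure of a Presburger formula, the key step being the effective elimination of one existential quantifier from a conjunction of linear (in)equalities and congruence conditions; the outcome is a semi-linear representation computed from the formula. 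Granting this equivalence, closure under complement — and hence the Boolean-algebra statement — is trivial, because the negation of a Presburger formula is a Presburger formula and the translation back to a semi-linear description is effective. Alternatively one can argue purely combinatorially: finite unions are handled by definition, so by De Morgan it suffices to close under intersection and to show that the complement of a single linear set is semi-linear; the intersection of two linear sets is the image of the solution set of a system of linear Diophantine equations, and that solution set is itself linear (a particular solution plus the finitely many generators of the homogeneous part).

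For part~(\ref{parikh66}) I would fix a context-free grammar $G=(N,\Gam,P,S)$ with $\lang{G}=L$ and analyse its derivation trees. Call a derivation tree \emph{reduced} if along every root-to-leaf path each nonterminal occurs at most $|N|+1$ times; there are only finitely many reduced trees, and each yields a base Parikh-vector $q$. Starting from an arbitrary derivation tree, one repeatedly contracts a nonterminal that appears twice on some long path, each contraction excising a ``pump'' $A\Rightarrow^+ uAv$ whose yield has a Parikh-vector $p$; iterating down to a reduced tree shows that every word of $L$ has Parikh-vector $q+\sum_j\lambda_j p_j$, where $q$ comes from a reduced tree $T$ and each $p_j$ is the Parikh-vector of a pump using only nonterminals still present in $T$. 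Conversely, every such pump can be re-inserted into $T$, so $\pi_\Gam(L)$ equals the union, over reduced trees $T$, of the linear set $q_T+\sum\{\N p : p\text{ a pump over the nonterminals of }T\}$; this is semi-linear, and every ingredient — the reduced trees, the vectors $q_T$, the pump vectors — is computable from $G$.

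I expect the main obstacle to be the bookkeeping in~(\ref{parikh66}) rather than the pumping itself: one must guarantee that a given pump is genuinely re-insertable into a given reduced tree, which forces the period vectors to be indexed by the \emph{set} of nonterminals occurring in that tree rather than taken globally, and one must verify that contracting pumps never destroys the ability to re-expand them. Choosing the right invariant — essentially partitioning derivations by their ``nonterminal support'' — is the delicate point; once it is in place, both the inclusion and its converse go through by induction on tree size. Part~(\ref{gs66}), by contrast, is routine: the only real content is the quantifier-elimination step (or, in the combinatorial variant, solving the Diophantine systems), and both are classical and effective.
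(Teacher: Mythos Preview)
Your sketches are correct and follow the standard arguments for these classical results: the Presburger-definability route for~(\ref{gs66}) is exactly the Ginsburg--Spanier approach, and the derivation-tree pumping argument for~(\ref{parikh66}) is essentially Parikh's original proof (with the nonterminal-support bookkeeping you rightly flag as the delicate point).

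That said, the paper gives no proof of this proposition at all. It is stated as a black-box citation of~\cite{gs66} and~\cite{parikh66}, with no accompanying argument; the authors treat it as a classical fact and simply use it downstream (notably in the proof of \prref{thm:allreg} and \prref{cor:Liscf}). So there is nothing to compare your proof against --- you have supplied content the paper deliberately omits. If the intent is to match the paper, a one-line reference to the two cited works suffices; if the intent is to make the exposition self-contained, your outline is sound and the only place requiring genuine care is, as you note, ensuring in~(\ref{parikh66}) that the period vectors are indexed by the nonterminal support of each reduced tree so that re-insertion is always possible.
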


A subset $S\sse \N^\Gam$ is called \emph{positively downward-closed} if first $v(z)\geq 1$ for all $v\in S$, $z\in \Gam$ and second,  $u\leq v\in S$ and $u(z)\geq 1$ for all $z\in \Gam$ imply $u\in S$. The complement of a positively downward-closed set $S\sse \N^\Gam$ is \emph{upward-closed}, \ie, $u\geq v\in S$ implies $u\in S$. 
An upward-closed 
set $S$ is determined by its set $\min(S)$ of minimal elements. Dickson's Lemma says that the set 
$\min(S)$ is finite for all $S\sse \N^\Gam$. Hence, every 
 upward-closed subset is semi-linear. It follows by \prref{prop:gs66parikh66}~(\ref{gs66}) that every  
positively downward-closed set $S\sse \N^\Gam$ is semi-linear, too. 
This observation is crucial for proving  \prref{thm:allreg}.  
  
\subsection{Graphs}\label{sec:graph} 
 All graphs are assumed to be (at most) countable, given as a pair 
 $G=(V,E)$ where $E\sse V\times V$. An \emph{undirected graph} is the special case where $E=\oi E$, so that $E$ describes the adjacency relation. If $G=(V,E)$ is a directed graph, then $G$ also defines the \emph{undirected graph} $(V,E\cup \oi E)$; 
 and it defines the  \emph{undirected graph without self-loops}  $(V,(E\cup \oi E)\sm \id_V)$. A graph without  isolated vertices is called an \emph{edge-graph}. Hence, the set of edges determines an edge-graph.  
 If $G'=(V',E')$  and $G=(V,E)$  
 are graphs such that $V'\subseteq V$ and $E'\subseteq E$, then  
 $G'$ is  a \emph{subgraph} of graph $G$ and we denote this fact by $G'\leq G$. If $U\sse V$ is any subset, then $G[U] =  
 (U,E\cap U\times U)$ denotes the \emph{induced subgraph} of $U$ in $G$. \emph{A subset $U$ is called \emph{independent} if $G[U]$ is without any edge.}
  A \emph{graph morphism} $\phi: (V',E')\to (V,E)$ is given by a mapping  
 $\phi: V'\to V$ such that  
 $(u,v)\in E'$ implies $(\phi(u),\phi(v))\in E$. If $(V',E')$ and $(V,E)$ are undirected graphs without self-loops,  
 then $\phi: (V',E')\to (V,E)$ is a graph morphism as soon as $(\phi(u),\phi(v))\in E\cup \id_V$.  
 We say that $\phi$ is a \emph{projection} if  
 $\phi$ is surjective on vertices and edges, i.e., $\phi(V')=V$ and $\phi(E')=E$.  
 We consider graphs up to isomorphism, only. Hence, writing $G=G'$ means that graphs $G$ and $G'$ are isomorphic.  
%\subsection{Retracts of graphs}\label{sec:retrgraph} 
 According to the following \prref{sec:retract}, a graph $F=(V,E)$ is a \emph{retract} of a graph $F'=(V',E')$ if there are \morph{s} $\phi: F'\to F$  and $\gam: F\to F'$ such that 
 $\phi\gam$ is the identity on vertices and edges of $(V,E)$.  
 Hence, $F$ appears in $F'$ as the induced subgraph $F'[\gam(V)]$. 
Another way to say this is that $F$ is an induced subgraph of $F'$ and there is a  \morph $\phi: F'\to F$ which is the identity on $F$.

We consider several special graphs (and graph properties) in our paper. By a \emph{star}, we denote a graph $(V,E)$ such that  
  there exists a vertex $z\in V$ with the property $E=\sset{(z,s)}{s\in V\sm \os z}$. Thus, a star has a \emph{center} $z$ and the directed edges are the outgoing \emph{rays} of the star. We also use this notion to refer to an undirected connected graph where all but possibly one vertex have degree one.
Let $G=(V,E)$ be an undirected graph. $G$ is called a clique if all possible edges (apart from self-loops) are contained in $E$.
$G$ is called bipartite  if $V$ can be partitioned into $V_1$ and $V_2$ such that there are no edges between vertices of the same class of the partition. A bipartite graph is called complete if no further edges can be added without violating bipartiteness.
By $K_n$ we denote a clique with $n$ vertices, and $K_{n,m}$ denotes the complete bipartite graph with $n$ vertices in one of the classes and $m$ vertices in the other one. A vertex set is \emph{independent} if its vertices are pairwise non-adjacent. A vertex set~$C$ is a \emph{vertex cover} if for each edge, at least one endpoint belongs to $C$. 

Some of our results revolve \emph{tree-width}. The notation is due to Robertson and Seymour and was one of the cornerstones 
in their famous graph minor project. A formal definition of tree-width is, for example, in \cite{diestel2012} and in many other  modern textbook on graph theory or graph algorithms. 

In our paper, every word $w\in \Sigma^*$ represents a directed finite graph $\rho(w)=(V(w),E(w))$ together with a linear order on vertices as follows. 
 \begin{align*} 
  V(w)&=\sset{ab^ma\in ab^+a}{ab^ma\leq w}\\ 
  E(w)&=\sset{(ab^ma,ab^na)\in ab^+a\times ab^+a}{ab^maaa b^na\leq w} 
 \end{align*} 
 The empty word represents the empty graph: there are no vertices and no edges.  We extend $\rho$ to $2^{\Sigma^*}$ by $\rho(L)=\{\rho(w)\mid w\in L\}$. 
 Vice versa, if  $G=(V,E)$ denotes a finite graph with a linear order on its vertices, then, for $1\leq i,j\in \N$, the $i$-th vertex is represented by the factor $ab^ia$, and an edge from the $i$-th vertex to the $j$-th vertex is represented 
 by the factor $ab^iaaab^ja$. Thus, vertices are encoded  
 by elements in the set $\V=\sset{ab^ia}{1\leq i \in\N}$ and  
 edges are encoded  
 by elements is the set $\E=\sset{ab^iaaab^ja}{1\leq i,j \in \N}$.  
 Note that $\V\cap\E=\es$ and $\V\cup\E$ is an infinite regular code. Using these conventions, the regular set $\G=(\V\cup\E)^*$ as well as its subset $\E^*\V^*$ represents all finite graphs. 
The same property holds for the complement $\Sigma^*\setminus\G$: it
represents all finite graphs, too. Indeed, $\G \cap a\G=\es$ but $\rho(\G )= \rho(a\G)$, since the words $w\in \G$ and $aw\notin \G$ share the same set of factors from $\V\cup\E$. 
 In contrast to $\G$, infinitely many nonempty words in $\Sigma^*\setminus\G$ represent the empty graph, for example all words without any $b$ or with at most one $a$.
The set $\E^*$ represents all edge-graphs, \ie, all graphs without isolated vertices.  
 Every nonempty finite graph has infinitely many representations in $\G$. 
 For example, there are uncountably many subsets $L\sse (aba)^+\sse \V^+$ and each $\rho(L)$ represents nothing but the one-point graph without  self-loop.  
 In order to choose a unique (and minimal) representation for  
a finite graph $G=(V,E)$, we choose the minimal word $\gam(G)= u_1\cdots u_m v_1\cdots v_n\in \G$ in the short-lex ordering  
 on $\Sigma^*$ such that $\rho\gam(G)=G$, $u_k\in \E$  
 for $1\leq k \leq m$ and $v_\ell\in \V$  
 for $1\leq \ell \leq n$. Each $u_k$ is of the form  
 $ab^iaaab^j a$ representing an edge and each $v_\ell$ is of the form  
 $ab^ia$ representing an isolated vertex. We call $\gam(G)$ the \emph{short-lex representation} of~$G$.
Since $\gam(G)$ is minimal \wrt $\slex$,  we have $m=\abs E$ and $n$ is the number of isolated vertices. For a graph without isolated vertices, \ie, an edge-graph, this means that it is given by its edge list.  
 The set of all $\gam\rho(\G)$ is context-sensitive but not context-free. 
The $uvwxy$-Theorem (\ie, the context-free pumping lemma) does not hold for %the language 
$\gam\rho(\G)$. For instance, all edge-graphs in $\gam\rho(\G)$ with $n$ vertices must be represented by vertex names $ab^ia$ with $1\leq i\leq n$, admitting no `holes' in this vertex name interval, as otherwise there would be a smaller short-lex representation of some graph. Such a property is not maintained by pumping.

\subsection{Retractions and retracts}\label{sec:retract} 
 Let $\rho:X\to Y$ and $\gam:Y\to X$ be mappings between sets.  (This holds more general for mappings which are \morph{s} in some category.) 
If $\rho(\gam(y))=y$ for all $y\in Y$, then $\rho$ is called a \emph{retraction} and $Y$ is called a \emph{retract} of~$X$ with \emph{section}~$\gam$. We also say that 
 $\oi \rho(y)$ is the \emph{fiber} of $y\in Y$.  
 For example, if $\rho:X\to Y$ is a \hom of groups $X$ and $Y$ and $H=\ker(\rho)$ is the kernel, then $\rho$ is a retraction 
 \IFF  $X$ is a semi-direct product of $H$  by~$Y$.  
Another example comes from formal languages: let $X$ be the set of deterministic finite automata (DFAs) where every state is reachable. Then the minimization process defines a retraction to the set of minimal DFAs. 

{Later in \prref{sec:marked}, we define a \emph{marked graph}
as a triple $(V_F,E_F,\mu)$, where $(V_F,E_F)$ is a finite graph and 
$\mu\sse V_F \cup E_F$ is the set of marked vertices and edges. 
Then, $(V_F,E_F,\mu)\mapsto (V_F,E_F)$ defines a retraction by letting $\gam(V_F,E_F)= (V_F,E_F,\es)$.}

Let $\cG$ be the set of finite graphs, $\rho:\G\to \cG$ be the 
representation of graphs by words, and $\gam:\cG\to \G$ the encoding of a graph by its short-lex normal form. Then $\rho$ is a retraction. Retractions are a main tool to understand $\rho(L)$ if 
$L$ is regular or more general, if $L$ satisfies the \btorp, as defined in the next section.

\section{The $b$-torsion property}\label{sec:fixR} 
 We are interested in properties of graphs which are specified  
 by languages $L\sse \G$. If $L$ can be arbitrary, then  
 we can specify uncountably many families of graphs. So, we cannot expect any interesting and general positive (decidability) results.  
 As a minimal request, we restrict our attention to  
 subsets $L\sse \G$ where membership for $\rho(L)$ is decidable.  
 As a matter of fact, membership for $\rho(L)$ might be decidable 
 although membership for~$L$ is undecidable.  
 As we will see in \prref{cor:allreg}, the following definition yields a sufficient condition that membership for $\rho(L)$ becomes decidable. 
\begin{definition}\label{def:genR} 
  Let $\Gam$ be a finite alphabet containing the letter $b$. A subset $L\sse \Gam^*$ satisfies  
  the \emph{\btptorp} if we have: 
  $ %\begin{align*} 
  b^{t}\equiv_L b^{t+p}. 
  $ %\end{align*}  
  It satisfies the \emph{\btorp} if there are $t,p\in \N$ with $p\geq 1$ such that $L$ satisfies  
  the \btptorp. 
 \end{definition} 
Every regular language $L\sse \Gam^*$ satisfies  
 the $b$-torsion property because the syntactic monoid $M_L$ is finite. 
 The \btorp is a strong restriction if $L$ is not regular. For example,  $K=\sset{a^nb^n}{n\in \N}$ does not have that property since  $b^{k}\equiv_L b^{m}\iff k=m$.
The \cfree language  
 $L=\{wa\ov w\mid w\in \{aba, ab^2a\}^+\}$ is not regular, but it satisfies the  $b$-torsion property for $t=3$ and $p=1$.  
 In this case, $\rho(L)$ is a not very interesting set of a few small graphs. The next example shows that there are (non-regular) \cfree languages satisfying  a \btptorp where $\rho(L)$ is infinite. 
\begin{example}\label{ex:cfbtorp}
Let $C$ be a nonempty finite alphabet and $K\sse C^*$ be \cfree. 
Let $h:C^*\to \E^*$ be a \hom. That is, $h$ is defined
by words $h(c)\in (ab^+aaab^+a)^*$ for $c\in C$. Suppose that $h(K)$ is infinite.
Still, the  set  $\rho(h(K))$ is finite. Indeed, let $t = \max\set{i\in \N}{\exists c\in C:\, ab^ia \leq h(c)}$, then graphs in $\rho(h(K))$ have at most $t$ vertices. 
Let us make the language $h(K)$ 
larger by closing $h(K)$ under rewriting rules $b\to b^{1+p}$. 
\Cfree languages are closed under adding \cfree rewriting rules. Therefore, $h(K)$ is \cfree, too. We obtain 
a new \cfree language $L$ with $h(K)\sse L$ and where $L$ satisfies the \btptorp. We claim that $\rho(h(L))$ is  a very rich and an infinite family of graphs (in contrast to the finite set $\rho(h(K))$). 
%Vlt. gilt das Folgende für jedes $p\geq 1$.

We content ourselves to consider the case $p=1$.  For $p=1$ it is rather easy to see that  every non-empty finite edge-graph appears in $\rho(h(L))$: we have $\rho(h(L))=\rho(\E^*)$.
Let $G=(V_G,E_G)\in \rho(\E^*)$ and $m=|E_G|$ be the number of edges.
Since $h(K)$ is infinite, there is  some edge
$e=ab^iaaab^ja$ with $1\leq i, j \leq t$ such that $e$ appears in some $w_f\in h(L)$ at least $m$ times as a factor. 
Then, thanks to $p=1$, we have 
$w=(ab^taaab^ta)^\ell\in L$, where $m\leq\ell$ and $\ell$ is very large. 
The graph $\rho(w)$ is a one-point graph with a self-loop. 
Let us come back the graph $G$. Without restriction, we have
$V_G\sse \os{t\lds \ell}$. 
Now, for each $(u,v)\in E_G$, one after  the other, 
we replace one factor $ab^{t}aaab^{t}a$ in $w$ by the factor 
$ab^{u}aaab^{v}a$. This changes the word $w$, but the new word $w$ still belongs to $L$, again thanks to $p=1$. By creating, if necessary, several copies of the same edge, the procedure yields a word 
$w_G$ such that $w_G\in L$ and $\rho(w_G)=G$. 
\qed
\end{example}

As soon as all cyclic submonoids of $M_L$ are finite, $L$ satisfies the \btorp for all letters $b\in \Gam$. For example, consider the Word Problem of any free Burnside group $\cB(2,p)$. All of them satisfy the \btorp. Almost 
all of the groups $\cB(2,p)$ are infinite and therefore the corresponding Word Problems are 
not regular. If it is not regular, then the Word Problem of  $\cB(2,p)$ is even not \cfree, since a periodic group cannot have any non-trivial free group of finite index by~\cite{ms83}.
 
 For the rest of the paper, if $L\sse \Sigma^*$ satisfies the \btorp, then  $t,p\in \N$, standing for \emph{threshold} and \emph{period},  denote those natural numbers such that the cyclic submonoid generated by the letter $b$ in the syntactic monoid $M_L$ is isomorphic to $C_{t,p}$. That is,  we have $t,p\in \N$ with $p\geq 1$, where $t+p$ is minimal such that 
\begin{align}\label{eq:tp} 
\sset{[b^n]}{n\in \N}= \sset{[b^c]}{0\leq c\leq t+p-1}\,. 
\end{align} 
 Moreover, we assume that $L$ is specified such that on input $n\in \N$, we can compute the value $0\leq c\leq t+p-1$ with $b^n\equiv_L b^c$. This assumption is satisfied if $L$ is regular and specified, say, by some NFA.  
 For $L\sse \G$, we have 
 $[ab^ca] = a [b^c] a$ and $[ab^caaab^da] = a [b^c] aaa[b^d]a$. 
The tacit assumption is important for the next definition to compute, for example, the reduced form according to the next defintion.  
 \begin{definition}\label{def:nfwh} 
  Let $L\sse \G$ satisfy the \btptorp according to 
  \prref{def:genR}.  
  For every $[b^n]$, we define its \emph{reduced form} by    
  $\rf{[b^n]}= b^c$ if $[b^c] = [b^n]$ and $0\leq c\leq t+p-1$. 
  Given $w\in \G$,  we define the  
  \emph{reduced form} $\rf(w)$ by replacing every factor $ab^ma\leq w$ by $a\, \rf{[b^m]}a$. The \emph{saturation} $\wh w$ of $w$ is defined by replacing  
  every factor $ab^ma\leq w$ by the set $a[b^{m}]a$. Hence, $\rf(w)\in \wh w\sse \G$. 
 \end{definition}

 \begin{remark}\label{rem:small} 
  Let $L\sse \G$ satisfy the \btptorp. By possibly decreasing $t$ and/or $p$, we may assume that for every $1\leq c \leq t+p-1$, there is some $w\in L$ such that $ab^ca\leq \rf(w)$. Moreover,  
  we have $[b^c] = \os{b^c}$ \IFF $c<t$.  
\qed 
\end{remark} 
\begin{lemma}\label{lem:nfwh} 
  Let $L\sse \G$ satisfy the \btptorp. %with threshold $t$ and period $p$.  
  Then for every $w\in \G$, %the following holds. 
\begin{align*} 
   w\in L \iff \wh w\sse L \iff \rf(w)\in L\,. 
\end{align*} 
 \end{lemma}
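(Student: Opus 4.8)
The plan is to prove the chain of equivalences by exploiting the syntactic congruence directly, using the key fact from \prref{rem:small} that the reduced form only changes factors of the shape $ab^ma$ into syntactically equivalent factors $ab^ca$. First I would observe that for any factorization $w = x\,(ab^ma)\,y$ with $ab^ma \le w$ the replacement of $ab^ma$ by $ab^ca$ where $b^m \equiv_L b^c$ yields $x\,(ab^ca)\,y \equiv_L w$, because $\equiv_L$ is a congruence on $\Sigma^*$ and $ab^ma \equiv_L ab^ca$ follows from $b^m \equiv_L b^c$ (the letters $a$ on either side multiply both sides equally). Hence $w \equiv_L w'$ for every word $w'$ obtained from $w$ by replacing some occurrences of factors $ab^ma$ by syntactically equivalent $ab^ca$. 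In particular $w \equiv_L \rf(w)$, since $\rf(w)$ is obtained precisely by such replacements, and every element of $\wh w$ is likewise obtained by such replacements and is therefore $\equiv_L w$.

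Then the equivalences follow immediately from the definition of the syntactic congruence: if $u \equiv_L v$ then $u \in L \iff v \in L$ (take $x = y = 1$ in the defining condition of $\equiv_L$). So $w \in L \iff \rf(w) \in L$, and $w \in L \iff u \in L$ for every $u \in \wh w$. The only slightly delicate point is to phrase the equivalence $w \in L \iff \wh w \subseteq L$ correctly: the direction $\wh w \subseteq L \implies w \in L$ needs $w \in \wh w$, which holds because replacing each $ab^ma$ by the \emph{set} $a[b^m]a$ includes the trivial replacement keeping $b^m$ itself (so $w \in \wh w$, and also $\rf(w) \in \wh w$ as noted in \prref{def:nfwh}); the converse direction $w \in L \implies \wh w \subseteq L$ is exactly the statement that every $u \in \wh w$ satisfies $u \equiv_L w$, hence $u \in L$.

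The one subtlety worth being careful about is that $\wh w$ is defined by replacing \emph{every} factor $ab^ma \le w$ simultaneously by the full congruence class $a[b^m]a$, so an element $u \in \wh w$ may differ from $w$ in many positions at once; I would handle this by induction on the number of positions changed, using at each step that a single-position replacement preserves $\equiv_L$, so that the composite replacement does too. There is a mild notational hazard in that the factor $ab^ma$ may overlap itself or occur in several places; but since $\V \cup \E$ is a code (stated in \prref{sec:graph}) the decomposition of $w \in \G$ into blocks $ab^ia$ and $ab^iaaab^ja$ is unambiguous, so ``replace every factor $ab^ma$'' is well-defined and the replacements at distinct blocks are independent. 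I expect the main (very modest) obstacle is just writing this bookkeeping cleanly; the mathematical content is entirely the observation that $\rf$ and saturation move within $\equiv_L$-classes, which is forced by $b$ generating a cyclic submonoid isomorphic to $C_{t,p}$ in $M_L$.
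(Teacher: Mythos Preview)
Your proposal is correct and follows exactly the approach the paper intends: the paper's own proof reads simply ``Trivial, by definition of the \btptorp,'' and what you have written is a careful unpacking of precisely that triviality---namely that $\equiv_L$ is a congruence, so replacing any $b^m$-block by a congruent $b^c$-block preserves membership in $L$. Your handling of the bookkeeping (induction on the number of replaced blocks, and the observation that the $b$-blocks are unambiguously located because $\V\cup\E$ is a code) is sound, though arguably more than the lemma requires; the reference to \prref{rem:small} is not really needed, since the argument rests only on the congruence property of $\equiv_L$ and the definition of $\rf$ and $\wh w$ in \prref{def:nfwh}.
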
 
\begin{proof} 
 Trivial, by definition of the \btptorp. 
\end{proof} 

\section{Main results}\label{sec:mr}
The main results of the paper are: (1) for  $L\sse \G$ satisfying the \btorp (see \prref{sec:fixR}), there is a regular language $R\sse \G$ with $\rho(L)=\rho(R)$ and (2) 
 for a \cfree language satisfying the \btorp (e.g., any regular language) $R\sse \G$, we have 
 an effective `geometric description' of the graphs in $\rho(R)$. From these representations, we can deduce our classification and (un)decidability results as already mentioned in the introduction. The (un)decidability results are detailed in the next section.
 
 This geometric description is obtained as follows. Using the fact that $R$ is regular (or \cfree satisfying the \btptorp), in a first step, we find effectively  a semi-linear description of $\rho(R)$. In a second step, we compute a finite set of finite graphs. Each member $F$ in that finite family is a retraction  of some possibly infinite graph $F^\infty$.   
 The description of each $G\in \rho(L)$ is given by selecting some  
 $F$ and the cardinality of every fiber. The precise meaning will become clear later. As a consequence of the description, we are able to show various decidability results.  

\subsection{Examples}
 The following examples serve as an introduction to a more general situation we will face later.  
\begin{example}\label{ex:hff} 
  In the following, we let $R\sse\G$ and  
  $t,p\in \N$ with $p\geq 1$ such that $b^n\equiv_R b^{n+p}$ for all $n\geq t$. Moreover, we let $1\leq c <t$ such that $[b^c]=\os{b^c}$. 
  \begin{enumerate} 
   \item \label{ex:vd1} 

   Let $w\in R\sse(ab^c aaa b^n(b^p)^* a)^+\sse \G$ with $t \leq n$. This implies $t \leq n<t+p$ and 
   $w$ contains a factor 
   $ab^c aaa b^d a$ with $t \leq d< t+p$  and $[b^d]= b^{d+\N p}$. 
   We have $w\in(ab^c aaa b^n(b^p)^* a)^m$ for $m=|w|_a/5$.  Hence,   
   $w=(ab^caaab^{d_1}a)\cdots (ab^caaab^{d_m}a)$ where  
   $d_i=n+k_ip$ with $k_i \in \N$ for $1\leq i \leq m$. The  
   set $\sset{d_i}{1\leq i \leq m}$ can have any cardinality  
   $s$ in $\os{1\lds m}$.  
   Therefore, $\rho(w)$ is a single star with at least one ray and at most $m$ rays. If $R$ is finite, then $\cF=\rho(R)$ is an effective finite collection of stars with at least one ray and at most $r$ rays where $r=\max\sset{|w|_a/5}{w\in R}$. 
  
   We claim that $\cF$ is infinite \IFF  there is some $M\geq |M_R|$ such that $(ab^caaab^na)^{M}\in R$. Moreover, if $\cF$ is infinite, then $\cF$  
   is the set of all finite stars with at least one ray. 
   The claim holds if $\sup\sset{|w|_a/5}{w\in R}<\infty$, as in this case $\cF$ is finite.  
   Thus, let $\sup\sset{|w|_a/5}{w\in R}=\infty$. Then there is some $w\in R$   
   such that $ab^caaab^ta$ appears at least $|M_R|$-times as a factor. This implies that there is some $M\geq |M_R|$ such that $(ab^caaab^ta)^{M}\in R$. 
   The claim follows.  
   
   One can show that $S=(aba a ab^2b^*a)^*(aba)$ is locally testable and therefore star-free.  Hence, the set of all finite stars is specified by a star-free subset of $\Sigma^*$.   
\item \label{ex:vd2} 
If $R\sse(ab^c aaa b^cb^+ a)^*ab^ca\sse \G$, then $\rho(R)$ is set of stars with center $ab^c a$ and outgoing rays to vertices $ab^d a$ where $d>c$. Moreover, the following dichotomy holds:
   The set of stars in $\rho(R)$ is either finite or it contains almost all finite stars. Indeed, $\rho(R)$ is a set of stars with center $c$, possibly without rays. If $\rho(R)$ is finite we are done. 
   Otherwise, let $\rho(R)$ be infinite. Then, for each $r \in \N$, there is a star in $\rho(R)$ with more than $r$ rays. This implies that for all $r\in \N$ there is some $\ell\geq r$ and a word 
   $w\in R$ which has more than $\ell$ pairwise different factors $ab^c aaa b^{d_i} a$ with $r\leq d_i$. If $r$ is large enough, then each of these factors 
   can be replaced by a factor $ab^c aaa b^{c_i}a$ where we have that 
   $t< c_i\leq t+p$. This yields a word $w'\in R$. Now, $w'$ is very long as $\ell$ is very large. Hence, we can factorize the word 
   $w'=uw''v$ such that first, $w''$ contains one of these factors $ab^c aaa b^{c_i}a$ and second, $S= u (w'')^+v \sse R$. Since $ab^c aaa b^{c_i}a\equiv_R 
   ab^c aaa b^{c_i +pk}a$ for all $k\in \N$, we conclude that $\rho(S)$ contains almost all stars, i.e., all stars but finitely many that are missing. \qed
\end{enumerate} 
 \end{example} 
 \medskip
The \btptorp is trivially satisfied if $L\sse \G$ is a finite set. An interesting case motivated
 by data compression. As mentioned in \prref{sec:intro}: if  $L$ is finite, then the minimal size of a regular expression for $L$ is never worse than listing all graphs in $\rho(L)$, but it might be exponentially better.
 This type of data compression is important and well-known~\cite{LarMof2000,LohManMen2013}. It is used in  DNA-computing and bio-inspired modeling, frequently, and it is the basis of practical algorithms like RePair.
\begin{figure}
\begin{subfigure}[c]{.5\textwidth}\centering
	\tikzset{every loop/.style={}}
\begin{tikzpicture}
	\tikzset{mystyle/.style={state, inner sep=2pt,minimum 	size=1pt, fill=black, node distance=20pt}}
	\node[mystyle] (0) at (0.5,0) {};
	\node[mystyle] (1) at (1.25, 0) {};
	\node[mystyle] (2) at (1.75,0.4){};
	\node[mystyle] (3) at (2, 1) {};
	\node[mystyle] (4) at (1.75, 1.6) {};
	\node[mystyle] (5) at (1.25, 2) {};
	\node[mystyle] (6) at (0.5, 2) {};
	\node[mystyle] (7) at (0, 1.6) {};
	\node[mystyle] (8) at (-.25, 1) {};
	\node[mystyle] (9) at (0,0.4){};
	
	\node[mystyle] (0') at (0.3,-0.5) {};
	\node[mystyle] (1') at (1.45, -0.5) {};
	\node[mystyle] (2') at (2.2,0.1){};
	\node[mystyle] (3') at (2.5, 1) {};
	\node[mystyle] (4') at (2.2, 1.9) {};
	\node[mystyle] (5') at (1.45, 2.5) {};
	\node[mystyle] (6') at (0.3, 2.5) {};
	\node[mystyle] (7') at (-0.45, 1.9) {};
	\node[mystyle] (8') at (-.75, 1) {};
	\node[mystyle] (9') at (-0.45,0.1){};
	\path
	(0) edge (1)
	(1) edge (2)
	(2) edge (3)
	(3) edge (4)
	(4) edge (5)
	(5) edge (6)
	(6) edge (7)
	(7) edge (8)
	(8) edge (9)
	(9) edge (0)
	(0) edge (0')
	(1) edge (1')
	(2) edge (2')
	(3) edge (3')
	(4) edge (4')
	(5) edge (5')
	(6) edge (6')
	(7) edge (7')
	(8) edge (8')
	(9) edge (9')
	;
\end{tikzpicture}
\subcaption{A full crown around a cycle of length $10$.\;}
\label{fig:crown}
\end{subfigure}\begin{subfigure}[c]{.5\textwidth}\centering
	\tikzset{every loop/.style={}}
\begin{tikzpicture}
	\tikzset{mystyle/.style={state, inner sep=2pt,minimum 	size=1pt, fill=black, node distance=20pt}}
	\node[mystyle] (0) at (0.5,0) {};
	\node[mystyle] (1) at (1.25, 0) {};
	\node[mystyle] (2) at (1.75,0.4){};
	\node[mystyle] (3) at (2, 1) {};
	\node[mystyle] (4) at (1.75, 1.6) {};
	\node[mystyle] (5) at (1.25, 2) {};
	\node[mystyle] (6) at (0.5, 2) {};
	\node[mystyle] (7) at (0, 1.6) {};
	\node[mystyle] (8) at (-.25, 1) {};
	\node[mystyle] (9) at (0,0.4){};
	
	\node[mystyle] (0') at (0.3,-0.5) {};
	%\node[mystyle] (1') at (1.45, -0.5) {};
	\node[mystyle] (2') at (2.2,0.1){};
	\node[mystyle] (3') at (2.5, 1) {};
	%\node[mystyle] (4') at (2.2, 1.9) {};
	\node[mystyle] (5') at (1.45, 2.5) {};
	%\node[mystyle] (6') at (0.3, 2.5) {};
	\node[mystyle] (7') at (-0.45, 1.9) {};
	%\node[mystyle] (8') at (-.75, 1) {};
	\node[mystyle] (9') at (-0.45,0.1){};
	\path
	(0) edge (1)
	(1) edge (2)
	(2) edge (3)
	(3) edge (4)
	(4) edge (5)
	(5) edge (6)
	(6) edge (7)
	(7) edge (8)
	(8) edge (9)
	(9) edge (0)
	(0) edge (0')
	%(1) edge (1')
	(2) edge (2')
	(3) edge (3')
	%(4) edge (4')
	(5) edge (5')
	%(6) edge (6')
	(7) edge (7')
	%(8) edge (8')
	(9) edge (9')
	;
\end{tikzpicture}
\subcaption{A mutation with six cusps, only.}
\label{fig:crownvd}
\end{subfigure}
\caption{An $2$-dimensional illustration of \prref{ex:corona}: 
A small regular expression defines a crown with $n$ cusps (in the picture: $n=10$) as well as an exponential number of possible mutations.}
\end{figure}
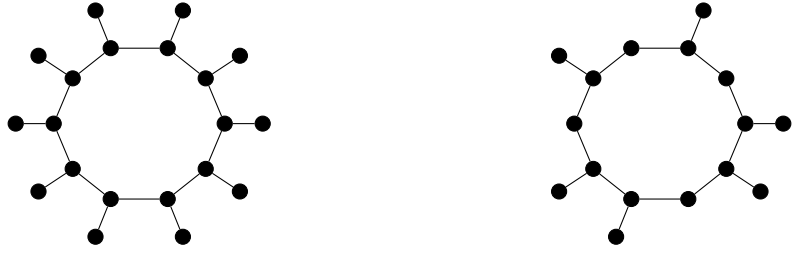

  \begin{example}\label{ex:corona}
Let $G=(V,E)$ be a connected planar graph with vertex set $V=\os{1\lds n}$ with $n\geq 3$ together with an embedding into the $2$-dimensional sphere  For every subset $S\sse \os{n+1\lds 2n}$, let $G_S$ denote the graph 
\(G_S=(V\cup S,E\cup \set{(s,s-n)}{s\in S}).\)
The family $\cC_n=\set{G_S}{S\sse \os{n+1\lds 2n}}$ might contain
exponentially many graphs in $n$. This happens, for example, if $G$ is a cycle of $n$ nodes. Then $\cC_n$ has 
more than  $2^n/2n \in 2^{\OO(n)}$ connected planar graphs. 
If we embed $G$ in the 2-dimensional sphere where the additional edges are spikes pointing out of the sphere, then $G_S$ can be visualized as a discrete model of a
3-dimensional ``crown with at most~$n$ cusps''. A $2$-dimensional representation of a full crown (having all possible cusps) is depicted in \prref{fig:crown}, while 
\prref{fig:crownvd} shows the situation when some cusps were chosen to be removed.
It is straightforward to
write down a $2n$-fold concatenation of finite sets which describes a finite set $L_n\sse \G$ such that $\rho(L_n)=\cC_n$. The size of the corresponding regular expression is $\Oh(n^{2})$. 
Thus, we have a polynomial-size blueprint potentially \emph{producing a family of exponentially many mutations of single 
``corona''}, the Latin word for ``crown''. 
\qed
\end{example}

\subsection{Introducing new alphabets}
After seeing a couple of examples, we introduce certain subsets of $\G$ as alphabets to express subsets of graphs. This
prepares the geometric viewpoint to view a graph as  a point in 
the $d$-dimensional space $\N^d$ if the size of the chosen alphabet is $d$. If $L$ is regular, then the dimension depends on $L$ and it  can be quite large, but it is computable by using the reduced form of words in $L$. Below, all this will be explained in  details.

Let $\ell\in \N$. Depending on $\ell$, we define 
two finite and disjoint sets that we consider as alphabets:
\begin{align}\label{eq:Aell}
   A_\ell &= \sset{ ab^{m}aaab^{n}a }{1\leq  m,n \leq \ell}\sse \E,\\ 
   \label{eq:AellB}
 B_\ell &= \sset{ab^{m}a}{1\leq  m\leq \ell}\sse \V.
  \end{align}
Note that $A_\ell = B_\ell a B_\ell$.  
  By $C_\ell$, we denote the union of $A_\ell$ and $B_\ell$, which is also a finite alphabet with a linear order between letters  given by  
  the following definition:  
\begin{equation}\label{eq:linC} 
   x\leq_\ell y\iff 
   xy\in A_\ell B_\ell \vee (xy\in (A_\ell A_\ell \cup B_\ell B_\ell) \wedge x\slex y). 
\end{equation} 
We have $A_\ell\cap B_\ell=\es$ and $\N^{C_\ell}$ has dimension
$d=\ell(\ell +1)$.
Actually, $C_\ell$ is a code, \ie, 
 $c_1\cdots c_m = d_1\cdots d_n \in \Sigma^*$ with $c_i,d_j\in C_\ell$ implies  
$m=n$ and $c_i=d_i$ for all $1\leq i \leq m$. Hence, $\leq_\ell$ is well-defined by \prref{eq:linC}. 

 \begin{definition}\label{def:AB}  
Let $L\sse \G$ satisfy the \btorp. Then we define 
an alphabet  $C_L\sse C_\ell$  where $\ell$ is large enough such that for every $w\in L$ its reduced form $\rf(w)$ can be written as a word in $C_\ell^*$. To make the definition unique, we choose $C_L$ to be the smallest 
set $C_L\sse \bigcup_{\ell\in \N} C_\ell$ which has this property.
\end{definition}
If the context to $L$ is clear, then we drop the index $L$ and 
we denote by $C$ any subset of some $C_\ell$ such that
 $C_L\sse C \sse C_\ell$. This flexibility is useful if we wish to introduce new vertices or new edges and we need ``fresh'' names in $C$ for them. In order to distinguish the factor ordering 
and the linear ordering defined by \prref{eq:linC}, we denote the latter by $\leq_C$.

The linear order $\leq_C$ on $C$ defines a corresponding 
short-lex ordering  on $C^*$. 
Moreover,  
 if $uxv\in C^+$ with $x\in A$ and $u,v\in \Sigma^*$, then $u,v\in C^*$. The analogue for $y\in B$ does not hold, in general. 
For example, $aba\in B$ and $abaaaba\in C$, but $aaba\notin C$. 
 As $C$ is a code, the inclusion $C\sse \Sigma^*$  yields an embedding
 $h_C:C^+ \to \Sigma^+$. Let $w\in C^*$ and $G=\rho(w)$. 
 If $L= \oi\rho(G)$, then  
 the minimal element in $\oi h_C(L)$ \wrt 
 the short-lex ordering for words in $C^*$ is a word in $A^*B^*$. It is the same as the minimal element in $h_C(\oi h_C(L))$ \wrt the ordering $a<b$.
% Hilfreich?: {Also, $a\rf{[b^n]}a\leq_C ab^na$ for any $n\geq 1$.} 

\subsection{The power of $b$-torsion}\label{sec:rush}
Recall that for any subset $P\sse \Sig^*$ and $c\in \Sig^*$ we denote by $\pi_P(v)\in \N^P$ its Parikh image as defined in \prref{sec:parikh}. 
The following lemma shows a crucial ``downward-closure-property'' used in \prref{thm:allreg}.
\begin{lemma}\label{lem:fred}Let $L$, $C$, and $\rf$ as in \prref{def:nfwh} and in  \prref{def:AB}. 
  Let $v\in C^*$ and $w\in L$ such that $\pi_C(v)\leq \pi_C(\rf(w))$. If $\pi_{\os z}(v)\geq 1$ for all $z\in C$, then we have $\rho(v)\in \rho(L)$.  
\end{lemma}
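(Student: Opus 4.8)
The plan is to show that $v$ itself, or a word equivalent to it modulo the $b$-torsion relation, is already a member of $L$; then \prref{lem:nfwh} finishes the job. Concretely, I would first argue that since $\pi_C(v)\le\pi_C(\rf(w))$ as Parikh vectors over the alphabet $C$, and since $C$ is a code (so that words in $C^*$ are reconstructed uniquely from their factorizations), the multiset of $C$-letters occurring in $v$ is a sub-multiset of the multiset occurring in $\rf(w)$. The hypothesis $\pi_{\{z\}}(v)\ge 1$ for all $z\in C$ guarantees that \emph{every} letter of $C$ still occurs in $v$, in particular every vertex-letter $ab^ma\in B_\ell$ and every edge-letter $ab^maaab^na\in A_\ell$ that occurs in $\rf(w)$ occurs also in $v$. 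Hence $\rho(v)$ and $\rho(\rf(w))$ have exactly the same vertex set and the same edge set, because these sets are determined by which letters of $\V$ and $\E$ appear as factors, and appearing ``in $C^*$'' is the same as appearing as a factor here once we know the word lies in $C^*$. Therefore $\rho(v)=\rho(\rf(w))$.

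Next I would relate $\rho(\rf(w))$ back to $\rho(L)$. By \prref{lem:nfwh}, $w\in L$ implies $\rf(w)\in L$, so $\rho(\rf(w))\in\rho(L)$. Combining with $\rho(v)=\rho(\rf(w))$ gives $\rho(v)\in\rho(L)$, which is exactly the claim. Note that we never needed $v$ itself to lie in $L$; it suffices that $v$ encodes the same abstract graph as a word that does lie in $L$ (namely $\rf(w)$), and the encoding $\rho$ depends only on the set of $\V$- and $\E$-factors, which the Parikh comparison plus the ``all letters present'' condition pins down completely.

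The step I expect to require the most care is the claim ``same multiset of $C$-letters $\Rightarrow$ same set of $\V$-factors and $\E$-factors $\Rightarrow$ same graph.'' One has to be careful that a vertex $ab^ma$ may be ``hidden'' inside an edge factor $ab^maaab^na$: the set of vertices of $\rho(v)$ is the set of $ab^ma$ with $ab^ma\le v$, and such a factor can come either from a $B_\ell$-letter or as a sub-factor of an $A_\ell$-letter. I would handle this by observing that $A_\ell=B_\ell a B_\ell$, so every $A_\ell$-letter present in $v$ contributes its two endpoint vertices, and present-in-$v$ for $A_\ell$-letters is controlled by $\pi_C(v)$; and since $\pi_{\{z\}}(v)\ge 1$ for all $z\in C$, in fact every letter of $C$, hence every vertex name and every edge name encoded by $C$, is realized in $v$. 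The only subtlety is that $v$, viewed as a word in $\Sigma^*$, might create \emph{extra} spurious $\V$-factors at the seams between consecutive $C$-letters; but since $v\in C^*$ and $C$ is a code contained in $\G$, and a vertex factor $ab^ia$ straddling a seam would have to overlap two adjacent code words in a way the code structure of $C_\ell$ forbids (a $B$-letter on the left forces the next symbol after the trailing $a$ not to begin a new $ab^ia$ unless it is itself a legitimate letter), no new vertices arise. Making this seam analysis precise — using that $C$ is a code and the explicit forms $ab^ma$, $ab^maaab^na$ — is the technical heart of the argument; everything else is bookkeeping with \prref{lem:nfwh}.
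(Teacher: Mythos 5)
Your proof is correct, but it takes a genuinely different route from the paper. The paper argues by induction on the length of $w$: if $\pi_C(v)$ and $\pi_C(\rf(w))$ agree, the graphs coincide; otherwise some $z\in C$ occurs strictly more often in $\rf(w)=uzu'zu''$ than in $v$, one occurrence of $z$ is deleted to get $w'=uu'zu''$ (a duplicated code letter does not change the graph), the language is enlarged to $L'=L\cup \wh{w}'$ without changing $\rho(L)$, and the induction hypothesis is applied to $w'\in L'$. You instead observe that the two hypotheses together pin both words down completely: since $\pi_{\os z}(v)\geq 1$ and $\pi_C(v)\leq \pi_C(\rf(w))$, every letter of $C$ is realized as a factor of both $v$ and $\rf(w)$, while $v\in C^*$ and $\rf(w)\in C_L^*\sse C^*$ guarantee (via the code/seam analysis you sketch) that no $\V$- or $\E$-factor arises other than the code letters of $C$ and the endpoints of its edge letters; hence $\rho(v)=\rho(\rf(w))$, and $\rf(w)\in L$ by \prref{lem:nfwh} finishes the proof. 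Your argument is shorter, avoids the induction and the auxiliary language $L'$, and even yields the stronger conclusion $\rho(v)=\rho(\rf(w))$; it also makes transparent why the hypothesis ``$\pi_{\os z}(v)\geq 1$ for all $z\in C$'' is indispensable. The paper's deletion-plus-saturation induction, on the other hand, needs almost none of the factor combinatorics (only that repeating a code letter leaves the graph unchanged) and is closer in spirit to the downward-closure use of the lemma in \prref{thm:allreg}. Two small points of care in your write-up: the ``sub-multiset of $C$-letters'' phrasing should be read at the level of realized factors, since $\pi_C$ counts factor occurrences and a $B$-letter may be counted through occurrences hidden inside $A$-letters rather than as a code letter (you only ever use ``which factors occur at least once,'' so nothing breaks); and you should state explicitly that $\rf(w)\in C^*$, which is exactly what \prref{def:AB} provides and is what lets you say that every letter occurring in $\rf(w)$ belongs to $C$ and hence occurs in $v$.
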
 
\begin{proof}
If $\pi_{\os z}(v)= \pi_{\os z}(\rf(w))$ for all $z\in C$, then 
$\rho(v)=\rho(w)$ and therefore $\rho(v)\in \rho(L)$ because $\rho(w)\in \rho(L)$ thanks to $w\in L$. Thus, we may assume that 
$1\leq \pi_{\os z}(v)< \pi_{\os z}(\rf(w))$ for some $z\in C$. 
Thus, without restriction we have  $\rf(w)=uzu'zu''$ with $z\in C$ and $u,u',u''\in C^*$ and $\pi_C(v)\leq \pi_C(\rf(w'))$ where 
$w'=uu'zu''$. We have $\rf(w')=w'$ by \prref{def:nfwh}. Moreover, a repetition of some $z\in C$ does not change the specified graph. 
Let  $\wh{w}'$ denote the saturation of $w'$. By \prref{lem:nfwh},
we have $\rho(\wh{w}')\sse \rho(L)$. Define $L'=L\cup \wh{w}'$. Then we have 
 $w'\in L'$ and $\rho(L)=\rho(L')$; moreover, $L'$ satisfies the same \btptorp as $L$ does. We can work with the same $C$, too.
Since $w'$ is shorter than $w$, we conclude that  by induction on the length of $w$ that $\rho(v)\in \rho(L')=\rho(L)$.  \end{proof} 
  \begin{theorem}\label{thm:allreg} 
  Let $L\sse \G$ be any language satisfying the \btorp. Then there is a regular 
  set  $R\sse \G$ such that  
%\begin{align*} 
   $\rho(L)= \rho(R)\,.$%\end{align*} 
 \end{theorem}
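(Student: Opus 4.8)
The plan is to reduce $\rho(L)$ to a combinatorial invariant living in a finite-dimensional space and then invoke the semi-linearity machinery. First I would fix the alphabet $C = C_L$ from \prref{def:AB}, so that every $w \in L$ has reduced form $\rf(w) \in C^*$, and consider the set $P = \pi_C(\rf(L)) \subseteq \N^C$ of Parikh-images of reduced forms of words in $L$. The key claim is that $\rho(L)$ is completely determined by $P$: more precisely, writing $\rho$ also for the obvious map on Parikh-vectors $v \in \N^C$ with all coordinates $\geq 1$ (a vector records a multiset of edges/vertices, hence a graph), I would show that
\begin{align*}
\rho(L) = \set{\rho(v)}{v \in \N^C,\ \pi_{\{z\}}(v)\geq 1\ \forall z\in C,\ v \leq v'\text{ for some }v'\in P}.
\end{align*}
The inclusion ``$\supseteq$'' is exactly \prref{lem:fred}. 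For ``$\subseteq$'', given $G \in \rho(L)$, pick $w \in L$ with $\rho(w) = G$; then $\rho(\rf(w)) = G$ as well (repetitions of a factor $z \in C$ and the reduction $b^n \mapsto b^c$ do not change the graph, since the graph only depends on the \emph{set} of factors from $\V \cup \E$ and on syntactic equivalence being compatible with $\rho$), and $v = \pi_C(\rf(w)) \in P$ witnesses membership in the right-hand side with $v' = v$.

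Next I would massage this into a semi-linear description. Let $S$ be the set of vectors $v \in \N^C$ that are $\leq$ some element of $P$ and have all coordinates positive; by the discussion in \prref{sec:parikh}, $S$ is exactly a \emph{positively downward-closed} set — its complement within the positive orthant is upward-closed, hence by Dickson's Lemma has finitely many minimal elements and is semi-linear, so $S$ itself is semi-linear by \prref{prop:gs66parikh66}(\ref{gs66}). Crucially, this argument needs \emph{nothing} about $L$ beyond the \btorp: the \btorp is what guarantees that $\rf$ lands in the \emph{finite} alphabet $C$, so that $P$ (and hence $S$) lives in the \emph{finite}-dimensional space $\N^C$, which is the only place Dickson's Lemma applies. (Note $P$ itself need not be semi-linear when $L$ is not context-free — but we never need $P$, only its positive downward closure.)

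Finally I would convert the semi-linear set $S \subseteq \N^C$ back into a regular language $R \subseteq \G$. For each linear piece $q + \sum_{i} \N p_i$ of $S$, the set of words in $C^*$ with Parikh-image (w.r.t.\ the code $C$) in that piece — ordered, say, in short-lex over $C$ so as to get a canonical representative, or simply taking all orderings — is regular over the alphabet $C$; applying the code embedding $h_C : C^* \to \Sigma^*$ (a length-multiplying morphism, which preserves regularity) and taking the finite union over all linear pieces yields a regular $R \subseteq \G$. One checks $\rho(R) = \set{\rho(v)}{v \in S} = \rho(L)$: a word over $C$ with Parikh-vector $v$ has precisely the factors from $\V\cup\E$ recorded by $v$, so its $\rho$-image is $\rho(v)$, and conversely every $v \in S$ is realized by such a word. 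I expect the main obstacle to be the careful bookkeeping in the first step — verifying that passing to $\rf(w)$, and then to its Parikh-vector, genuinely loses no information about $\rho(w)$, in particular that $C_L$ is chosen large enough that $\rf(w) \in C^*$ for \emph{all} $w \in L$ and that the positivity condition ``all coordinates $\geq 1$'' correctly captures ``$z$ actually occurs as a factor'' (a repeated edge or vertex name must still be counted only once for graph purposes, which is why the \emph{downward} closure with a positivity floor, rather than the raw $P$, is the right object). The semi-linear-to-regular step and the appeal to \prref{lem:fred} are then routine.
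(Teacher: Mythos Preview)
Your argument has a genuine error in the ``$\subseteq$'' direction of your key displayed equality: the claim $\rho(\rf(w)) = \rho(w)$ is \emph{false}. Reducing $b^n$ to $b^c$ can collapse distinct vertex names, and the graph $\rho(w)$ depends on the \emph{set} of factors $ab^ma$ occurring in $w$, not on their $\equiv_L$-classes. Concretely, with threshold $t=2$ and period $p=1$ take $w = ab^2aaab^3a$: then $\rho(w)$ is an edge between two distinct vertices, whereas $\rf(w) = ab^2aaab^2a$ and $\rho(\rf(w))$ is a single vertex with a self-loop. These are not isomorphic, so ``syntactic equivalence being compatible with $\rho$'' fails in the sense you need. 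Downstream, your regular set $R = h_C(R')$ consists only of words in reduced form; if, as in such an example, $C_L$ contains only the self-loop letter $ab^2aaab^2a$, then $\rho(R)$ is just the one-vertex self-loop and misses the $K_2$ in $\rho(L)$ entirely. The paper's repair is a \emph{saturation} step at the end: instead of the embedding $h_C$, it applies the regular substitution $z \mapsto [h_C(z)]$, i.e., sets $R = \bigcup_{w'\in R'} \wh{w'}$, which re-expands each reduced vertex name $ab^ca$ to the full class $a[b^c]a$ and recovers all of $\rho(L)$.

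There is also a smaller gap. You fix a single $C = C_L$ and require $v(z)\geq 1$ for \emph{every} $z \in C$, but an individual $\rf(w)$ need not use every letter of $C_L$, so $\pi_C(\rf(w))$ can have zero coordinates and then no $v$ with full support lies below it --- for $L = \{aba, ab^2a\}$ with $t>2$, your right-hand side is empty while $\rho(L)$ is not. The paper handles this by first splitting $L$ into finitely many pieces according to the support $\{z\in C_L : z \leq \rf(w)\}$, working on each piece with its own sub-alphabet so that the positivity hypothesis of \prref{lem:fred} is genuinely available. With both repairs --- the support-splitting up front and the saturation substitution at the end --- your outline coincides with the paper's proof.
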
 
  \begin{proof}The proof relies on Dickson's Lemma. We have $\rho(L)=\bigcup\set{\rho(\wh v)}{v\in L}$ by \prref{lem:nfwh}.  
  The set $\rho(\rf(L))$ is finite, as $L$ satisfies the \btorp. Thus, there is a finite 
  subset $K\sse L$ such that $\rho(\rf(L))=\bigcup\set{\rho(v)}{v\in K}$. Let $C\sse \E\cup \V$ be the finite subset such that $$C= \set{z\in \E\cup \V}{\exists v\in K: \pi_{\E\cup \V}(v)(z)\geq 1}\,.$$ %Thus, $\pi_{C}$ determines $\pi_{\E\cup \V}$. 
  Splitting $L$ into disjoint subsets and replacing $C$ by various subsets of $C$ we may assume without restriction that for all $w\in L$, the corresponding $C$ contains (after the split) exactly all letters that are factors of $\rf(w)$ and moreover, for all $v\in C^*$, we have  
  $z\leq \rf(w)\iff \pi_{C}(w)(z)\geq 1$. 
  After this modification, there are no vectors in 
  $\pi_C(\rf(L))$ with zero-entries. The crucial observation is stated in  
  \prref{lem:fred}. The lemma tells us that we do not change $\rho(L)$ if we augment   
  $L$ by all words $v\in C^*$ where there is some $w\in L$  
  such that $1\leq \pi_C(v)(z) \leq \pi_C(\rf(w))(z)$ for all $z\in C$. 
  Therefore, we may assume without restriction that $\pi_C(\rf(L))$ 
  is positively downward-closed according to \prref{sec:parikh}.  
  We also explained in \prref{sec:parikh} that Dickson's Lemma implies that $\pi_C(\rf(L))$ is semi-linear.  
  By Parikh's Theorem, see \prref{prop:gs66parikh66}~(\ref{parikh66}), there is a regular set $R'\sse C^*$ of words such that 
  $\pi_C(R')= \pi_C(\rf(L))$.  
  The class of regular sets is closed under regular \subst{}s.  
  The inclusion $C\sse (\V \cup \E)^*$ defines a canonical \hom $h:C^*\to (\V \cup \E)^*$. 
  Hence, if we substitute in $R'$ every letter $z$ by $[h(z)]$, then   
  we obtain $\rho(R)=\rho(L)$, where  
  $R=\bigcup\set{\wh w}{w\in R'}$. 
 \end{proof}

 \begin{corollary}\label{cor:allreg}   
  Let $L\sse \G$ satisfy the \btorp. Then, given a finite graph $G=(V_G,E_G)$ as  an input,  
  it is decidable whether $G\in \rho(L)$.  
 \end{corollary}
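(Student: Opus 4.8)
I would deduce this from \prref{thm:allreg} together with the (separately argued) fact that, for every \emph{regular} language $R\sse\G$, the set of finite graphs $\rho(R)$ is decidable. Indeed, by \prref{thm:allreg} there is a regular $R$ with $\rho(L)=\rho(R)$, so $\rho(L)$ is then a recursive set of graphs. Note that the resulting decision procedure need not be computable from a given specification of $L$ — we only know that such an $R$ \emph{exists} — and this is unavoidable, since membership in $L$ itself may be undecidable (see \prref{sec:bugr}), whereas membership in $\rho(L)$ is not. Hence the plan reduces to the following: given a DFA for a regular $R\sse\G$ and a finite graph $G$, decide whether some $w\in R$ satisfies $\rho(w)\cong G$.

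The first step is a normalisation of the $b$-blocks. From $M_R$ I would compute the threshold $t$ and period $p$ of the letter $b$ (so that $b^n\equiv_R b^{n+p}$ for all $n\geq t$, with $t+p$ minimal), and set $B=t+|V_G|\cdot p$. The claim is: if $G\in\rho(R)$, then it is already witnessed by some $w\in R$ all of whose maximal $b$-blocks have length at most $B$. Starting from any witness $w$ with $\rho(w)\cong G$, let $m_1<\dots<m_n$ (with $n=|V_G|$) be the $b$-labels of the $n$ vertices of $\rho(w)$; keep those $m_i<t$, and replace the remaining ones, one residue class modulo $p$ at a time, by the smallest values that are $\geq t$, lie in the correct class, and keep all $n$ labels pairwise distinct. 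Every new label is then at most $(t+p-1)+(n-1)p<B$. Since $b^{m_i}\equiv_R b^{m_i'}$ for each $i$ and $\equiv_R$ is a congruence, $ab^{m_i}a\equiv_R ab^{m_i'}a$ and $ab^{m_i}aaab^{m_j}a\equiv_R ab^{m_i'}aaab^{m_j'}a$; thus the word $w'$ obtained from $w$ by this consistent relabelling has the same image in $M_R$ as $w$, hence $w'\in R$, and $\rho(w')\cong\rho(w)\cong G$ because the new labels are still distinct.

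The second step turns the question into a finite disjunction of regular emptiness tests. For each of the finitely many injective maps $\lambda:V_G\to\{1,\dots,B\}$ I would write down the language $L_\lambda$ of all $w\in\G$ such that: every maximal $b$-block of $w$ has length in $\lambda(V_G)$; $w$ contains the factor $ab^{\lambda(u)}aaab^{\lambda(v)}a$ for every edge $(u,v)$ of $G$ and the factor $ab^{\lambda(v)}a$ for every isolated vertex $v$ of $G$; and $w$ contains no factor of the form $ab^iaaab^ja$ except those just listed. As these requirements only prescribe a finite set of substrings to be present and a finite set to be absent, $L_\lambda$ is regular. By the normalisation, $G\in\rho(R)$ \IFF\ $R\cap L_\lambda\neq\es$ for some $\lambda$; since $R$ and each $L_\lambda$ are regular, each such emptiness test is decidable, and there are finitely many of them, which finishes the argument.

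The only genuine work, and the place I expect the (mild) difficulty, is to verify that $L_\lambda$ really does pin down ``$\rho(w)$ is the $\lambda$-labelling of $G$'' by these purely local conditions. One must observe that $b$-blocks in a word of $\G$ are rigid — a maximal $b$-block of length $k$ never yields a substring $ab^{k'}a$ with $k'\neq k$ — and that a substring $ab^iaaab^ja$ occurs in $w\in\G$ exactly when two consecutive maximal $b$-blocks of lengths $i$ and $j$ are separated by precisely three $a$'s, which happens exactly inside an edge-factor. Granting this, the set of substrings of $w\in\G$ lying in $\V\cup\E$, and hence $\rho(w)$, is controlled by finitely many local constraints, as used above. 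No new idea beyond \prref{thm:allreg} is required; the remainder is routine bookkeeping.
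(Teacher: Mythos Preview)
Your proof is correct. Both you and the paper begin by invoking \prref{thm:allreg} to pass to a regular $R$ with $\rho(L)=\rho(R)$, but from there the arguments diverge. The paper goes through the reduced-form machinery: it computes the finite set $\rho(\rf(R))$, takes short-lex representatives to form a finite word set $W$, passes to the saturation $\hat W$, and then bounds the \emph{length} of a witness in $\hat W$ by $M=m|V_G||E_G|$, reducing to a finite enumeration of words. You instead bound only the $b$-block \emph{heights} of a witness (by $B=t+|V_G|\cdot p$) via a direct relabelling argument using the syntactic congruence, and then reduce membership to finitely many regular emptiness tests $R\cap L_\lambda\stackrel{?}{=}\es$, one per injective labelling $\lambda:V_G\to\{1,\dots,B\}$; the word length is never bounded, since regularity of $L_\lambda$ absorbs it. Your route is more elementary and self-contained --- it does not need the $\rf$/$\hat{\cdot}$ apparatus --- whereas the paper's route reuses machinery already set up for \prref{thm:allreg}. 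Your explicit remark that the procedure is non-uniform in $L$ (one only knows that $R$ \emph{exists}, not how to compute it from an arbitrary description of $L$) is a worthwhile clarification that the paper leaves implicit.
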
 
 The contents of \prref{cor:allreg} inspired the title of this subsection.
 \begin{proof} 
  By \prref{thm:allreg}, we may replace $L$ by some regular set $R$ where  $\rho(R)=\rho(L)$. 
  \Ip we can calculate the threshold $t$ and the period $p$ such that $R$ satisfies the \btptorp. 
  The set $\rho(\rf(R))$ is finite and effectively computable.  
  For every $F\in \rho(\rf(R))$, we compute the  
  short-lex normal-form $\gam(F)$ as defined above.  
  We obtain a finite set $W$ of words containing all those $\gam(F)$. Thus, the set $\wh W= \bigcup\set{\wh w}{w\in W}$ is effectively regular and it  holds that $\rho(\wh W)= \rho(R)$.  
%Thus, we may assume without restriction that $R=\wh W$.  
  Let $m=\max\set{|aua|}{u\in \Sigma^*\wedge \exists w\in W: aua\leq w}$. Let $M=m|V_G||E_G|$. Hence,  $G\in \rho(R)$ 
  \IFF there is word $w\in \wh W$ of length at most~$M$ such that  
  $G=\rho(w)$.   
 \end{proof} 

\begin{corollary}\label{cor:Liscf}   
  Let $L\sse \G$ be context-free satisfying the \btptorp. Then, we can effectively calculate a regular set $R\sse \G$ such that  
  $\rho(R)=\rho(L)$. 
 \end{corollary}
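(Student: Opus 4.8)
\emph{Proof proposal.}
The plan is to re-run the proof of \prref{thm:allreg} and to observe that every step is effective once $L$ is context-free and the threshold $t$, the period $p$, and the function $n\mapsto\rf[b^n]$ are available (they are, by our standing assumptions on $(b,t,p)$-torsion languages). The context-free case is in fact slightly easier than the general one: in \prref{thm:allreg} semi-linearity of the relevant Parikh image had to be extracted from Dickson's Lemma through a positive-downward-closure argument, whereas here it is delivered directly by \prref{prop:gs66parikh66}~(\ref{parikh66}), so neither the splitting of $L$ into pieces nor the downward closure is needed. The effectivity tools beyond Parikh's theorem are: context-free languages are effectively closed under rational transductions, and regular languages are effectively closed under regular substitutions. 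The one spot that needs a word of care — and which I expect to be the only genuinely new ingredient relative to \prref{thm:allreg} — is the observation that $\rf$ itself is a rational transduction, so that applying it to $L$ preserves context-freeness effectively.

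Concretely, I would fix the finite alphabet $C=C_{t+p-1}$ from \prref{eq:Aell} and \prref{eq:linC}. Since reduced forms have all their $b$-blocks of length at most $t+p-1$, we have $\rf(L)\sse C^*$, regarding $C$ as a code over $\Sigma$. The map $\rf\colon\G\to C^*$ is computed by a finite transducer — inside each maximal block of $b$'s it keeps a count capped at $t$ and then taken modulo $p$, and it emits the correspondingly shortened block — hence $\rf$ is a rational transduction and $K:=\rf(L)$ is effectively context-free. By \prref{prop:gs66parikh66}~(\ref{parikh66}) the Parikh image $S:=\pi_C(K)\sse\N^C$ is effectively semi-linear; from a semi-linear description of $S$ one writes down a regular $R'\sse C^*$ with $\pi_C(R')=S$ (for a linear piece $q+\sum_i\N p_i$ take $w_q\,(w_{p_1})^*\cdots(w_{p_k})^*$, where $w_v\in C^*$ is a fixed word with $\pi_C(w_v)=v$, and take the union over the finitely many pieces). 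Finally, as at the end of the proof of \prref{thm:allreg}, put $R=\bigcup_{w\in R'}\wh w$; this is the image of $R'$ under the regular substitution $z\mapsto[z]$, where $[ab^ma]=a[b^m]a$ with $[b^m]=\os{b^m}$ if $m<t$ and $[b^m]=b^m(b^p)^*$ otherwise, and $[ab^maaab^na]=a[b^m]aaa[b^n]a$. Hence $R$ is effectively regular, and $R\sse\G$ because lengthening the $b$-blocks of a word of $\G$ keeps it in $\G$.

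It remains to verify $\rho(R)=\rho(L)$, and here \prref{lem:nfwh} is all that is needed. First, for $w\in C^*$ the graph set $\rho(\wh w)$ depends only on $\pi_C(w)$: reordering the code-letters of $w$ does not change the specified graph, and the multiplicity of a code-letter merely bounds how many copies of the corresponding vertex or edge the saturation can produce; so $\rho(\wh{\cdot})$ factors through $\pi_C$ as a map $\Psi\colon\N^C\to 2^{\cG}$. Next, by \prref{lem:nfwh} we have $v\in L\iff\wh v\sse L$, whence $\rho(L)=\bigcup_{v\in L}\rho(\wh v)$; and $\wh v=\wh{\rf(v)}$, so $\rho(\wh v)=\Psi(\pi_C(\rf(v)))$. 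Therefore $\rho(L)=\bigcup_{u\in\pi_C(K)}\Psi(u)=\bigcup_{u\in\pi_C(R')}\Psi(u)=\bigcup_{w\in R'}\rho(\wh w)=\rho(R)$, as required. The main obstacle is thus entirely bookkeeping: identifying $\rf$ and the passage $L\mapsto\rf(L)$ as a context-free-preserving operation, and noting that $\rho(\wh{\cdot})$ on $C^*$ depends only on Parikh images; everything else is a line-by-line effectivisation of \prref{thm:allreg}.
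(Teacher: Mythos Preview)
Your proof is correct and takes essentially the same route as the paper: pass to the code alphabet $C$, take the Parikh image of $\rf(L)$ (effectively semi-linear by \prref{prop:gs66parikh66}~(\ref{parikh66})), realise it by a regular $R'\sse C^*$, and apply the saturation substitution $z\mapsto[z]$. The one cosmetic difference is that the paper obtains $\rf(L)$ more simply as $h^{-1}(L)$ via closure of context-free languages under inverse \hom{} (a word of $L$ lies in $C^*$ exactly when all its $b$-blocks already have length at most $t+p-1$, in which case it coincides with its own reduced form), rather than building a transducer for $\rf$; the paper also omits your explicit check that $\rho(\wh{\cdot})$ on $C^*$ factors through $\pi_C$, referring instead back to the proof of \prref{thm:allreg}.
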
 
 \begin{proof} 
  Let $C=\set{z\in \V\cup \E}{\forall m\in \N: ab^ma \leq z \implies m\leq t+p-1}$. The   
  inclusion of $C\sse (\V\cup \E)^*$ defines  
  a \hom $h$ from the free monoid $C^*$ to $(\V\cup \E)^*$. 
  Hence, $L'= \oi h(L)$ is effectively context-free. Therefore,  
  $\pi_C(L')= \pi_C(\oi h(L))$ is effectively semi-linear.  
  The semi-linear set can be represented by a regular language $R'\sse C^*$.  
  As in the proof of \prref{thm:allreg}, we obtain $R$ as the image of $R'$ under the regular \subst which replaces every letter $z\in C$ by $[h(z)]$. 
 \end{proof} 

 Let $R\sse \G$ be regular. It is well-known that there might be a much more concise representation by some \cfree language $K\sse \G$ such that $\pi_C(K)=\pi_C(R)$ and hence $\rho(K)=\rho(R)$. Therefore, we might describe graph families even more concisely using context-free grammars than using NFAs.

\subsection{Switching the alphabet and Parikh images}\label{sec:newalp} 
 By \prref{thm:allreg}, we know that regular languages suffice to 
 describe all sets $\rho(L)$ where $L\sse \G$ satisfies the \btorp.  
 Therefore, we restrict ourselves to regular languages. Throughout this section, $R\sse \G$ denotes a regular language. Hence, we can calculate a threshold $t$ and a period $p\geq 1$ such that  
 $R$ satisfies the \btptorp. 
 Since $R$ is regular, the set $L= \oi h_C(R)\cap A^*B^*$ is regular, too;  its \emph{Parikh-image} $\pi_C(L)\sse \N^{C}$ is effectively semi-linear. (This is always true as soon as $R$ and hence $L$ is context-free.) 
 Hence, for some finite index sets $J$ and $I_j$ we can write  
 \begin{equation}\label{eq:piL} 
  \pi_C(L)=\bigcup_{j\in J}\left
(q_j + \sum_{i\in I_j}\N p_i\right
)\,, 
 \end{equation} 
 where $q_j, p_i \in \N^{C}$ are vectors. Splitting $\pi_C(L)$ into more linear sets by making the index set  
 $J$ larger and the sets $I_j$ smaller (if necessary), we can assume without restriction that, for all $j\in J$ and $z\in C$, we have  
 $\sum_{i\in I_j}p_i(z) \leq q_j(z)$. To see this, let $1\in I_j$. Then,  we have 
 \begin{align*} 
  q_j + \sum_{i\in I_j}\N p_i=  \left
(q_j + \sum_{i\in I_j\sm \os 1}\N p_i\right
) 
  \cup  \left
(q_j + p_1 + \sum_{i\in I_j}\N p_i\right
)\,. 
 \end{align*}

 Splitting $L$ into even more but finitely many cases, we can assume without restriction  
 (for simplifying the notation) that the set $J$ is a singleton.  
 Thus,  $\pi(L)=q + \sum_{i\in I}\N p_i$ for some  
 $q, p_i \in \N^{C}$ such that $\sum_{i\in I}p_i(z) \leq q(z)$ for all  $z\in C$.  
 Moreover, making $A,B,C$ perhaps smaller, we may assume without restriction that $q(z)\geq 1$ for all $z\in C$ and $C=A\cup B$.  
 (A similar argument was used in the proof of \prref{thm:allreg} above.) 
 
 In order to understand the set of graphs in $\rho(R)$, it suffices to understand the set of finite graphs defined by linear sets of the  
 form  $S=q + \sum_{i\in I}\N p_i\sse \N^C$, where $q(z)\geq 1$ for all $z\in C$ and $\sum_{i\in I}p_i\leq q$.  
 For that purpose, we let $r= \sum_{i\in I}p_i\leq q$ and we define a function $\alp:C\to \N_\infty$ as follows.  
 \begin{equation}\label{eq:alp} 
  \alp(z)= \begin{cases} 
   q(z)& \text{if } r(z) = 0 \wedge \exists m\in \N: t\leq m \wedge ab^ma\leq z\\  
   \infty & \text{if } r(z) \geq 1 \wedge \exists m\in \N: t\leq m \wedge ab^ma\leq z\\  
   1& \text{otherwise. That is: } \forall m\in \N: ab^ma\leq z \implies m<t.  \end{cases} 
 \end{equation} 
 For all $z\in C$, let $L_z\sse \Sigma^*$. Then, we introduce 
the following notation. 
\begin{align}\label{eq:Lz} 
\prod_{z\in C}L_z= L_{z_1} \cdots L_{z_{\abs {C}}} 
\end{align} 
 where $z_i\leq z_j$ for all $i\leq j$ according to the linear order defined in \prref{eq:linC}. Observe that $\prod_{z\in C}L_z$ 
 is regular if all $L_z$ are regular.  
 With this notation, we define  regular sets $R_\alp \sse \Sigma^*$ and $L_\alp\sse \Sigma^*$ by  
 \begin{align}\label{eq:RLalp} 
  R_\alp=\prod_{z\in C}z^{\alp(z)} \quad \text{and} \quad  L_\alp=\prod_{z\in C}[z]^{\alp(z)} 
 \end{align} 
%%%%%%%%%%%%%%%% 
Here and in the following, $L^\infty$ is just another notation for $L^+$ if $L$ is any set of words.
%%\footnote{This is not standard in formal language theory, but it avoids some overloading of notation as~$+$ is also used for addition.}.
% 
\begin{theorem}\label{thm:RL}The sets $R_{\alp}$ and $L_{\alp}$ are regular sets with $R_{\alp}\sse L_{\alp}$  
  and $\rho(L_{\alp}) = \rho(R)$. 
 \end{theorem}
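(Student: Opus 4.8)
The plan is to prove the three assertions of \prref{thm:RL} in turn: regularity of $R_\alp$ and $L_\alp$, the inclusion $R_\alp\sse L_\alp$, and the graph-equality $\rho(L_\alp)=\rho(R)$. Regularity is essentially immediate from the definitions: each factor $z^{\alp(z)}$ is either $z^{q(z)}$, $z^+$, or $z$ (for the three cases of \prref{eq:alp}), all of which are regular; likewise $[z]^{\alp(z)}$ is built from the regular set $[z]=a[b^m]a$ or $a[b^m]aaa[b^n]a$, again with exponent in $\{q(z),+,1\}$. Since $C$ is finite and a finite concatenation of regular sets is regular (as noted after \prref{eq:Lz}), both $R_\alp$ and $L_\alp$ are regular. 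For $R_\alp\sse L_\alp$, I would observe factor-by-factor that $z^k\in [z]^k$ because $z\in[z]$, and that the exponents match; here one must be slightly careful that the exponent $\alp(z)$ is the same object on both sides, which it is by \prref{eq:RLalp}.

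The substantive part is $\rho(L_\alp)=\rho(R)$, and I would split it into $\rho(L_\alp)\sse\rho(R)$ and $\rho(R)\sse\rho(L_\alp)$. For the inclusion $\rho(R)\sse\rho(L_\alp)$: after all the normalizations of \prref{sec:newalp} we have $\pi_C(L)=q+\sum_{i\in I}\N p_i$ with $L=\oi h_C(R)\cap A^*B^*$ and $\rho(L)=\rho(R)$. Given $w\in L$, write $\pi_C(w)=q+\sum_{i\in I}n_i p_i$. Now $\rf(w)$ is obtained by replacing each factor $b^m$ by $b^c$ with $b^c\equiv_L b^m$; on the level of Parikh vectors over $C$ this amounts to replacing each letter $z$ by a corresponding reduced letter, and since we arranged $C=A\cup B$ to contain exactly the reduced letters, $\pi_C(\rf(w))$ is supported on $C$ with each coordinate $\geq 1$. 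The point of the definition of $\alp$ in \prref{eq:alp} is precisely that a coordinate $z$ with $t\le m\le z$ (so $[b^m]$ is a nontrivial periodic class) and $r(z)=\sum_i p_i(z)\ge 1$ can be pumped arbitrarily via the relation $b^m\equiv_L b^{m+p}$, hence $\alp(z)=\infty$ records ``any positive multiplicity works''; if $r(z)=0$ the multiplicity is pinned to $q(z)$; and if the only $b$-blocks in $z$ are short ($m<t$), the class $[b^m]$ is a singleton so again no freedom — but here the subtlety is that such a letter can still appear with multiplicity $q(z)$ in $w$, yet in $\rho$ repeated letters do not add vertices or edges, so multiplicity $1$ suffices for the graph. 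Using \prref{lem:nfwh} (so $w\in L\iff\rf(w)\in L$) and \prref{lem:fred} (the downward-closure: any $v\in C^*$ with $1\le\pi_C(v)\le\pi_C(\rf(w))$ has $\rho(v)\in\rho(L)$), one checks that every word produced by $L_\alp$ lies — up to the $\equiv_L$-saturation and up to harmless letter repetitions — in $L$ (or has the same $\rho$-image as some word in $L$), and conversely every $\rho(w)$, $w\in R$, equals $\rho(v)$ for a suitable $v$ in $L_\alp$ obtained by choosing, for each $z$ with $\alp(z)=\infty$, the multiplicity actually occurring in $\rf(w)$, and for each $z$ with $\alp(z)=1$ but higher multiplicity in $\rf(w)$, collapsing to multiplicity $1$ without changing the graph.

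For the reverse inclusion $\rho(L_\alp)\sse\rho(R)$: take any $u\in L_\alp$, so $u=\prod_{z\in C}u_z$ with $u_z\in[z]^{\alp(z)}$. By \prref{lem:nfwh} it suffices to produce a word in $L$ with the same $\rho$-image, and for that I would build $v\in C^*$ with $\pi_C(v)$ obtained from $q$ by keeping $q(z)$ in the $r(z)=0$ coordinates, by putting an arbitrary positive value $\le$ (the multiplicity of $z$-blocks used in $u$, after replacing periodic classes by representatives) in the $\alp(z)=\infty$ coordinates — legal because $\pi_C(L)$ is closed under adding the $p_i$'s in those coordinates since $r(z)\ge1$ means some $p_i(z)\ge 1$ — and $1$ in the remaining coordinates, which is $\le q(z)$ since $q(z)\ge1$. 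Then $\pi_C(v)\le\pi_C(\rf(w))$ for a suitable $w\in L$ (choose $w$ realizing enough multiplicity of the $\infty$-coordinates by pumping the $p_i$'s), so \prref{lem:fred} gives $\rho(v)\in\rho(L)=\rho(R)$, and by construction $\rho(v)=\rho(u)$ since the two words differ only by replacing periodic $b$-blocks by $\equiv_L$-equivalent ones (which does not alter $\rho$, as $\rho$ only reads the $b$-block lengths — wait, it does read them, so here one must instead note that $u$ and $\rf$ applied to a word in $L$ use the \emph{same} letters of $C$, each of which is a fixed element $ab^ma$ or $ab^maaab^na$ with $m,n\le\ell$, and the saturation $[z]=a[b^m]a$ contributes, via $\rho$, all the graphs got by reinterpreting that block as any length in $[b^m]$ — but all such lengths $\ge t$ give the \emph{same graph up to isomorphism only if $p=1$}).

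The main obstacle I expect is exactly this last point: being careful that when $p>1$ the different representatives in a class $[b^m]$ give genuinely different (non-isomorphic) graphs, so one cannot freely swap them — which is why $L_\alp$ is defined with the classes $[z]$ rather than the single letters, and why the statement is $\rho(L_\alp)=\rho(R)$ rather than $L_\alp=R$. The correct bookkeeping is to track that a word $u\in L_\alp$ decomposes as $\prod_z u_z$ with $u_z$ a product of $\alp(z)$ words each in the \emph{single} class $[z]$, and these individual words may realize different block-lengths within $[z]$; the matching word in $L$ is obtained by applying the saturation/pumping argument of \prref{thm:allreg} to each such block independently, using that $\pi_C(\rf(L))$ was arranged to be positively downward-closed and that $b^m\equiv_L b^{m+kp}$ for all $k$ when $m\ge t$. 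Organizing this so the two inclusions are genuinely symmetric, rather than repeating the machinery of \prref{thm:allreg}'s proof twice, is the part that needs the most care.
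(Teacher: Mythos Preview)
Your three-part plan matches the paper's, and the treatment of regularity and of $R_\alp\sse L_\alp$ is the same.

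For the substantive equality $\rho(L_\alp)=\rho(R)$, the paper is much shorter than your outline because it uses one observation you do not: the language $L_\alp$ itself satisfies the \btptorp (it is built from the saturated classes $[z]$), so \prref{lem:fred} can be applied with $L_\alp$ in the role of the ambient language. For the direction $\rho(R)\sse\rho(L_\alp)$ the paper therefore just takes $v\in R$, observes that $\pi_C(\rf(v))\in q+\sum_i\N p_i$, and notes that for $m$ large enough $q+mr\in\pi_C(\rf(L_\alp))$ dominates $\pi_C(\rf(v))$; one invocation of \prref{lem:fred} finishes. You instead try to construct an explicit matching word in $L_\alp$ coordinate by coordinate, which is correct but longer and is where your exposition starts to tangle the two directions together.

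For the direction $\rho(L_\alp)\sse\rho(R)$ the paper is very terse (essentially ``by definition''), and your worry about $p>1$ is precisely the content that has to be supplied there. Your final paragraph resolves it correctly: a word $u\in L_\alp$ uses only finitely many distinct representatives from each class $[z]$ (at most $\alp(z)$ when that is finite); picking $w\in L$ with $\pi_C(w)=q+Nr$ for $N$ large gives at least that many copies of each reduced letter $z$; and since $R$ is closed under the saturation $\wh{\,\cdot\,}$, those copies can be independently lifted to the specific representatives occurring in $u$, producing $v\in\wh w\sse R$ with the same factor set, hence $\rho(v)=\rho(u)$. So your diagnosis and your fix are right; the only improvement is to present it as a single saturation step rather than as a reprise of the whole proof of \prref{thm:allreg}. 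The parenthetical panic (``wait, it does read them'') is unnecessary once you remember that $[z]$ ranges over \emph{words}, not over graphs, and that the matching $v$ is found in $R$, not in $R_\alp$.
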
 
\begin{proof} 
Without restriction, \prref{eq:piL} reads as  
  $\pi_C(L) = q + \sum_{i\in I} \N p_i$. 
  As $z\in [z]$,  the inclusions $R_{\alp}\sse L_{\alp}$ and  
  $\rho(R_{\alp}) \sse \rho(R)$ follow by definition.  
  For the converse, let $v\in R$ and $G=\rho(v)$. Choose some $w\in L_\alp$ with $\pi_C(w) = q + \sum_{i\in I} m_i p_i$. Choosing $m\in \N$ large enough, we find  
%\begin{align*} 
   $\pi_C(v) \leq q + mr \in \pi_C(\rf(L_{\alp}))\,$ where, as above, $r=\sum_{i\in I} p_i$.
%\end{align*} 
  Hence, we can apply \prref{lem:fred} to finish the argument.
 \end{proof} 

\subsection{Marked graphs}\label{sec:marked} 
 In the next steps, we define for $\alp$ a finite family  
 of finite graphs $\cF_\bet$, together with a marking on vertices and edges. Each member of $\cF_\bet$ is just a singleton of a marked graph. The family $\cF_\bet$  still depends on $\alp$, only, but it might be more indices $\bet$ than indices $\alp$ for defining $\cF_\alp$. 
Then, for each $F\in \cF_\bet$, we define a possibly infinite graph $F^\infty$, using the notion of a marked graph.  \begin{definition}\label{def:Falp} 
  For $z\in C$ let $\alp'(z)= \alp(z)$ if $\alp(z)<\infty$ and $\alp'(z)= 1$, otherwise.  
  We let $w_{\alp}=\prod_{z\in C}z^{\alp'(z)}$, and we define $\cF_\alp$ by $\cF_\alp=\rho(R_{\alp})$ where $R_\alp$ is as in \eqref{eq:RLalp}.  
 \end{definition} 

 Since $w_\alp$ is a single word, the set $\cF_\alp$ is  a finite set of finite graphs.  
 We are now defining the crucial notion of a marked graph, 
where some vertices and edges are marked.

\begin{definition}\label{def:mark} 
A \emph{marked graph} is a tuple $F=(V_F,E_F,\mu)$, where $(V_F,E_F)$ is a finite graph and $\mu\sse V_F\cup E_F$ denotes the set of marked vertices and edges.
  Isolated vertices may appear, but if an isolated vertex is marked, then there is exactly one isolated vertex.  
  We also require that whenever an edge $(u,v)$ is marked, then at least one of its endpoints is marked, too. 
  A \emph{marked edge-graph} is a marked graph without isolated vertices.  
 \end{definition} 
In a first step we let the word $w_\alp$ and the set $R_\alp$  as defined above in \prref{def:Falp}. Performing another splitting, we can write $R_\alp$ as a finite union of sets 
$R_{\alp,k}$  such that each $R_{\alp,k}$ defines exactly one graph $(V_{\alp,k},E_{\alp,k})$. By making, if necessary, 
the alphabet $C$ as well as the threshold $t$ larger (but keeping the period $p$), we assume that for each $(\alp,k)$, there is some $\bet:C\to \N_\infty$ such that 
\begin{equation}\label{eq:bet}
\rho(\prod_{z\in C}z^{\bet(z)})= (V_{\alp,k},E_{\alp,k})
\end{equation}
where $\bet(z)\in \os{0,1,\infty}$ with the requirements that first, 
$\bet(z)=\infty \implies [z]\neq \os z$ and second, 
if $z=uav$ denotes an edge with $[z]\neq \os z$, then 
$\bet(z)=\infty \iff ([u]\neq \os {u}\vee [v]\neq \os {v})$.
In order to simplify the notation, we concentrate on one 
$(\alp,k)$ and we denote $(V_{\alp,k},E_{\alp,k})$ as $(V_{F'},E_{F'})$.

We are ready to define the marking for $(V_{F'},E_{F'})$.
We mark exactly those vertices and edges $z$ in $V_{F'}\cup E_{F'}$ where $\bet(z)= \infty$. Note that therefore an edge is marked only if at least one of its endpoints is marked. It might be, however, that there is an unmarked edge $(u,v)$ where $u$ and $v$ are marked. 
In the final step, if an isolated vertex is marked, then remove all isolated marked vertices  
 $y$  except one isolated vertex $y$ which is marked. \Ip after that procedure if a marked isolated vertex $y$ appears, then $\bet(y)= \infty$. This reduces the number of marked graphs without changing 
 the set $\rho(R)$ we were interested in.
 
We denote by $\cF_\bet$ this family of marked graphs which was defined through the function $\alp$.

Let us switch to a more abstract viewpoint.  
 We let $\cF$ be any finite family of marked graphs. For each  
 $F=(V_F,E_F,\mu)\in \cF$, we define a possibly infinite graph $F^\infty$  
 where $(V_F,E_F)$ appears as an induced subgraph, and we define a  
 family $\cG_F$ of finite graphs. 
  In the application, we consider finitely many $\cF_\bet$,  and
 then we study $\bigcup\set{\cG_F}{F\in \cF_\bet}$, where $F=(V_F,E_F,\mu)$ is the marked graph obtained by the canonical marking procedure above (which might have removed isolated vertices).  
 It turns out that, for a full description of $\rho(R)$, it is enough to describe sets  
 $\cG_F$ for marked graphs $F=(V_F,E_F,\mu)$. This requires to define~$F^\infty$. 

 \begin{definition}\label{def:Falpinf} 
  Let $F=(V_F,E_F,\mu)$ be a marked graph as in  \prref{def:mark}. %Without restriction, we have $V_F=V\times \os 0$ and $V_F=E\times \os 0$. 
  Then, the  graph $F^\infty=(V_F^\infty,E_F^\infty)$  
  is defined as follows. 
  \begin{align*} %\label{eq:vf} 
   V^\infty_F&=V_F\times \os 0 \cup\bigcup_{u\in V_F}\set{(u,k)}{u \text{ is marked } \wedge k\in \N}\,,\\ 
   E^\infty_F&= E_F\times \os 0 \cup \{((u,k),(v,\ell))\in V_F^\infty\times V_F^\infty\mid (u,v)\in E_F\wedge (u,v) \text{ is marked}\} \,,
  \end{align*} 
with $E_F\times \os 0=\{((u,0),(v,0))\mid (u,v)\in E_F\}$. 
  The family $\cG_F$ is the set of finite subgraphs of $F^\infty$ 
   containing $(V_F\times \os 0,E_F\times \os 0)$ as an induced subgraph.    
 \end{definition} 
 Observe that $F^\infty=(V_F,E_F)$ \IFF there is no marking, \ie, if $\mu=\es$.  
 We embed $F$ into $F^\infty$ by a graph \morph $\gam$  
 which maps each vertex $u\in V_F$ to the pair $\gam(u) = (u,0)\in V^\infty_F$. The projection onto the first component $\phi(u,k)=u$ yields a retraction for every $G\in \cG_F$ with retract $F$. If no isolated vertex is marked, then  $F^\infty$ has at most $|V_F|$ isolated vertices, but if there are marked vertices, then for every sufficiently large $k$, there is some graph in  
 $\cG_F$ which has exactly $k$ isolated vertices. In order to understand the graphs in $\cG_F$ (which is our goal), it is enough to understand the graphs~$G$ satisfying  
 $F\leq G\leq F^\infty$. For $F=F^\infty$, we have the full information about that set. Thus, we focus on $F\neq F^\infty$. 
 \prref{thm:Fiffi} shows that   
 $\rho(R)$ is rather rich as soon as some  
 $F\in \cF_\bet$ satisfies $F\neq F^\infty$. 
\begin{figure}[h]\centering
\tikzset{every loop/.style={}}
\begin{tabular}{ccccc}
%%(a) infinite null graph
	\begin{tikzpicture}
		\tikzset{mystyle/.style={state, inner sep=2pt,minimum 	size=1pt, fill=black, node distance=20pt}}
\tikzset{markedstyle/.style={state, inner sep=.5pt,minimum 	size=1pt, fill=white, node distance=20pt}}
		\node[mystyle] (0) {};
		\node[mystyle, above of = 0] (1) {};
		\node[mystyle, above of = 1] (2) {};
		\node[mystyle, above of = 2] (3) {};
		\node[above of = 3] (dots) {$\vdots$};
\node[markedstyle,below =of 0](m){$\ast$};
 %edge [out=140,in=40,draw=white!0,loop]();
	\end{tikzpicture}
&
%%(b) infinite star
\begin{tikzpicture}
	\tikzset{mystyle/.style={state, inner sep=2pt,minimum 	size=1pt, fill=black, node distance=20pt}}
\tikzset{markedstyle/.style={state, inner sep=.5pt,minimum 	size=1pt, fill=white, node distance=20pt}}
	\node[mystyle] (0) {};
	\node[mystyle, above of = 0] (1) {};
	\node[mystyle, above of = 1] (2) {};
	\node[mystyle, above of = 2] (3) {};
	\node[above of = 3] (dots) {$\vdots$};
	\node[mystyle] (c) at (1, 0) {};
\node[markedstyle,below =of 0](m0){$\ast$};
%  edge [out=-140,in=-40,draw=white,loop]();
\node[mystyle,below =of c](m1){};
	\path[line width=3pt] (m0) edge[draw=black!40] (m1);
	\path 
	(0) edge (c)
	(1) edge (c)
	(2) edge (c)
	(3) edge (c)
	(dots) edge[dotted] (c);
\end{tikzpicture}
&
%%(c) infinite complete bipartite graph
\begin{tikzpicture}
	\tikzset{mystyle/.style={state, inner sep=2pt,minimum 	size=1pt, fill=black, node distance=20pt}}
\tikzset{markedstyle/.style={state, inner sep=.5pt,minimum 	size=1pt, fill=white, node distance=20pt}}
	\node[mystyle] (0) {};
	\node[mystyle, above of = 0] (1) {};
	\node[mystyle, above of = 1] (2) {};
	\node[mystyle, above of = 2] (3) {};
	\node[above of = 3] (dots) {$\vdots$};
	\node[mystyle] (r0) at (1,0){};
	\node[mystyle, above of = r0] (r1) {};
	\node[mystyle, above of = r1] (r2) {};
	\node[mystyle, above of = r2] (r3) {};
	\node[above of = r3] (rdots) {$\vdots$};
\node[markedstyle,below =of 0](m0){$\ast$}; 
%edge [out=-140,in=-40,draw=white,loop]();
\node[markedstyle,below =of c](m1){$\ast$};
	\path[line width=3pt] (m0) edge[draw=black!40] (m1);
	\path 
	(0) edge (r0)
	(0) edge (r1)
	(0) edge (r2)
	(0) edge (r3)
	(0) edge[dotted] (rdots)
	(1) edge (r0)
	(1) edge (r1)
	(1) edge (r2)
	(1) edge (r3)
	(1) edge[dotted] (rdots)
	(2) edge (r0)
	(2) edge (r1)
	(2) edge (r2)
	(2) edge (r3)
	(2) edge[dotted] (rdots)
	(3) edge (r0)
	(3) edge (r1)
	(3) edge (r2)
	(3) edge (r3)
	(3) edge[dotted] (rdots)
	(dots) edge[dotted] (r0)
	(dots) edge[dotted] (r1)
	(dots) edge[dotted] (r2)
	(dots) edge[dotted] (r3)
	(dots) edge[dotted] (rdots);
\end{tikzpicture}
&
%%(d) infinite clique
\begin{tikzpicture}[every loop/.style={}]
	\tikzset{mystyle/.style={state, inner sep=2pt,minimum 	size=1pt, fill=black, node distance=20pt}}
\tikzset{markedstyle/.style={state, inner sep=.5pt,minimum 	size=1pt, fill=white, node distance=20pt}}
	\node[mystyle] (0) at (4.5,0){}        edge [out=118,in=62,draw=black,loop]();
	\node[mystyle, above of = 0] (1) {} edge [out=118,in=62,draw=black,loop]();
	\node[mystyle, above of = 1] (2) {} edge [out=118,in=62,draw=black,loop]();
	\node[mystyle, above of = 2] (3) {} edge [out=118,in=62,draw=black,loop]();
	\node[above of = 3] (dots) {$\vdots$};
\node[markedstyle,below =of 0](m0){$\ast$} edge [out=140,in=40,draw=black!40,line width=3pt,loop] ();
%\node[left of =0](x){};
%	\node[mystyle] (r0) at (6.5,0){};
%	\node[mystyle, above of = r0] (r1) {};
%	\node[mystyle, above of = r1] (r2) {};
%	\node[mystyle, above of = r2] (r3) {};
%	\node[above of = r3] (rdots) {$\vdots$};
%	\path 
%	(0) edge (r0)
%	(0) edge (r1)
%	(0) edge (r2)
%	(0) edge (r3)
\path
%	(0) edge[dotted] (rdots)
	(0) edge (1)
	(0) edge[bend left] (2)
	(0) edge[bend left] (3)
	(0) edge[bend left, dotted] (dots)
%	(r0) edge (r1)
%	(r0) edge[bend right] (r2)
%	(r0) edge[bend right] (r3)
%	(r0) edge[bend right, dotted] (rdots)
%	(1) edge (r0)
%	(1) edge (r1)
%	(1) edge (r2)
%	(1) edge (r3)
%	(1) edge[dotted] (rdots)
	(1) edge (2)
	(1) edge[bend left] (3)
	(1) edge[bend left, dotted] (dots)
%	(r1) edge (r2)
%	(r1) edge[bend right] (r3)
%	(r1) edge[bend right, dotted] (rdots)
%	(2) edge (r0)
%	(2) edge (r1)
%	(2) edge (r2)
%	(2) edge (r3)
%	(2) edge[dotted] (rdots)
	(2) edge (3)
	(2) edge[bend left, dotted] (dots)
%	(r2) edge (r3)
%	(r2) edge[bend right, dotted] (rdots)
%	(3) edge (r0)
%	(3) edge (r1)
%	(3) edge (r2)
%	(3) edge (r3)
%	(3) edge[dotted] (rdots)
	(3) edge[bend left, dotted] (dots)
%	(r3) edge[bend right, dotted] (rdots)
%	(dots) edge[dotted] (r0)
%	(dots) edge[dotted] (r1)
%	(dots) edge[dotted] (r2)
%	(dots) edge[dotted] (r3)
%	(dots) edge[dotted] (rdots)
;
\end{tikzpicture}
&
%%(e) combined example
\begin{tikzpicture}[every loop/.style={}]
	\tikzset{mystyle/.style={state, inner sep=2pt,minimum 	size=1pt, fill=black, node distance=20pt}}	
\tikzset{markedstyle/.style={state, inner sep=.5pt,minimum 	size=1pt, fill=white, node distance=20pt}}
	\node[mystyle] (f0) at (2,0) {};
	\node[mystyle] (l0) at (3,0) {};
	\node[mystyle, above of = l0] (l1) {};
	\node[mystyle, above of = l1] (l2) {};
	\node[mystyle, above of = l2] (l3) {};
	\node[above of = l3] (ldots) {$\vdots$};
	\node[mystyle] (0) at (4.5,0){}        edge [out=118,in=62,draw=black,loop]();
	\node[mystyle, above of = 0] (1) {} edge [out=118,in=62,draw=black,loop]();
	\node[mystyle, above of = 1] (2) {} edge [out=118,in=62,draw=black,loop]();
	\node[mystyle, above of = 2] (3) {} edge [out=118,in=62,draw=black,loop]();
	\node[above of = 3] (dots) {$\vdots$};
	\node[mystyle] (r0) at (6,0){} edge [out=140,in=40,draw=black,loop]();
	\node[mystyle, above of = r0] (r1) {} edge [out=118,in=62,draw=black,loop]();
	\node[mystyle, above of = r1] (r2) {} edge [out=118,in=62,draw=black,loop]();
	\node[mystyle, above of = r2] (r3) {} edge [out=118,in=62,draw=black,loop]();
	\node[above of = r3] (rdots) {$\vdots$};
\node[mystyle] (s0) at (7.5,0){};
\node[mystyle] (t0) at (8.5,0){}  edge [out=140,in=40,draw=black,loop]();
\node[below of =f0](f0b){};
\node[mystyle,below =of f0](m0){};
\node[markedstyle,below =of l0](m1){$\ast$};
\node[markedstyle,below =of 0](m2){$\ast$} edge [out=140,in=40,draw=black!40,line width=3pt,loop]();
\node[markedstyle,below =of r0](m3){$\ast$} edge [out=140,in=40,draw=black!40,line width=3pt,loop]();\node[mystyle,below =of s0](m4){};
\node[mystyle,below =of t0](m5){} edge [out=140,in=40,draw=black,loop]();
\path 
(m0) edge [draw=black!40,line width=3pt] (m1)
(m1) edge [draw=black!40,line width=3pt] (m2)
(m2) edge [draw=black!40,line width=3pt] (m3) 
(m3) edge [draw=black!40,line width=3pt] (m4)
(m4) edge (m5)
;
	\path 
	(f0) edge (l0)
	(f0) edge (l1)
	(f0) edge (l2)
	(f0) edge (l3)
	(f0) edge[dotted] (ldots)
	(l0) edge (0)
	(l0) edge (1)
	(l0) edge (2)
	(l0) edge (3)
    (l0) edge[dotted] (dots)
	(l1) edge (0)
	(l1) edge (1)
	(l1) edge (2)
	(l1) edge (3)
	(l1) edge[dotted] (dots)
	(l2) edge (0)
	(l2) edge (1)
	(l2) edge (2)
	(l2) edge (3)
	(l2) edge[dotted] (dots)
	(l3) edge (0)
	(l3) edge (1)
	(l3) edge (2)
	(l3) edge (3)
	(l3) edge[dotted] (dots)
	(ldots) edge[dotted] (0)
	(ldots) edge[dotted] (1)
	(ldots) edge[dotted] (2)
	(ldots) edge[dotted] (3)
	(ldots) edge[dotted] (dots)
	(0) edge (r0)
	(0) edge (r1)
	(0) edge (r2)
	(0) edge (r3)
	(0) edge[dotted] (rdots)
	(0) edge (1)
	(0) edge[bend left] (2)
	(0) edge[bend left] (3)
	(0) edge[bend left, dotted] (dots)
	(r0) edge (r1)
	(r0) edge[bend right] (r2)
	(r0) edge[bend right] (r3)
	(r0) edge[bend right, dotted] (rdots)
	(1) edge (r0)
	(1) edge (r1)
	(1) edge (r2)
	(1) edge (r3)
	(1) edge[dotted] (rdots)
	(1) edge (2)
	(1) edge[bend left] (3)
	(1) edge[bend left, dotted] (dots)
	(r1) edge (r2)
	(r1) edge[bend right] (r3)
	(r1) edge[bend right, dotted] (rdots)
	(2) edge (r0)
	(2) edge (r1)
	(2) edge (r2)
	(2) edge (r3)
	(2) edge[dotted] (rdots)
	(2) edge (3)
	(2) edge[bend left, dotted] (dots)
	(r2) edge (r3)
	(r2) edge[bend right, dotted] (rdots)
	(3) edge (r0)
	(3) edge (r1)
	(3) edge (r2)
	(3) edge (r3)
	(3) edge[dotted] (rdots)
	(3) edge[bend left, dotted] (dots)
	(r3) edge[bend right, dotted] (rdots)
	(dots) edge[dotted] (r0)
	(dots) edge[dotted] (r1)
	(dots) edge[dotted] (r2)
	(dots) edge[dotted] (r3)
	(dots) edge[dotted] (rdots)
(s0) edge (r0) 
(s0) edge (r1)
(s0) edge (r2)
(s0) edge (r3)
(s0) edge[dotted] (rdots)
(s0) edge (t0)
;
\end{tikzpicture}
\\
(a) & (b) & (c) & (d) &(e)
%
%
\begin{comment}
\begin{tikzpicture}
	\tikzset{mystyle/.style={state, inner sep=2pt,minimum 	size=1pt, fill=black, node distance=20pt}}
	\node[mystyle, fill=red!50, draw=red] (m0) at (0,0) {};
\end{tikzpicture}
&
\begin{tikzpicture}
	\tikzset{mystyle/.style={state, inner sep=2pt,minimum 	size=1pt, fill=black, node distance=20pt}}
	\node[mystyle, fill=red!50, draw=red] (m0) at (0,0) {};
	\node[mystyle, right of=m0] (m1) {};
	\path[line width=3pt] (m0) edge[draw=blue!40] (m1);
\end{tikzpicture}
&
\begin{tikzpicture}
	\tikzset{mystyle/.style={state, inner sep=2pt,minimum 	size=1pt, fill=black, node distance=20pt}}
	\node[mystyle, fill=red!50, draw=red] (m0) at (0,0) {};
	\node[mystyle, fill=red!50, draw=red, right of=m0] (m1) {};
	\path[line width=3pt] (m0) edge[draw=blue!40] (m1);
\end{tikzpicture}
&
\begin{tikzpicture}
\tikzset{mystyle/.style={state, inner sep=2pt,minimum 	size=1pt, fill=black, node distance=20pt}}
	\node[mystyle, fill=red!50, draw=red] (m0) at (0,0) {};
	\node [draw=none] {} edge [out=140,in=40,draw=blue!40,line width=3pt,loop] ();
\end{tikzpicture}
&
\begin{tikzpicture}
	\tikzset{mystyle/.style={state, inner sep=2pt,minimum 	size=1pt, fill=black, node distance=20pt}}
	\node[mystyle] (m0) at (-1.5,0) {};
	\node[mystyle, fill=red!50, draw=red, right of=m0] (m1) {};
	\node[mystyle, fill=red!50, draw=red, right of=m1] (m2) {} edge [out=140,in=40,draw=blue!40,line width=3pt,loop] ();
	\path[line width=3pt, draw=blue!40] 
	(m0) edge (m1)
	(m1) edge (m2);
\end{tikzpicture}
\end{comment}
\end{tabular}
\caption{The four basic situations how a marked graphs $F=(V_F,E_F)$ can lead to a graph $F^\infty$ are shown in (a) through (d). The right-most picture (e) combines  various situations. The marked graphs~$F$ are shown in the lower line, and above each marked graph 
the graphs $F^\infty$ are shown in black. The lowest line in each graph $F^\infty$ shows the embedding of $(V_F,E_F)$ into $F^\infty$.}\label{fig:markedgraphs}
\end{figure}
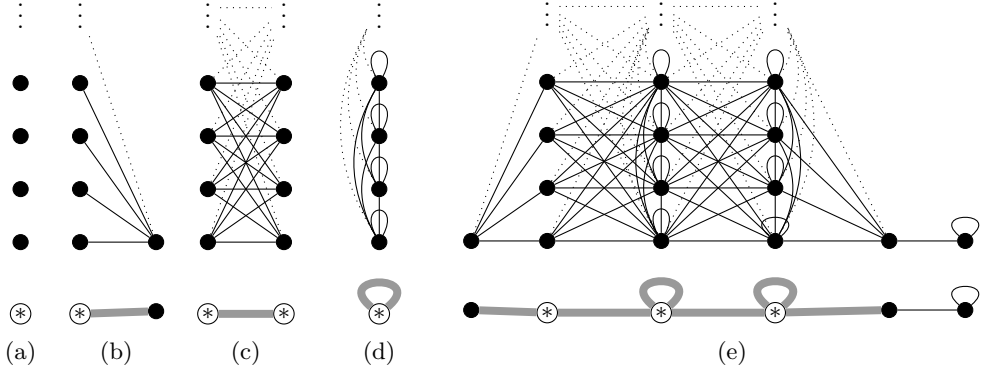
The reader might find it helpful to look at the examples of Figure~\ref{fig:markedgraphs} to understand the building of the graph $F^\infty$ from a marked graph $F$ if at least one vertex is marked. The pictures show five different situations: (a) a graph $F$ with a single isolated marked vertex: it gives rise to an arbitrary number of isolated vertices in $F^\infty$; (b) a graph $F$ with a single marked vertex incident to a marked edge yields an infinite star in $F^\infty$; (c) a graph $F$ with a marked edge with different endpoints where both are marked. It  yields an infinite complete bipartite graph in $F^\infty$; 
(d) a graph $F$ with a marked self-loop gives rise to an infinite clique in $F^\infty$; (e)
The rightmost figure combines the last three situations in a single marked graph, plus the effect of an unmarked edge.

\begin{theorem}\label{thm:Fiffi} 
  Let $F=(V_F,E_F,\mu)$ be any marked graph. 
\begin{enumerate}
\item If there is no marking, then $\cG_F=\os{(V_F,E_F)}$.
\item If $F$ contains a marked edge $(u,v)$ where the endpoint $v$ is marked, then every finite star with center $(u,0)$ appears as an induced subgraph of some $G\in \cG_F$. 
\item Suppose we represent a bipartite graph as a triple $(U,V,E)$  
where $U\cap V=\es$ and $E\sse U\times V$. Let $H$ be any finite bipartite edge-graph. If $F$ contains a marked edge $(u,v)$ where $u$ and~$v$ are marked, then a disjoint union of $F$ and $H$ appears in~$\cG_F$. 
\item Let $H$ be any finite graph. If $F$ contains a marked self-loop $(u,u)$, then the disjoint union of $F$ and $H$ belongs to $\cG_F$. 
   \item Let $F$ be any marked graph such that at most two vertices are marked. Then the following  holds. 
   A disjoint union of $F$ and a triangle (or any other non-bipartite graph) appears in $\cG_F$ \IFF there is some marked self-loop in~$F$.
\end{enumerate} 
\end{theorem}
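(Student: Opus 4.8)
The plan is to obtain all five items by direct constructions inside $F^\infty$, reading off which edges are available from \prref{def:Falpinf}. Item~(1) is immediate: if $\mu=\es$ then $V_F^\infty=V_F\times\os 0$ and $E_F^\infty=E_F\times\os 0$, so $F^\infty\cong(V_F,E_F)$, and the only finite subgraph of $F^\infty$ that contains the base copy $(V_F\times\os 0,E_F\times\os 0)$ as an induced subgraph is $F^\infty$ itself; hence $\cG_F=\os{(V_F,E_F)}$.

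For items~(2)--(4) the uniform idea is to keep the base copy $B=(V_F\times\os 0,E_F\times\os 0)$ of $F$ intact and to adjoin finitely many fresh copies $(u,1),(u,2),\dots$ and/or $(v,1),(v,2),\dots$ of the marked endpoint(s) of the given marked edge, adding only edges that join two fresh vertices, or (in item~(2)) join $(u,0)$ to a fresh vertex, and that \prref{def:Falpinf} already places into $E_F^\infty$. Since every edge of the resulting graph $G$ lies in $E_F^\infty$, the graph $G$ is a finite subgraph of $F^\infty$; and since no added edge touches $V_F\times\os 0$, the induced subgraph of $G$ on $V_F\times\os 0$ is exactly $B$, so $G\in\cG_F$. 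In item~(2), marking of $v$ makes $(u,0)$ adjacent in $F^\infty$ to every $(v,k)$; taking $(v,1),\dots,(v,n)$ and the edges $((u,0),(v,i))$ gives a $G\in\cG_F$ whose induced subgraph on $\os{(u,0),(v,1),\dots,(v,n)}$ is the star with center $(u,0)$ and $n$ rays. In item~(3), with $u$ and $v$ both marked the fresh families $\os{(u,k)}$ and $\os{(v,\ell)}$ span a complete bipartite subgraph of $F^\infty$, so for a finite bipartite edge-graph $H=(U,V',E)$ with $E\sse U\times V'$ we choose $|U|$ many $u$-copies, $|V'|$ many $v$-copies and exactly the edges prescribed by $E$; the result is $G\cong F\cupi H$. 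In item~(4), a marked self-loop $(u,u)$ makes $\os{(u,k):k\ge 1}$ span a complete graph with all self-loops inside $F^\infty$, so any finite graph $H$ can be realized on $|V_H|$ of the $u$-copies, again giving $G\cong F\cupi H$.

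For item~(5), the direction ``$\Leftarrow$'' is item~(4) applied to a triangle (or any fixed non-bipartite graph). For ``$\Rightarrow$'' I would argue by contraposition: suppose $F$ has no marked self-loop and at most two marked vertices, say all marked vertices lie in $\os{p,q}$. The key claim is that the induced subgraph $F^\infty[W]$ on the set $W$ of all vertices of index $\ge1$ is bipartite and loop-free. Indeed $W\sse\os{(p,k):k\ge1}\cup\os{(q,\ell):\ell\ge1}$, and by \prref{def:Falpinf} an edge of $F^\infty$ inside $W$ can only arise from a marked edge of $F$ with both endpoints in $\os{p,q}$; since neither $(p,p)$ nor $(q,q)$ is a marked self-loop, every such edge joins a $p$-copy to a $q$-copy, and no $F^\infty$-edge is a loop at a vertex of $W$. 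Now every $G\in\cG_F$ decomposes as the base copy $B$ on $V_F\times\os 0$ together with the remaining vertices $W_G:=V_G\sm(V_F\times\os 0)\sse W$, and $G[W_G]$ is a subgraph of $F^\infty[W]$, hence bipartite and loop-free. Therefore any graph $H$ occurring in $\cG_F$ disjointly alongside $B$ is bipartite, so a non-bipartite $H$ cannot occur.

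The routine part is items~(1)--(4), which are bookkeeping against \prref{def:Falpinf}. The one place needing care is ``$\Rightarrow$'' of item~(5): making the structural claim precise, and noticing that this is exactly where the bound of two marked vertices enters, since three marked vertices joined pairwise by marked edges would put a triangle among the fresh copies, so $F^\infty[W]$ would fail to be bipartite and the equivalence would break.
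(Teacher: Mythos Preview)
Your proof is correct and follows essentially the same approach as the paper: items (1)--(4) are direct constructions inside $F^\infty$ using the marked edge to generate the needed edges among fresh copies, and item (5) hinges on the observation that with no marked self-loop and at most two marked vertices the ``fresh'' part $F^\infty[W]$ is bipartite. Your write-up of item (5) is in fact more explicit than the paper's, which compresses the contrapositive into a single sentence (``we need at least three marked vertices to produce~$G$''); your explicit bipartition of $W$ into $p$-copies and $q$-copies, together with the remark that edges in $W$ come only from marked edges with both endpoints in $\os{p,q}$, is exactly the missing justification.
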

 \begin{proof}Define the vertex sets $V_0=V_F\times\os 0$ and $V_{\geq 1}=\{(u,k)\mid u\in V_F\land k\geq 1\}$. 
We consider the five cases separately.
\begin{description}
\item[{1.}] By definition.
\item[{2.}]
In the graph $F^\infty$, there are directed edges from 
 $(u,0)$ to all $(v,k)$, where $k\in \N$. \Ip every finite star with center $(u,0)$ appears as an induced subgraph of some $G\in \cG_F$.
\item[{3. and 4.}] 
Consider a marked edge $(u,v)$ with both endpoints marked. 
  In the graph $F^\infty$, there are directed edges from 
 $(u,k)$ to all $(v,\ell)$, where $k,\ell\in \N$. Consider the induced subgraph $F^\infty[U]$ where $U= V_{\geq 1}$.  By definition, $F^\infty[U]$ is disjoint from the subgraph $F=F^\infty[V_0]$. 
For $u=v$, the graph $F^\infty[U]$ is an infinite complete graph; for $u\neq v$, the graph $F^\infty[U]$ is an infinite complete bipartite graph. The claims follow. 
\item[{5.}]
If $F$ contains a marked self-loop, then, by definition, it is a self-loop around a marked vertex. Hence, we are done since 
every disjoint union of $F$ and any other finite graph $G$ appears in $\cG_F$. For the other direction, assume that $F$ has no marked self-loop. If a disjoint union of $F$ and a finite non-bipartite graph $G$ appears in $\cG_F$, then we need at least three marked vertices to produce~$G$. \qedhere
\end{description}
\end{proof}  
The following lemma uses the notions of vertex cover and of bag-size. Recall that the tree-width was actually defined by bag-size - 1 by Robertson and Seymour in \cite{RobSey84}.
\begin{lemma}\label{lem:btw}
Let $F=(V_F,E_F,\mu)$ be a nonempty marked graph such that 
each marked edge has at least one unmarked vertex incident to it. Then for every $G\in\cG_F$ both, the size of a minimal vertex cover and its bag-size are bounded by $|V_F|$.
\end{lemma}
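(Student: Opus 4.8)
The plan is to exploit the hypothesis to pin down exactly which edges can occur in $F^\infty$, and then to observe that the ``level-$0$'' copies of the vertices of $F$ already form a vertex cover of size $|V_F|$, from which both bounds essentially follow.

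First I would sort out the marked edges of $F$. By \prref{def:mark}, every marked edge has a marked endpoint, and by the hypothesis it also has an unmarked endpoint; hence every marked edge $(u,v)$ has $u\neq v$ and exactly one marked endpoint, and in particular $F$ has no marked self-loop. Write $V_0=V_F\times\os 0$ and $V_{\geq 1}=\set{(v,k)}{v\in V_F\text{ marked},\ k\geq 1}$ for the two parts of $V_F^\infty$ in \prref{def:Falpinf}. Inspecting the edge set of $F^\infty$, every edge of $F^\infty$ then has at least one endpoint in $V_0$: the edges of $E_F\times\os 0$ lie entirely in $V_0$, while an edge coming from a marked edge $(u,v)$ of $F$ joins a copy of $u$ to a copy of $v$, and since one of $u,v$ is unmarked, only its $0$-th copy exists, so that endpoint lies in $V_0$. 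Consequently $F^\infty$ has no edge with both endpoints in $V_{\geq 1}$, no vertex $(v,k)\in V_{\geq 1}$ is adjacent to its own copy $(v,0)$ (that would require a marked self-loop at $v$), and therefore $N_{F^\infty}((v,k))\sse V_0\setminus\os{(v,0)}$ for every $(v,k)\in V_{\geq 1}$.

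Now fix $G\in\cG_F$. By \prref{def:Falpinf} we have $V_0\sse V(G)$, every further vertex of $G$ lies in $V_{\geq 1}$, and every edge of $G$ is an edge of $F^\infty$. Hence every edge of $G$ meets $V_0$, so $V_0$ is a vertex cover of $G$, and $|V_0|=|V_F|$; this bounds the size of a minimal vertex cover of $G$ by $|V_F|$. For the bag-size, I would take the tree decomposition of $G$ whose tree is a star: the centre bag is $V_0$, and for each $w\in V(G)\cap V_{\geq 1}$ there is one leaf bag $N_G(w)\cup\os w$ attached to the centre. Writing $w=(v,k)$, we have $N_G(w)\sse V_0\setminus\os{(v,0)}$, so each leaf bag has at most $|V_F|$ vertices, and the centre bag has exactly $|V_F|$. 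Verifying the axioms is routine: each vertex of $V_0$ occurs in the centre and possibly in some leaves, all attached to the centre, so its bags span a subtree; each $w\in V_{\geq 1}$ occurs only in its own leaf; every edge inside $V_0$ lies in the centre bag; every edge joining some $w\in V_{\geq 1}$ to a vertex of $V_0$ lies in the leaf of $w$; and there are no other edges. Hence the bag-size of $G$ is at most $|V_F|$.

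The only genuinely delicate point is the edge analysis of \prref{def:Falpinf} in the first step: one has to keep track of the directedness of the edges there and of the (harmless) fact that an \emph{unmarked} edge may still join two marked vertices. What saves the argument is that a \emph{marked} edge, under the hypothesis, never joins two marked vertices, so the copies in $V_{\geq 1}$ can only ``hang off'' $V_0$; after this observation the rest is bookkeeping.
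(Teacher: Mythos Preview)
Your proof is correct and follows essentially the same approach as the paper: identify $V_0$ (the paper writes $V_F$) as a vertex cover, then build a star-shaped tree decomposition with $V_0$ as the centre bag and $N_G(w)\cup\os w$ as a leaf for each extra vertex $w$. Your write-up is in fact more careful than the paper's in one respect: you explicitly argue $N_G(w)\sse V_0\setminus\os{(v,0)}$ (via the absence of marked self-loops), which is exactly what is needed to bound the leaf bags by $|V_F|$ rather than $|V_F|+1$; the paper leaves this point implicit.
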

The following proof is based on well-known and standard techniques.
\begin{proof}
Let $G=(V_G,E_G)\in\cG_F$. Since every marked edge has at most one marked vertex incident to it, $V_G$ is the disjoint union of
$V_F$ and an independent set $U$. Clearly, $V_F$ is a vertex cover of $G$. 
We now construct a tree-decomposition as follows. We begin with 
single bag $B$ defined by the vertex cover $V_F$. Then for every $u\in U$ we define a bag $B_u$ by $N(u)\cup \os u$ where $N(u)$ is the set of neighbors of $u$. Note that $N(u)\sse V_F$. A bag $B_u$ connected to $B$ \IFF $N(u)\neq \es$. 
This is tree decomposition of $G$ where the connected component of $G$ (in the underlying tree) is a star with $|U|$ rays.
\end{proof}
For an illustration of our construction, we refer to \prref{fig:starandclique}.
\begin{remark}\label{rem:baum}
The construction in the proof of \prref{lem:btw} is optimal with respect to the minimal vertex cover and to the bag-size if $(V_F,E_F)$ is a clique  and 
$S=(V_F,S_F)$ is its subgraph of marked edges such that $S$ is a star, where the center of the star is not marked. In general, 
we might achieve smaller vertex covers and bag-sizes by beginning  
with a tree decomposition of $F$. 
\end{remark}
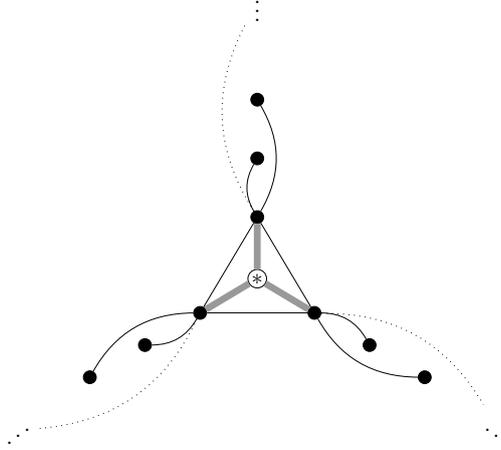
\begin{figure}[h]
\vspace{-0.8cm} %Shift by VD
\centering
\scalebox{.85}{\begin{minipage}{.6\textwidth} % \input vertexcoverpicture.tex
\tikzset{every loop/.style={}}
	\begin{tikzpicture}
		\tikzset{mystyle/.style={state, inner sep=2pt,minimum 	size=1pt, fill=black, node distance=20pt}}
\tikzset{markedstyle/.style={state, inner sep=.5pt,minimum 	size=1pt, fill=white, node distance=20pt}}
		\node[markedstyle] (0) {$\ast$};
		\node[mystyle, above = of 0] (u1) {};
		\node[mystyle, above = of u1] (u2) {};
		\node[mystyle, above = of u2] (u3) {};
		\node[above = of u3] (udots) {$\vdots$};
		\node[mystyle, below left = 1em and 2em of 0] (l1) {};
		\node[mystyle, below left = 1em and 2em of l1] (l2) {};
		\node[mystyle, below left = 1em and 2em of l2] (l3) {};
		\node[below left = 1em and 2em of l3] (ldots) {\reflectbox{$\ddots$}};
		\node[mystyle,below right  = 1em and 2em of 0] (r1) {};
		\node[mystyle,below right  = 1em and 2em of r1] (r2) {};
		\node[mystyle, below right = 1em and 2em of r2] (r3) {};
		\node[below right  = 1em and 2em of r3] (rdots) {$\ddots$};
\path [line width=3pt] (0) edge[draw=black!40] (u1) 
(0) edge[draw=black!40] (l1) 
(0) edge[draw=black!40] (r1);
\path (u1) edge (l1) (l1) edge (r1) (r1) edge (u1) 
(u1) edge[bend left] (u2)
	(u1) edge[bend right] (u3)
	(u1) edge[bend left, dotted] (udots)
(l1) edge[bend left] (l2)
	(l1) edge[bend right] (l3)
	(l1) edge[bend left, dotted] (ldots)
(r1) edge[bend left] (r2)
	(r1) edge[bend right] (r3)
	(r1) edge[bend left, dotted] (rdots)
;
	\end{tikzpicture}
\end{minipage}}
\caption{An illustration of \prref{lem:btw} with $n=4$. The graph $F$ is a $K_4$, but the marked edges induce a star with three rays. 
In $F^\infty$, the star sends three streams of rays into infinity.}\label{fig:starandclique}
\end{figure}

\begin{definition}\label{def:4cls}
For $i=1\lds 4$, we define
classes $\cC_i$ containing the sets $\rho(L)$ 
where $L\sse\G$ is regular. If $\cG$ denotes such a set $\rho(L)$, then:
\begin{enumerate}
\item We let $\cG\in \cC_1$ if $\cG$ is a finite set of graphs.
\item We let $\cG\in \cC_2$ if the set $\cG$ has bounded tree-width.
\item We let $\cG\in \cC_3$ if $\cG\in \cC_2$ or if there exists a finite set of graphs $\cF'$ 
such that for every finite bipartite graph $G'$ there is some 
$F'\in \cF'$ such that the disjoint union of $F'$ and $G'$ appears in $\cG$. 
\item We let $\cG\in \cC_4$ if $\cG\in \cC_3$ if there exists a finite set of graphs $\cF'$ 
such that for every finite graph $G'$ there is some 
$F'\in \cF'$ such that the disjoint union of $F'$ and $G'$ appears in $\cG$. 
\end{enumerate}
\end{definition}
Note that our definition enforces $\cC_i\sse \cC_j$ for $1\leq i \leq j \leq 4$. The reason is driven by our motivation to find
graphs in sets $\rho(L)$ satisfying some property given $\Phi$. 
For that, let us consider a marked graph $F=(V_F,E_F,\mu)$. 
Recall that $\mu$ is a subset of $V_F\cup E_F$. Now, if $\mu'\sse \mu$, then 
$\cG_{F'}\sse \cG_{F}$ where $F'=(V_F,E_F,\mu')$. Thus, increasing the marking increases  $\cG_{F}$, and this makes it more likely to find a graph satisfying $\Phi$. If there is no marking at all, then $\cG_{F}$ is finite, hence of bounded tree-width. As long as $\mu$ is without any marked edge, we remain in the class $\cC_2$. Suppose that 
$\mu$ contains a marked edge which is not a self-loop, then we mark first one of its endpoints but not the other one. We remain in $\cC_2$. 
If according to $\mu$ both endpoints are marked, we mark the second endpoint, too. The result is that we are now in $\cC_3$. In the final step we mark all self-loops. If there is at least one, we are in the class~$\cC_4$. Thus, step by step, starting with $(V_F,E_F,\es)$ we can make the intermediate families $\cG_{F'}$ larger; and we end in the largest family $\cG_F$. 

\begin{corollary}\label{cor:4cls}
Let $F=(V_F,E_F,\mu)$ is a marked graph. Then the following holds.
\begin{enumerate}
\item We have $\rho(\cG_F)\in \cC_1$ \IFF there is no marked vertex.
\item If there is no marked edge $(u,v)$ where both $u$ and $v$ are marked, then $\rho(\cG_F)\in \cC_2$. This implies that $\rho(\cG_F)$ has bounded tree-width. The tree-width and the the minimal size of a vertex cover are bounded by $|V_F|$. 
\item If there is a  marked edge $(u,v)$ where $u$ and $v$ are marked, then $\rho(\cG_F)\in \cC_3$.
This implies that every connected finite bipartite graph appears as a connected component of some $G\in \rho(\cG_F)$. \Ip for every $k\in \N$ there some $G\in \cG_F$ where its tree-width and the minimal size of a vertex cover are both greater than $k$.
\item We have $\rho(\cG_F) \in\cC_4$ \IFF every connected finite graph appears as a connected component of some $G\in\rho(L)$.
\end{enumerate}
\end{corollary}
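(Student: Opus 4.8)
The plan is to obtain all four items directly from \prref{thm:Fiffi}, \prref{lem:btw}, and the stepwise marking discussion preceding the corollary, treating the four classes one after another. For item~1, I would first observe that by \prref{def:mark} every marked edge has a marked endpoint, so ``no marked vertex'' is the same as $\mu=\es$; then \prref{thm:Fiffi}(1) gives $\cG_F=\os{(V_F,E_F)}$, a one-element family, whence $\rho(\cG_F)\in\cC_1$. Conversely, a single marked vertex $u$ already makes $V^\infty_F$ infinite, since it contains every $(u,k)$ with $k\in\N$; for each $N$ the induced subgraph of $F^\infty$ on $V_F\times\os 0\cup\set{(u,k)}{1\le k\le N}$ lies in $\cG_F$ (its restriction to $V_F\times\os 0$ is the distinguished copy of $F$) and has $\abs{V_F}+N$ vertices, so $\cG_F$ is infinite and $\rho(\cG_F)\notin\cC_1$. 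For item~2 the hypothesis ``no marked edge has both endpoints marked'' is verbatim that of \prref{lem:btw} (the empty $F$ being trivial), which then bounds the minimal vertex cover size and the bag-size of every $G\in\cG_F$ by $\abs{V_F}$; a bag-size at most $\abs{V_F}$ means tree-width at most $\abs{V_F}-1$, so $\cG_F$ has bounded tree-width and $\rho(\cG_F)\in\cC_2$ with the stated quantitative bounds.

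For item~3 I would take the witness family $\cF'=\os{(V_F,E_F)}$. Given a finite bipartite $G'$, split off its isolated vertices, writing $G'=G'_0\sqcup I$ with $G'_0$ a (possibly empty) bipartite edge-graph. By \prref{thm:Fiffi}(3) a disjoint union of $F$ and the bipartite edge-graph $G'_0\sqcup K_{m,m}$ (any $m\ge\abs I$, with the $K_{m,m}$ block nonempty) appears in $\cG_F$, realized on vertices outside the distinguished copy of $F$; since $\cG_F$ is closed under deleting edges and vertices that lie outside that copy, I erase the edges of the $K_{m,m}$ block and all but $\abs I$ of its vertices to obtain $(V_F,E_F)\sqcup G'\in\cG_F$. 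Hence $\rho(\cG_F)\in\cC_3$. The two consequences are then read off: a connected finite bipartite $H$ is a connected component of $(V_F,E_F)\sqcup H\in\cG_F$, and the graph $(V_F,E_F)\sqcup K_{k+1,k+1}$ lies in $\cG_F$ and has tree-width and minimal vertex cover size at least $k+1$.

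For item~4 the direction ``$\rho(\cG_F)\in\cC_4\Rightarrow$ every connected finite graph occurs as a connected component of some member'' is immediate from \prref{def:4cls}(4) applied to a connected $H$. The converse is the one step that needs a genuine argument. Assume every connected finite graph occurs as a connected component of some $G\in\cG_F$; then $K_n$ occurs for all $n$, so $F^\infty$ contains arbitrarily large cliques as subgraphs. But by the exact form of $E^\infty_F$ in \prref{def:Falpinf}, two vertices of $F^\infty$ sharing the same first coordinate $w$ are adjacent only when $w$ carries a marked self-loop; so if $F$ had no marked self-loop, every clique of $F^\infty$ would use at most one vertex per $w\in V_F$ and hence have size at most $\abs{V_F}$, contradicting the existence of $K_{\abs{V_F}+1}$ as a subgraph. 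Therefore $F$ has a marked self-loop, and \prref{thm:Fiffi}(4) yields $(V_F,E_F)\sqcup G'\in\cG_F$ for every finite graph $G'$, so $\rho(\cG_F)\in\cC_4$ with $\cF'=\os{(V_F,E_F)}$.

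The main obstacle is exactly this converse of item~4: one must convert the purely qualitative hypothesis ``all connected finite graphs appear as components'' into the structural conclusion ``$F$ has a marked self-loop'', and the clique-size estimate above is what accomplishes this, crucially using that $V_F$ is finite and the precise shape of $E^\infty_F$. A lesser but real nuisance is the isolated-vertex bookkeeping in item~3: $\cC_3$ quantifies over all finite bipartite graphs whereas \prref{thm:Fiffi}(3) supplies only bipartite edge-graphs, so one has to pad with a $K_{m,m}$ block and then invoke closure of $\cG_F$ under deleting matter outside the distinguished copy of $F$.
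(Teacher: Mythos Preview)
Your proof is correct and follows the paper's approach, which consists of a single sentence citing \prref{thm:Fiffi} and \prref{lem:btw}; you simply fill in the details the paper leaves implicit, notably the clique-size argument for the converse of item~4 and the isolated-vertex bookkeeping in item~3. One small streamlining for item~3: since the hypothesis already gives you a marked vertex~$u$, you can attach the needed isolated vertices directly by including bare copies $(u,k)$ with no incident edges, avoiding the $K_{m,m}$ padding and subsequent deletion.
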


\begin{proof}
It is enough to prove the lemma when $\rho(L) = \cG_F$ where $F=(V_F,E_F,\mu)$ is a marked graph. The result is now a direct consequence of \prref{thm:Fiffi} and \prref{lem:btw} for the second item concerning the bounded tree-width. 
\end{proof}
We now show in \prref{cor:aper} that for star-free languages, the classification only contains three cases. As an example, consider $R=(ab^+aaab^+a)^{+}$; then $R$ is star-free\footnote{The minimal DFA accepting $R$ is easily seen to be \emph{counter-free} in the sense of \cite{mp71}. Alternatively, one can check with a FO-sentence that a word~$w$ contains the factor $aaa$, but no factor $bab$. Every occurrence of the factor $aaa$ of $w$ is preceded by $ab^+$ and succeeded by $b^+a$. Moreover, the word~$w$ has a prefix in $ab^+aaa$ and a suffix in $aaab^+a$                      
and if there is a factor $baab$ in~$w$, then this factor is (immediately) preceded and followed by the factor $aaa$.} 
and  $\rho(R)\in\cC_4$, because $\rho(R)=\cG_F$, where $F$ is a marked self-loop around a marked vertex. Hence, $\rho(R)\in\cC_4$. Therefore, if $\rho(L)$ equals $\rho(R)$ for a star-free regular language $R$, then \prref{cor:aper} states that $\rho(L)\in \cC_3$ implies $\rho(L)\in \cC_4$. Hence, if $R$ is star-free, then $\rho(R)$ belongs to three classes, only.\footnote{\prref{cor:aper} corrects a misprint in \cite[Cor.~3]{DiekertFW_DLT21}.}  
\begin{corollary}\label{cor:aper} 
  Let $R$ be a star-free language such that $\rho(R)$ is infinite. Then, we have either $\rho(R)\in \cC_2$ or $\rho(R)\in \cC_4$.
 \end{corollary}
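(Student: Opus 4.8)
The plan is to combine the classical characterization of star-free languages by aperiodic syntactic monoids with the geometric description of \prref{sec:marked}. Star-freeness forces the period $p$ of the $b$-torsion of $R$ to be $1$, and this collapses the intermediate class: every marked graph occurring in the description of $\rho(R)$ either lies in $\cC_2$ or already contains a marked self-loop, which puts it -- and hence $\rho(R)$ -- into $\cC_4$.

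First I would show $p=1$. As $R$ is star-free, its syntactic monoid $M_R$ is aperiodic, so the cyclic submonoid generated by $[b]$ -- which by \prref{eq:tp} is isomorphic to $C_{t,p}$ -- is aperiodic too. Since an aperiodic monoid contains no nontrivial subgroup and $C_{t,p}$ contains a cyclic group of order $p$, we get $p=1$. (If $t=0$, then $\rf(w)$ contains no vertex letter for $w\in R$, so $\rho(R)$ is the single empty graph, contradicting infiniteness; hence we may assume $t\geq 1$.) With $p=1$, the reduced form $\rf$ replaces every block $b^m$ by $b^{\min(m,t)}$, so among the vertex letters $ab^ia\in\V$ occurring in reduced forms of words of $R$, the only $z$ with $[z]\neq\os z$ is $z=ab^ta$ (using \prref{rem:small}). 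Unwinding the construction of \prref{sec:marked}, the marking condition $\bet(z)=\infty$ therefore selects \emph{at most one vertex} in every marked graph $F$ appearing in the finite decomposition $\rho(R)=\cG_{F_1}\cup\cdots\cup\cG_{F_s}$ provided by that construction.

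The key consequence is that if some $F_i$ contains a marked edge $(u,v)$ with both endpoints marked, then necessarily $u=v$, \ie $F_i$ contains a marked self-loop. The proof then splits into two cases. If no $F_i$ has a marked self-loop, then no $F_i$ has a marked edge with both endpoints marked, so \prref{cor:4cls}(2) yields $\cG_{F_i}\in\cC_2$ with tree-width bounded by $|V_{F_i}|$; as bounded tree-width is preserved under finite unions, $\rho(R)\in\cC_2$. Otherwise some $F_i$ contains a marked self-loop, and \prref{thm:Fiffi}(4) gives that the disjoint union of $F_i$ with any finite graph $H$ belongs to $\cG_{F_i}\sse\rho(R)$; hence every connected finite graph occurs as a connected component of a graph in $\rho(R)$, so $\rho(R)\in\cC_4$ by \prref{cor:4cls}(4). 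This establishes the dichotomy.

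The step I expect to be the main obstacle is justifying ``at most one marked vertex per piece'' when $p=1$: it requires unwinding the definitions of the alphabet $C$ (\prref{def:AB}), of the splitting of $\pi_C(L)$ into linear pieces (\prref{sec:newalp}), and of the canonical marking in \prref{sec:marked}, and checking that $p=1$ genuinely identifies every ``saturated'' vertex name with the single letter $ab^ta$ -- so that situation (c) of \prref{fig:markedgraphs}, a marked edge between two \emph{distinct} marked vertices, cannot arise. Everything after that is immediate from \prref{thm:Fiffi}, \prref{lem:btw}, and \prref{cor:4cls}.
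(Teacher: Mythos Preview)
Your proposal is correct and follows essentially the same route as the paper: Sch\"utzenberger's theorem gives period $p=1$, so at most one vertex name (the single saturated class $ab^ta$) can be marked in each piece of the decomposition, forcing any marked edge with both endpoints marked to be a self-loop; then the dichotomy follows from \prref{cor:4cls}. The only difference is cosmetic: the paper argues by picking one $F$ with $\cG_F\notin\cC_2$, whereas you run the case split over all pieces $F_1,\dots,F_s$ of the decomposition and take the finite union---both are fine, and the paper also notes (in a footnote) that the marked-graph construction of \prref{sec:marked} may enlarge the threshold but keeps the period, which is exactly the point you flag as the ``main obstacle''.
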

 \begin{proof} 
 Recall Sch\"utzenberger's classical theorem that a language $R$ is star-free \IFF first, it is regular and second, its syntactic 
 monoid $M_R$ is aperiodic, see~\cite{sch65sf}. Therefore, $R$ satisfies the \btptorp for some threshold $t\geq 0$ with period $p=1$.  We may assume that $\rho(R)$ is infinite. Then, 
 we find a marked graph $F$ such that 
 $\cG_F\sse\rho(R)$ and $\cG_F$ is infinite.
\footnote{The construction of the marked graphs changed perhaps the threshold, but not the period which is therefore still $1$. For simplicity, the new threshold is still called $t$.}
Hence, $F$ contains a marked edge where at least one endpoint is marked. This marked vertex is defined by the word $ab^{t+1}a$. Thus, it is unique because every vertex in $F$ has the form $ab^{i}a$ for 
 $1\leq {t+1}$ and the vertices $aba,\ldots, ab^t$ are not marked. 
 Since $\cG_F\notin \cC_2$, the marked graph $F$ contains a marked self-loop $ab^{t+1}aaab^{t+1}a$.
We now apply \prref{thm:Fiffi} and \prref{cor:4cls}. A marked self-loop implies 
$\rho(R)\in \cC_4$.
\end{proof} 
\section{Graph properties}\label{sec:GP} 
A \emph{graph property} is a \textbf{decidable} subset $\Phi\sse \G$. For a finite graph~$G$, we write  
 $G\models \Phi$ if the short-lex representation $\gam(G)$ belongs to $\Phi$.\footnote{Recall that  
 $\rho(w)$ is realized as a graph with a natural linear order on the vertices: we have $ab^ca\leq ab^da \iff c\leq d$.} 
 Given a word $w\in \G$, we can compute $\gam\rho(w)$. Hence, we can assume without restriction that $\Phi$ is saturated: $\oi\rho(\rho(\Phi))= \Phi.$ 
To simplify our presentation, we focus on properties of undirected finite graphs (without self-loops).  
This can be achieved by making the set $\Phi$ larger 
such that $\Phi$ has the following desired property:
If $u\in \G$ represents the  graph $\rho(u)= (V,E)$ and $\Phi$ speaks about undirected  graphs (resp.~undirected  graphs without self-loops) then $\rho(u)\models \Phi\iff (V,E\cup \oi E)\models \Phi$
(resp.~$\rho(u)\models \Phi\iff (V,(E\cup \oi E)\sm \id_V)\models \Phi$). 
  
We are interested in the following (uniform) satisfiability problem $\sat(\cG_F,\Phi)$.
 \begin{itemize} 
  \item Input: A marked graph $F$ and a graph property $\Phi\sse \G$. 
  \item Question: ``$\exists G\in \cG_F: G\models \Phi$?''  
 \end{itemize} 
Throughout this section, $F$ denotes a marked graph and  
 $\cG_F$ denotes the family of graphs defined in \prref{def:Falpinf}. 
Sometimes, it will be crucial in the following that $F$ or $\Phi$ are fixed. We will clarify this by writing $\sat_F(\cdot)$ or $\sat_\Phi(\cdot)$, respectively.
 
 For various well-studied graph properties, the satisfiability problem is always decidable.  This includes problems where  
 $\Phi$ states that a graph is planar (resp.~is closed under graph-minors, resp.~perfect, $k$-colorable, etc.). This is a direct consequence of the following fact. 
\begin{theorem}\label{thm:trivia}   
  Let either $\cG_F$ be finite (\ie, a singleton) or $\Phi$ be any graph property which is closed under taking induced subgraphs (or both).  
  Then, $\sat(\cG_F,\Phi)$ is decidable. 
 \end{theorem}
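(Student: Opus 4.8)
The plan rests on a single structural observation: the finite graph $(V_F,E_F)$ underlying the marked graph $F$ is, up to isomorphism, the $\leq$-smallest element of $\cG_F$, and it actually belongs to $\cG_F$. Indeed, by \prref{def:Falpinf} every $G\in\cG_F$ contains the copy $(V_F\times\os 0,E_F\times\os 0)$ as an induced subgraph, and $(V_F\times\os 0,E_F\times\os 0)$ is itself a finite subgraph of $F^\infty$ containing $(V_F\times\os 0,E_F\times\os 0)$ as an induced subgraph; hence $(V_F,E_F)\in\cG_F$ via the embedding $\gam$.

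First I would dispose of the case that $\cG_F$ is finite. As noted in the statement, this means $\cG_F=\os{(V_F,E_F)}$, so $\sat(\cG_F,\Phi)$ reduces to the single question whether $(V_F,E_F)\models\Phi$; this is decidable because $(V_F,E_F)$ is a concrete finite graph whose short-lex representation $\gam(V_F,E_F)$ we can compute, and $\Phi\sse\G$ is by assumption a decidable set.

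Next I would treat the case that $\Phi$ is closed under taking induced subgraphs. The claim is that ``$\exists G\in\cG_F:\ G\models\Phi$'' is equivalent to ``$(V_F,E_F)\models\Phi$''. The implication ``$\Leftarrow$'' is immediate from $(V_F,E_F)\in\cG_F$. For ``$\Rightarrow$'', pick $G\in\cG_F$ with $G\models\Phi$; since $(V_F,E_F)$ is an induced subgraph of $G$, heredity of $\Phi$ yields $(V_F,E_F)\models\Phi$. Thus the decision procedure is again: compute $\gam(V_F,E_F)$ and query $\Phi$. The ``or both'' case is covered by either argument. In fact the \emph{same} query ``$(V_F,E_F)\models\Phi$?'' returns the correct answer in every case allowed by the hypothesis, so the algorithm need not even detect which case applies.

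I expect no genuine obstacle here; the only care needed is bookkeeping. One should check that the copy $(V_F\times\os 0,E_F\times\os 0)$ is really an \emph{induced} subgraph of $F^\infty$ (which is immediate from the definition of $E^\infty_F$), and one must respect the conventions of \prref{sec:GP}: graphs are considered up to isomorphism, $\Phi$ is saturated, and if $\Phi$ speaks about undirected graphs (without self-loops) one first applies the corresponding symmetrization to $(V_F,E_F)$ before querying $\Phi$. The contrast with the genuinely hard satisfiability problems (such as the existence of a perfect matching for the richer classes $\cC_3,\cC_4$) is precisely that those properties are not hereditary, so the small ``seed'' graph $(V_F,E_F)$ carries no information about satisfiability within the infinite family $\cG_F$, and a different argument is required.
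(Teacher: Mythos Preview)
Your proposal is correct and follows essentially the same approach as the paper: both reduce $\sat(\cG_F,\Phi)$ to the single membership test ``$(V_F,E_F)\models\Phi$?'', using that $(V_F,E_F)\in\cG_F$ and that $(V_F,E_F)$ is an induced subgraph of every $G\in\cG_F$. The paper's proof is a two-line version of yours, omitting the explicit case split and the bookkeeping about $(V_F\times\os0,E_F\times\os0)$ that you spell out.
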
 
 \begin{proof} 
  Since $F$ is an induced subgraph for every $G\in \cG_F$, it is enough to check whether $F\models \Phi$. This is possible, because $\Phi$ is decidable by definition.  
 \end{proof} 
 In many cases, graph properties are expressible either in monadic second-order logic (MSO for short) or in first-order logic (FO for short). MSO is a rich and versatile class to define graph properties. Moreover, we allow quantification over both, sets of vertices and sets of edges. Since $w\in \G$ defines graphs with a linear order, we can express in MSO, for example,  that the  
 number of vertices is even.  
 We use the following well-known results 
 as a black box. 
 First, given an MSO-sentence $\Phi$ and $k\in \N$, it is decidable whether  
 there exists a graph of tree-width at most $k$  satisfying~$\Phi$, see, \eg, \cite{CourcelleHandbookGG97,CourcelleEngelfriet2012,Seese91}. 
As a second black box, we use Trakhtenbrot's Theorem~\cite{Tra50}: on input of an FO-sentence~$\Phi$, it is 
undecidable whether  there exists a finite graph satisfying~$\Phi$.
% there exists a finite and connected bipartite graph satisfying~$\Phi$.}   
\begin{remark}\label{rem:tra50}
Trakhtenbrot's Theorem also holds in the following smaller family $\cB_t$ 
of finite bipartite graphs. More precisely, we mean the following. Let $t\in \N$ be any fixed constant. Then, $\cB_t$  denotes the family of connected finite bipartite graphs which have at least $t$ vertices. Then, on input of an FO-sentence~$\Phi$, it is 
undecidable whether  there exists a graph in $\cB_t$ satisfying~$\Phi$. 
\qed
\end{remark}
\begin{theorem}\label{thm:endmark}    Let $\Phi$ be an MSO-sentence. Then, $\sat_\Phi(\cdot)$
is decidable for marked graphs $F=(V_F,E_F,\mu)$ as inputs where at most one endpoint of each edge is marked.  
\end{theorem}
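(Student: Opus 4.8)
The plan is to reduce the problem to the decidability of \textsf{MSO}-satisfiability over graphs of bounded tree-width, which we use as a black box (\cite{CourcelleHandbookGG97,CourcelleEngelfriet2012,Seese91}). First I would dispose of the hypothesis on $F$: we may assume $V_F\neq\es$, since otherwise $\cG_F$ is a singleton and \prref{thm:trivia} already gives decidability. By the hypothesis, every marked edge of $F$ has an unmarked incident vertex (in particular $F$ has no marked self-loop), so \prref{lem:btw} applies and every $G\in\cG_F$ admits a tree-decomposition of bag-size at most $|V_F|$, hence has tree-width at most $k:=|V_F|-1$.

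The next step is to show that $\cG_F$ is a \emph{first-order-definable} class of finite graphs. By \prref{def:Falpinf} together with the proof of \prref{lem:btw}, a graph $G\in\cG_F$ is, up to isomorphism, obtained from the fixed finite graph $(V_F,E_F)$ by adding an independent set $U$ of new vertices, each new vertex $u\in U$ being a ``copy'' of some marked vertex $w\in V_F$ and having neighbourhood an \emph{arbitrary} subset of the fixed finite set $N^*_F(w):=\set{v\in V_F}{\os{v,w}\text{ is a marked edge of }F}$ (all of whose vertices are unmarked, by the hypothesis); and conversely every graph built this way lies in $\cG_F$. Since $(V_F,E_F)$ is a fixed finite graph, this is captured by a first-order sentence $\psi_F$ (hence an \textsf{MSO}-sentence) asserting: there exist pairwise distinct $x_1,\dots,x_{|V_F|}$ inducing exactly $(V_F,E_F)$ (with $x_i$ in the role of the $i$-th vertex of $F$); no two vertices outside $\os{x_1,\dots,x_{|V_F|}}$ are adjacent; and every vertex $y$ outside that set satisfies $\bigvee_{w\text{ marked}}\big(\forall z\colon z\sim y\to\bigvee_{v_i\in N^*_F(w)}z=x_i\big)$. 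It is routine to check that, for a finite graph $G$, we have $G\models\psi_F$ \IFF $G\in\cG_F$.

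Finally I would combine the two observations. Since every graph in $\cG_F$ has tree-width at most $k$, for any \textsf{MSO}-sentence $\Phi$ one has
\begin{align*}
\big(\exists G\in\cG_F\colon G\models\Phi\big)\iff\big(\exists\text{ finite }G\colon \mathrm{tw}(G)\leq k\ \wedge\ G\models\psi_F\wedge\Phi\big),
\end{align*}
where the left-to-right direction uses \prref{lem:btw} and the right-to-left direction uses that $G\models\psi_F$ forces $G\in\cG_F$. The right-hand side is decidable by the cited black box applied to the \textsf{MSO}-sentence $\psi_F\wedge\Phi$ and the constant $k=|V_F|-1$. Here $\Phi$ is evaluated over a finite graph together with the linear order on its vertices coming from the short-lex normal form $\gam(G)$, exactly as in the ``number of vertices is even'' discussion of \prref{sec:GP}, so no extra bookkeeping is needed; and if one prefers a language-level statement, one first passes to a regular $R$ with $\rho(R)=\cG_F$ via \prref{thm:allreg} and the Parikh analysis of \prref{sec:newalp}, but the argument above is self-contained.

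I expect the main obstacle to be the middle step: one must pin down precisely the structure of the graphs in $\cG_F$ under the hypothesis on $F$ — which is exactly what makes \prref{lem:btw} and its proof the crucial input — and then verify that the short, concrete sentence $\psi_F$ indeed defines this class of abstract graphs (new vertices carrying no edges among themselves and only the prescribed edges to the core $F$). Once $\cG_F$ is identified as a first-order-definable subclass of the graphs of tree-width at most $|V_F|-1$, the result is immediate from the known decidability of \textsf{MSO} over bounded-tree-width graphs.
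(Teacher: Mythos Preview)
Your proposal is correct and follows the same route as the paper—show that $\cG_F$ has bounded tree-width via \prref{lem:btw} and invoke the MSO black box—but the paper's two-line proof stops there, leaving implicit how one restricts from ``all graphs of tree-width $\leq k$'' to the subclass $\cG_F$. You make this step explicit by constructing the first-order sentence $\psi_F$ axiomatising $\cG_F$ and then applying the black box to $\psi_F\wedge\Phi$; your structural characterisation (an induced copy of $F$ plus an independent set, each new vertex with neighbourhood inside some $N^*_F(w)$) is precisely what the proof of \prref{lem:btw} yields under the hypothesis, and your $\psi_F$ captures it correctly—including the corner case of no marked vertices, where the empty disjunction forces $G=(V_F,E_F)$.
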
  
\begin{proof} 
  The family $\cG_F$ yields a family of graphs of bounded tree-width. Indeed, the bag size is uniformly bounded by $|V_F|$. The result follows from the papers cited above.  
 \end{proof} 
The next theorem shows in particular that the FO-theory is undecidable if there is an edge where both endpoints are marked, using Trakhtenbrot's Theorem. 
 \begin{theorem}\label{thm:bepm}  
  Let  $F$ be a marked graph
where both endpoints of some marked edge are marked. Then, $\sat_F(\cdot)$ is undecidable. 
  \end{theorem}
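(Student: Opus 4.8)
The plan is to reduce the (undecidable) finite satisfiability problem for first-order sentences over the class $\cB_2$ of \prref{rem:tra50} to $\sat_F(\cdot)$. The only structural input about $F$ that I would use is \prref{thm:Fiffi}: since $F$ contains a marked edge $(u,v)$ both of whose endpoints are marked, for every connected finite bipartite graph $H$ with at least two vertices --- hence an edge-graph --- the disjoint union $F\sqcup H$ belongs to $\cG_F$. Concretely, the vertices $\set{(u,k)}{k\geq 1}$ and $\set{(v,\ell)}{\ell\geq 1}$ span an infinite complete bipartite subgraph of $F^\infty$, or an infinite complete graph if $u=v$, and this subgraph is vertex-disjoint from the embedded copy $V_F\times\os 0$ of $F$; one keeps a finite piece of it, deletes edges so that it becomes exactly $H$, and also deletes every edge running back to $V_F\times\os 0$. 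This is legitimate because, by \prref{def:Falpinf}, $\cG_F$ consists of arbitrary --- not necessarily induced --- subgraphs of $F^\infty$ that contain $(V_F\times\os 0,E_F\times\os 0)$ as an \emph{induced} subgraph.

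Given a first-order sentence $\Phi$, I would then construct a graph property $\Psi_\Phi\sse\G$ by putting $w\in\Psi_\Phi$ exactly when the graph $\rho(w)$, regarded as an undirected graph without self-loops, has a connected component $C$ that is bipartite, has at least two vertices, and satisfies $\Phi$. The set $\Psi_\Phi$ is a decidable subset of $\G$: from $w$ one computes $\rho(w)$, lists its connected components, and tests each for size, for bipartiteness, and --- by finite model checking --- for $\Phi$; moreover this decider is obtained effectively from $\Phi$, so $\Phi\mapsto\Psi_\Phi$ is computable.

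It remains to check the equivalence: there is some $G\in\cG_F$ with $G\models\Psi_\Phi$ \IFF some $H\in\cB_2$ satisfies $\Phi$. For ``$\Leftarrow$'', if $H\in\cB_2$ and $H\models\Phi$, then $F\sqcup H\in\cG_F$ by the first paragraph, and the component $H$ of $F\sqcup H$ witnesses $F\sqcup H\models\Psi_\Phi$. For ``$\Rightarrow$'', if $G\in\cG_F$ satisfies $\Psi_\Phi$, then the witnessing component $C$ of $G$ is itself a connected finite bipartite graph with at least two vertices, so $C\in\cB_2$, and $C\models\Phi$. Combining this equivalence with the computability of $\Phi\mapsto\Psi_\Phi$, a decision procedure for $\sat_F(\cdot)$ would yield one for the problem of \prref{rem:tra50}, which is impossible; hence $\sat_F(\cdot)$ is undecidable.

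The main (in fact the only) delicate point is the first paragraph: one must make sure that a \emph{genuine} disjoint union $F\sqcup H$, rather than $F$ with $H$ attached through stray edges, is available in $\cG_F$. This rests exactly on two facts --- $\cG_F$ is closed under deletion of edges, and the two marked endpoints of the marked edge span a part of $F^\infty$ that is vertex-disjoint from the embedded copy of $F$. Once that is pinned down, the remainder is a routine many-one reduction from Trakhtenbrot's theorem.
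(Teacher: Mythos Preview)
Your argument is correct. It differs from the paper's in one substantive way: the paper insists on producing a \emph{first-order} sentence $\phi_F(\Phi)$ from each $\Phi$, thereby proving the sharper fact that $\sat_F(\cdot)$ is already undecidable when the input property is restricted to be FO-definable. To stay inside FO, the paper cannot simply say ``some connected component is bipartite and models $\Phi$'' (connectivity and bipartiteness are not FO-expressible), so instead it fixes a copy of $F$ by $n=|V_F|$ existential quantifiers and uses the size threshold $n+1$ to separate the ``extra'' part from $F$; this is why it appeals to $\cB_{n+1}$ rather than $\cB_2$. Your route sidesteps all of that by allowing $\Psi_\Phi$ to be an arbitrary decidable graph property: you compute the components and test bipartiteness algorithmically, so $\cB_2$ suffices and the reduction becomes a clean two-line equivalence. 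The trade-off is that you prove exactly the theorem as stated, whereas the paper's construction implicitly yields the stronger FO version.
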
  
 
 \begin{proof} 
Let $G$ be any finite  connected bipartite graph with at least $n+1$ vertices. 
According to \prref{thm:Fiffi}, the family $\cG_F$ contains a graph $G'$ which is the disjoint union of $G$ and $F$. 
Let $\Psi$ be a first-order sentence which expresses that $F$ appears  and that 
every vertex outside $F$ is part of a connected component which has more vertices than $F$. 
As $F$ is fixed, $\Psi$ is also of constant size, with $\Oh(n)$ many vertex variables. \Ip we can assume that $\Psi$ is of the form $\exists x_1,\dots,x_{n}\Psi'$, this way fixing the vertices of $F$ in $G'$.
Moreover, on input $\Phi\in \FO$ we can construct another FO-sentence 
$\Phi'$ such that $G\models \Phi\iff G'\models \exists x_1,\dots,x_{n}( \Psi'\wedge \Phi')$. Namely, $\Phi'$ speaks about the graph $G'$, disregarding the vertices of $F$.
Let us denote
$\exists x_1,\dots,x_{n}( \Psi'\wedge \Phi')$ by $\phi_F(\Phi)$. Then $\FO'=\set{\phi_F(\Phi)}{\Phi \in \FO}$ is a subset of $\FO$. 
This renders the  satisfiability problem $\set{\phi(\Phi) \in \FO'}{\exists G'\in\cG_F:  G'\models \phi_F(\Phi)}$ to be undecidable by Trakhtenbrot's Theorem as stated in \prref{rem:tra50}. 
Indeed, if it was decidable, then we could decide whether $G\models \Phi$ for any $G\in \cB_{n+1}$.
  \end{proof}

 Some graph properties where the %satisfiability 
problem $\sat(\cG_F,\Phi)$ is trivially decidable are covered by the next theorem, including the problem whether $\cG_F$ contains a  non-planar graph, and various parametrized problems like: ``Is there some $(V_G,E_G)\in \cG_F$ with a clique bigger than $\sqrt{|V_G|}$?''.\footnote{Questions like this one save us from discussing encodings of numbers as a second input parameter. }   
 
 \begin{theorem}\label{thm:btw} 
  Let $F$ be any marked graph and $\Phi$ be a non-trivial graph property  
  such 
  that $G\models \Phi$ \IFF there is  
  a connected component $G'$ of $G$ such that $G'\models\Phi$. 
  Then, the answer to the satisfiability problem 
$\sat(\cG_F,\Phi)$ is always ``Yes'' in the following two cases.    \begin{enumerate} 
   \item There is  some marked self-loop.
   \item The property $\Phi$ is true for some bipartite  
    edge-graph and there is some marked edge where both endpoints are marked. 
  \end{enumerate} 
 \end{theorem}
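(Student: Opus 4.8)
The plan is to read this off directly from \prref{thm:Fiffi}, whose items (3) and (4) already exhibit the relevant graphs inside $\cG_F$; almost all the remaining work is in setting up a \emph{connected} witness for $\Phi$. First I would record the following consequence of the component-wise hypothesis on $\Phi$: the empty graph has no connected component, so it can never satisfy $\Phi$; and if any finite graph satisfies $\Phi$, then one of its connected components does, so $\Phi$ is then satisfied by a nonempty \emph{connected} finite graph. Since $\Phi$ is non-trivial, I fix once and for all such a connected witness $H_0$ with $H_0\models\Phi$.

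For case~(1), assume $F$ carries a marked self-loop. By \prref{thm:Fiffi}(4) the disjoint union $F\cupi H_0$ belongs to $\cG_F$. Its connected components are those of $F$ together with $H_0$, and $H_0\models\Phi$; so by the component-wise property $F\cupi H_0\models\Phi$, and the answer to $\sat(\cG_F,\Phi)$ is ``Yes''. For case~(2), $\Phi$ holds for some bipartite edge-graph; by the observation above I may replace that witness by one of its connected components, which is still bipartite and, being a connected graph with an edge, has no isolated vertex, hence is itself a (connected) bipartite edge-graph. Call it $H$. Now \prref{thm:Fiffi}(3) places $F\cupi H$ into $\cG_F$, and exactly as in case~(1) the component $H$ forces $F\cupi H\models\Phi$, so again the answer is ``Yes''.

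I expect no genuine obstacle: the structural content is entirely in \prref{thm:Fiffi}. The two points that need care are (i) the reduction of an arbitrary $\Phi$-witness to a \emph{connected} one using the component-wise hypothesis — this is precisely what lets me feed a single connected graph into \prref{thm:Fiffi}(3) rather than the possibly disconnected original witness — and (ii) the small check that a connected component of an edge-graph is again an edge-graph. For consistency with \prref{sec:GP}, I would also note that $\Phi$ is understood to speak about the undirected, self-loop-free graph associated with $\rho(w)$; this is harmless, because the component $H_0$ (resp.\ $H$) is disjoint from $F$, so the marked self-loop or doubly-marked edge of $F$ that was used to place $F\cupi H_0$ (resp.\ $F\cupi H$) in $\cG_F$ does not touch the component that realizes~$\Phi$.
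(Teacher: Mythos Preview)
Your proof is correct and follows essentially the same route as the paper: both arguments invoke \prref{thm:Fiffi}(3) and~(4) to place a disjoint union $F\cupi G$ into $\cG_F$, then use the component-wise hypothesis on $\Phi$ to conclude. The paper's version is slightly terser in that it does not first reduce the witness to a single connected component, but this is only a cosmetic difference.
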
 

 \begin{proof} 
  Since $\Phi$ is non-trivial, there is some finite graph $G$ modeling~$\Phi$. If there  is  some marked self-loop, then there is some $K\in \cG_F$ such that $K$ is a disjoint union of $F$ and $G$ by \prref{thm:Fiffi}. In the second case,  we can choose $G$ to be a bipartite edge-graph. 
  By assumption, $F$ contains a marked edge where both endpoints are marked. Again, there is some $K\in \cG_{F}$
such that $K$ is a disjoint union of $F$ and $G$, so that we can apply \prref{thm:Fiffi}.
 \end{proof} 

\prref{ex:HC} lists a few graph properties which are not covered by \prref{thm:btw}, but nevertheless the satisfiability problem is decidable.   
Recall that graph $G=(V_G,E_G)$ possesses a Hamiltonian cycle if the cycle on $|V_G|$ vertices is a subgraph of~$G$. 
A matching is a collection of edges of a graph such  that no pair of these has any common vertices. A matching is perfect if it contains $|V_G|/2$ many edges. A set $T$ of vertices of a graph $G=(V_G,E_G)$ is called a dominating set\footnote{The notation $T$ refers to the German notion \emph{Träger}.}
% fundamental work of D{\'e}nes K{\H o}nig (1884--1944).  
if each vertex $u\in V_G$ has a vertex of $T$ in its \emph{closed neighborhood} $N_G[u] = \os{u} \cup \set{v}{(u,v)\in E}$. 
A set $D$ of vertices of a graph $G=(V_G,E_G)$ is called a
a \emph{defensive alliance} in \cite{KriHedHed2004} if it is non-empty and  each vertex $u\in D$ has at least half of its closed neighborhood within $D$.
 
 \begin{example}\label{ex:HC} 
  Let $F=(V_F,E_F,\mu)$ denote a marked graph as input. 
  Then the following problems are decidable.  
  \begin{enumerate} 
   \item  
   Is there some $G\in \cG_F$ with a Hamiltonian cycle? 
   \item Is there some $G\in \cG_F$ with a perfect matching? 
   \item Is there some $G\in \cG_F$ with a dominating set of size at most $\log_2 |V_G|$? 
     \item Is there some $G\in \cG_F$ with a defensive alliance of size at most $\log_2 |V_G|$?
  \end{enumerate} 
We explain these concrete examples one by one. 

\smallskip  
\noindent
\textbf{Hamiltonian cycle.} In order to decide the existence of some $G\in \cG_F$ with a Hamiltonian cycle, we proceed as follows. Without restriction, we may assume that $F$ has no Hamiltonian cycle, because otherwise we are done. If there is any $G=(V_G,E_G)\in \cG_F$ with a Hamiltonian cycle $Z=(V_G,Z_G)$, then starting at any fixed vertex of~$F$, the cycle yields a linear order on the vertices in $G$ and, by restriction, a linear order on~$V_F$.  Since $F$ is without any Hamiltonian cycle, the cycle leaves  
  $F$ at some vertex $u_1\in V_F$ and reenters $F$ at some vertex $v_1\in V_F$. Continuing this way, we obtain a sequence of pairs  
  $(u_1,v_1)\lds (u_k,v_k)$ with $1\leq k \leq |V_F|$ before the cycle is closed. Let us look at the directed path $u_i=w_0,w_1\lds w_\ell=v_i$ on the cycle starting at some $u_i$ and ending in $v_i$ for some pair $(u_i,v_i)$ with $1\leq i \leq k$. Suppose that  
  $w_r=ab^ca$ and $w_s=ab^da$ for $1\leq r<s <\ell$ with $t\leq c<d$ and $[b^c]=[b^d]$. Then we can modify the graph $G$ as follows: we remove all vertices $w_{r+1}\lds w_s$ from $G$, and we introduce   
  an edge $(w_{r}, w_{s+1})$. In this way, we obtain a smaller graph  
  $G'\in \cG_F$ which still has a Hamiltonian cycle. 
  Thus, if $\cG_F$ contains any graph with a Hamiltonian cycle, then $\cG_F$ contains a graph with at most $p|V_F|$ vertices.  
  Hence, it is enough to enumerate all graphs that have at most $p|V_F|$ vertices which have a Hamiltonian cycle and to check if any of them appears in~$\cG_F$.

\smallskip  
\noindent
  \textbf{Perfect matching.} Let $V_F=\os{x_1\lds x_n}$ and suppose that some  $G=(V_G,E_G)\in \cG_F$  has a perfect matching. We have $V_F\sse V_G$. Hence,  all  
  $x_i\in V_F$ are matched by vertices $y_i\in V_G$.  
  The induced subgraph $G[V_F\cup \os{y_1\lds y_n}]$ has a perfect matching with at most $2|V_F|$ vertices. As in the precedent example, we enumerate and check all these graphs.

\smallskip  
\noindent  \textbf{Dominating set.} 
If there is no marked edge in $F=(V_F,E_F,\mu)$, then decide whether a dominating set with the desired property exists in $(V_F,E_F)$. In the second case, there is a marked vertex which is is an endpoint of a marked edge. Then, $\cG_F$ contains a graph $G$ which is the (not disjoint) union of $F$ and an arbitrarily large star.
The intersection of $F$ and the star is just one point. Thus, we find a graph $G$ in $\cG_F$ 
where $G$ has a dominating set of size $D(G)$ such that $D(G)\leq \log_2|V_G|$. Actually, for every $\eps>0$, there is some  $G\in \cG_F$
such that $D(G)/|V_G|<\eps$.
Thus, in the second case, we return ``Yes.''

\smallskip  
\noindent  \textbf{Defensive alliance.} 
If $F=(V_F,E_F,\mu)$ contains no marked vertex at all, then $\cG_F=\{(V_F,E_F)\}$, so we have to check if $(V_F,E_F)$ contains a sufficiently small defensive alliance. Otherwise, we return ``Yes.'' Namely, in this case $\cG_F$ contains a graph $G$ that consists of $(V_F,E_F)$ plus $2^{|V_F|}$ many isolated vertices. Now,
one of these isolated vertices forms a sufficiently small defensive alliance by itself. \qed
 \end{example}

Frequently, we are not only interested in decision problems, but in computational problems. We illustrate this by computing the supremum 
(in $\N_\infty$) of the chromatic numbers over all the graphs in $\cG_F$.
Recall that a graph $G=(V,E)$ is \emph{$k$-colorable} if there is a function $c:V\to \os{1\lds k}$ such that $(u,v)\in E\sm \id_V$ implies $c(u)\neq c(v)$. Indeed, self-loops should not have any influence on the chromatic number, because otherwise a graph with a self-loop could not be colored at all. For a finite graph, its  \emph{chromatic number} $\chi(G)$ is the minimal possible $k\in\N$ such that $G$ is $k$-colorable.

\begin{proposition}\label{prop:chrom}
Let $F=(V_F,E_F,\mu)$ be a marked graph. Then, 
$\sup\set{\chi(G)}{G\in \cG_F}=\infty$ \IFF $F$ contains a marked self-loop. 
If $F$ is without any marked self-loop, then $\chi(V_F,E_F)= \max\set{\chi(G)}{G\in \cG_F}$ is a natural number.\\
Moreover, if $L\sse \G$ is regular, then %the supremum     
$\sup\set{\chi(G)}{G\in \rho(L)}$ is effectively computable.\end{proposition}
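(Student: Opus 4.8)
The plan is to reduce everything to the family $\cG_F$ attached to a single marked graph $F=(V_F,E_F,\mu)$ and then read off the chromatic behaviour from the shape of $F^\infty$. Recall from \prref{def:Falpinf} that the vertices of $F^\infty$ are a single copy $(u,0)$ of every vertex $u$ together with a whole fibre $\set{(u,k)}{k\in\N}$ over each \emph{marked} vertex, and that $\cG_F$ is the set of finite subgraphs of $F^\infty$ in which $(V_F,E_F)$ (realized as $F^\infty[V_F\times\os 0]$) sits as an induced subgraph. The structural heart of the argument is the dichotomy: $\sup\set{\chi(G)}{G\in\cG_F}$ equals $\chi(V_F,E_F)<\infty$ when $F$ has no marked self-loop, and equals $\infty$ otherwise.

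For the ``no marked self-loop'' case I would take a proper colouring $c\colon V_F\to\os{1\lds k}$ with $k=\chi(V_F,E_F)$ (self-loops disregarded, as stipulated in the paper) and ``blow it up'' along fibres by $\tilde c(u,j)=c(u)$. The key observation is that, by the definition of $E_F^\infty$, every non-loop edge $((u,j),(v,\ell))$ of $F^\infty$ projects to an edge $(u,v)\in E_F$ with $u\neq v$: either $j=\ell=0$, or $(u,v)$ is a marked edge; and since $F$ has no marked self-loop, a marked edge cannot be a loop. Hence $\tilde c(u,j)=c(u)\neq c(v)=\tilde c(v,\ell)$, so $\tilde c$ properly colours $F^\infty$, and therefore every $G\in\cG_F$, with $k$ colours; as $(V_F,E_F)$ itself lies in $\cG_F$ and $\chi$ is monotone under subgraphs, this bound is attained. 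For the other case, if $F$ carries a marked self-loop, I would invoke \prref{thm:Fiffi}: the disjoint union of $F$ with $K_n$ belongs to $\cG_F$ for every $n$, and has chromatic number at least $n$, so the supremum is $\infty$. (Equivalently, the marked self-loop makes the corresponding fibre an infinite clique inside $F^\infty$.) This proves the first two sentences of the proposition.

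For the ``moreover'' part, given a regular $L\sse\G$ I would apply the construction of \prref{sec:marked} (the same reduction used in the proof of \prref{cor:4cls}) to compute effectively a finite family $\cF=\os{F_1\lds F_r}$ of marked graphs with $\rho(L)=\bigcup_{i=1}^{r}\cG_{F_i}$ up to isomorphism. Then $\sup\set{\chi(G)}{G\in\rho(L)}=\max_i\sup\set{\chi(G)}{G\in\cG_{F_i}}$; by the dichotomy this is $\infty$ if some $F_i$ has a marked self-loop, and otherwise equals $\max_i\chi(V_{F_i},E_{F_i})$, which one computes by brute force on finitely many finite graphs. Detecting a marked self-loop in each $F_i$ is immediate, so the procedure is effective.

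I expect the only delicate points to be bookkeeping with the paper's conventions rather than real mathematical difficulty: one must keep self-loops excluded when colouring (so an unmarked self-loop of $F$, which spawns no fibre and does not change $\chi(V_F,E_F)$, causes no trouble), and one must note that $F^\infty$ has a self-loop incident to some fibre copy $(u,j)$ with $j\ge1$ precisely when $F$ has a marked self-loop --- which is exactly the case split driving the dichotomy. The one external input is the effective decomposition $\rho(L)=\bigcup_i\cG_{F_i}$, but that is already available from \prref{sec:marked}.
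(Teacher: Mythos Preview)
Your proof is correct and follows essentially the same approach as the paper: the paper phrases the upper bound via the retraction $\phi\colon G\to(V_F,E_F)$ and the observation that fibers carry no non-loop edges (hence a colouring of $F$ pulls back), which is exactly your ``blow up along fibres'' argument, and it handles the marked-self-loop case via arbitrarily large cliques just as you do. You are actually more explicit than the paper about the ``moreover'' clause, spelling out the decomposition $\rho(L)=\bigcup_i\cG_{F_i}$ and the resulting $\max$, which the paper leaves implicit.
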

\begin{proof}Since $F\in \cG_F$, we have $\chi(V_F,E_F)\leq \sup\set{\chi(G)}{G\in \cG_F}$.
If $F=(V_F,E_F,\mu)$ has a marked self-loop, then $\cG_F$ contains 
for each $k\in \N$ a graph having a clique of size $k$ as a subgraph. Hence, $\sup\set{\chi(G)}{G\in \cG_F}=\infty$. 
Therefore, for the rest of the proof, we may assume that 
$F=(V_F,E_F,\mu)$ has no marked self-loops.
For every $G\in \cG_F$, there is a graph morphism $\phi:G\to (V_F,E_F)$. If $F=(V_F,E_F,\mu)$ has no self-loops at all, then 
we have $\chi(G)\leq \chi(F)$, because every fiber $\oi \phi(v)$ is without any edge for $v\in V_F$. Thus, a $k$-coloring of $(V_F,E_F)$ induces a $k$-coloring of $G$. If there is a self-loop around a vertex $v$, then the loop is not marked by assumption. This loop is the only edge in $\oi \phi(v)$, but, by definition, this has no influence on the chromatic number. 
\end{proof}

Together with the results above, we have a meta-theorem for graph properties $\Phi$ with a decidable satisfiability problem, covering all cases where we have positive results.

\begin{theorem}\label{thm:meta}
Let $r:\N\to \N$ be  a non-decreasing computable function and 
$\Phi$ be a graph property such that, for each marked graph $F$, %$F=(V_F,E_F,\mu)$
the following property holds. If some graph $G=(V,E)\in \cG_F$ satisfies $\Phi$, then there is some graph $G=(V,E)\in \cG_F$ such that $G$ satisfies~$\Phi$ and $G$ has at most $r(|V_F|)$ vertices. Then, given as input a \cfree grammar for a language $L\sse \G$ satisfying the \btptorp, the following satisfiability problem
\begin{align*}
\sat(\rho(L),\Phi)= \text{``}\exists G\in \rho(L):\, G\models \Phi\text{?''}
\end{align*}
is decidable. 
\end{theorem}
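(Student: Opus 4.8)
The plan is to turn \prref{thm:meta} into a finite collection of bounded searches, using the structure theory already developed. First, since $L$ is context-free and satisfies the \btptorp, I would apply \prref{cor:Liscf}: from the given grammar one effectively computes a regular set $R\sse\G$ (say, an NFA) with $\rho(R)=\rho(L)$, along with a threshold $t$ and period $p$ witnessing the \btptorp for $R$. After this first step we may work entirely with the regular set $R$, so all later constructions are available.

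Next I would run the constructions of \prref{sec:newalp} and \prref{sec:marked} verbatim on $R$: compute the semi-linear set $\pi_C(L')$ for $L'=\oi h_C(R)\cap A^*B^*$ (effective by Parikh's Theorem, \prref{prop:gs66parikh66}, and by effectiveness of semi-linear representations), split it into finitely many linear pieces $q+\sum_{i\in I}\N p_i$, and for each piece perform the further splittings that lead to \prref{def:Falpinf}. The outcome is a finite family $\cF$ of marked graphs $F=(V_F,E_F,\mu)$, computable from $R$, with
\[
\rho(L)=\rho(R)=\bigcup_{F\in\cF}\cG_F .
\]
Every step here is effective, since it is either Parikh-image computation and Boolean operations on semi-linear sets, or a purely syntactic manipulation of NFAs and of words over the alphabets $C_\ell$.

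Then I would decide $\sat(\rho(L),\Phi)$ by a finite case analysis over $F\in\cF$. For a fixed $F=(V_F,E_F,\mu)$ put $N_F=r(|V_F|)$; this is computable because $r$ is. By the hypothesis on $\Phi$, there is a $\Phi$-satisfying graph in $\cG_F$ \IFF there is one with at most $N_F$ vertices. Hence it suffices to enumerate, up to isomorphism, all finite graphs $G$ with $|V_G|\le N_F$ and test the two conditions ``$G\models\Phi$'' and ``$G\in\cG_F$''. The first test is decidable since $\Phi\sse\G$ is decidable and $G\models\Phi$ means $\gam(G)\in\Phi$. The second is decidable as well: by \prref{def:Falpinf}, $G\in\cG_F$ iff $G$ is isomorphic to a subgraph of $F^\infty$ that contains $(V_F\times\os 0,\,E_F\times\os 0)$ as an induced subgraph; because $F^\infty$ has a fully explicit structure and $|V_G|\le N_F$, up to isomorphism it is enough to inspect subgraphs supported on $V_F\times\os 0$ together with vertices $(u,k)$ with $u\in\mu\cap V_F$ and $1\le k\le N_F$, so the test reduces to finitely many graph-isomorphism checks. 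The algorithm answers ``Yes'' as soon as some $F\in\cF$ and some $G$ with $|V_G|\le N_F$ pass both tests, and ``No'' otherwise. Correctness is immediate from the displayed decomposition together with the bounded-model hypothesis on $\Phi$.

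I expect essentially all the content to sit in the earlier sections, not in this argument: the one nontrivial ingredient is the effectiveness of the decomposition $\rho(R)=\bigcup_{F\in\cF}\cG_F$, which is exactly what \prref{thm:allreg}, \prref{thm:RL}, and the marked-graph construction of \prref{sec:marked} provide (ultimately resting on Dickson's Lemma and Parikh's Theorem). Given that machinery, \prref{thm:meta} is a packaging of those results with the bounded-model hypothesis, so the only place where care is needed is checking that the constructions of \prref{sec:newalp} and \prref{sec:marked} are genuinely algorithmic when the starting point is merely an NFA for $R$ — which they are.
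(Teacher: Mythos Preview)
Your proposal is correct and follows essentially the same route as the paper: pass from $L$ to a regular $R$ via \prref{cor:Liscf}, split $\rho(R)$ into finitely many families $\cG_F$ via the constructions of \prref{sec:newalp} and \prref{sec:marked}, and for each $F$ enumerate graphs of size at most $r(|V_F|)$ and test both $\Phi$ and membership in $\cG_F$. The only cosmetic difference is that the paper invokes \prref{cor:allreg} for the membership test (viewing $\cG_F$ as $\rho$ of a regular language), whereas you test membership directly via the explicit description of $F^\infty$ in \prref{def:Falpinf}; both work.
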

\begin{proof}
Since $L$ is \cfree satisfying the \btptorp, we find a regular language $R$ such that 
$\rho(R)=\rho(L)$. Splitting $R$ into finitely many cases, we are reduced to show the claim when the input is a single marked graph
$F=(V_F,E_F,\mu)$. Taking $F$ as input, we compute $n=r(|V_F|)$ and we compute the list of all graphs with at most $n$ vertices. Then, we check whether any graph in that list belongs to the family $\cG_F$ and satisfies~$\Phi$, which is possible thanks to \prref{cor:allreg} (and as $\Phi$ is decidable). 
\end{proof}

\section{Conclusion and open problems}\label{sec:conc}
The starting point of our paper was the following idea: Decide a graph property $\Phi$ not for a single instance as in traditional algorithmic graph theory, but generalize this question to a set of graphs specified by a regular language over a binary alphabet. ``Let's talk about a regular family of graphs\footnote{A first song about a remotely similar theme was released in 1991 by Salt ’n’ Pepa.}, reader''.
We chose a natural representation of graphs by words over a binary alphabet $\Sigma$. Our results are rather robust, other ``natural choices'' work as well. Next, pick your favorite graph property~$\Phi$. For example, $\Phi$ says that 
the number of vertices is a prime number. The property does not look very regular, there is no way to express the property, say, in MSO. Still, given a \cfree language $L\sse \Sigma^*$  which satisfies the \btorp and which encodes sets of graphs, we can answer the question if there exists a graph represented by $L$ and which satisfies~$\Phi$.
This is a consequence of \prref{thm:meta} and Bertrand's postulate that for all $n\geq 1$, there is a prime between~$n$ and~$2n$.  

Still, various problems are open. For example, is the satisfiability problem decidable for graph properties which are not covered by \prref{thm:meta}? This could mean that on input of a marked graph $(F,\mu)$, we can say ``YES, there is such a graph in $\cG_F$'' without producing a witness graph in $\cG_F$ for this claim.

Another type of problems relates to model checking. Given a graph 
property $\Phi$, we can define $\cG(\Phi) = \set{G \text{ is a finite graph}}{G\models \Phi}$. Suppose that $\oi\rho (\cG(\Phi))$ is regular. Given a regular language $R\sse \Sigma^*$, can we decide 
whether $\cG(\Phi) \sse \rho(R)$? What about the equality $\cG(\Phi) = \rho(R)$? We can ask the same two questions if $R$ is \cfree.

Another area which we did not touch at all concerns complexity. We can state however an $\NP$ lower bound for $\NP$-hard 
 graph  properties. 
Observe that our encoding of graphs by words is essentially optimal if we  write exponents $i$ which appear in factors $ab^i a$ in binary. We let $\absbin{ab^i a}= 2+\log_2(i)$, and this induces
a binary length $\absbin{w}$ for $w\in \G$ and also a natural binary length $\absbin{F}$ for marked graphs $F=(V_F,E_F,\mu)$. If  $\Phi$ denotes an $\NP$-hard graph property, then the problem $\set{F=(V_F,E_F,\mu)}{\exists G\in \cG_F: \, G\models \Phi}$ (with binary input size for $F$) is $\NP$-hard. 
It is however not clear that it can be solved within $\NP$ assuming that $\Phi$ is in $\NP$.

\subsection*{Acknowledgement}
We thank Dietrich Kuske for pointing out that the formulation of
\cite[Thm.~3]{DiekertFW_DLT21} is not correct as published in the proceedings. The correct statement is now \prref{thm:bepm}.
We also thank the anonymous referees of DLT'21 for various suggestions to improve the presentation.

%\bibliography{../TRACES/traces}%,externals/TRACES/traces} %short
\newcommand{\Ju}{Ju}\newcommand{\Ph}{Ph}\newcommand{\Th}{Th}\newcommand{\Ch}{Ch}\newcommand{\Yu}{Yu}\newcommand{\Zh}{Zh}\newcommand{\St}{St}\newcommand{\curlybraces}[1]{\{#1\}}

\end{document}